\theoremstyle{plain}
\newtheorem{theorem}{Theorem}[section]
\newtheorem{lemma}[theorem]{Lemma}
\newtheorem{proposition}[theorem]{Proposition}
\theoremstyle{definition}
\newtheorem{remark}[theorem]{Remark}
\numberwithin{equation}{section}
\newcommand{\volg}{1+ u^{2}| \nabla f |_{g}^{2}} %volume form of \bg%
\newcommand{\barna}{\overline{\nabla}} %barred nabla - covariant derivative w.r.t. barred g%
\begin{document}

\title[Transformations of Asymptotically AdS Hyperbolic Initial Data] {Transformations of Asymptotically AdS Hyperbolic Initial Data and Associated Geometric Inequalities}

\author[Cha]{Ye Sle Cha}
\address{Institut f\"ur Mathematik \\
 Freie Universit\"at Berlin \\
 14195 Berlin, Germany}
\email{ycha@zedat.fu-berlin.de}

\author[Khuri]{Marcus Khuri}
\address{Department of Mathematics\\
Stony Brook University\\
Stony Brook, NY 11794, USA}
\email{khuri@math.sunysb.edu}

\thanks{Y. Cha acknowledges the support of
Dahlem Research School Fellowship from Free University of Berlin. M. Khuri acknowledges the support of NSF Grants DMS-1308753 and DMS-1708798.}

\begin{abstract}
We construct transformations which take asymptotically AdS hyperbolic initial data into asymptotically flat initial data, and which preserve relevant physical quantities. This is used to derive geometric inequalities in the asymptotically AdS hyperbolic setting from counterparts in the asymptotically flat realm,
whenever a geometrically motivated system of elliptic equations admits a solution. The inequalities treated here relate mass, angular momentum, charge, and horizon area.
\end{abstract}
\maketitle

\section{Introduction}
\label{sec1} \setcounter{equation}{0}
\setcounter{section}{1}

Geometric inequalities relating total mass, angular momentum, charge, and horizon area have been studied extensively in the setting of asymptotically flat initial data for the Einstein-Maxwell equations. These inequalities may be stated without alteration for asymptotically hyperboloidal slices arising from asymptotically flat spacetimes. In \cite{ChaKhuriSakovich} we proposed deformations of asymptotically hyperboloidal initial data which transformed the asymptotically hyperbolic structure into an asymptotically flat structure, while preserving the mass up to scaling by a positive constant. These deformations are based on solutions of the so called Jang-type equations \cite{BrayKhuri1,BrayKhuri2,ChaKhuri1,ChaKhuri2,Jang,SchoenYau}, which impart positivity properties to the scalar curvature of the deformed data. In effect, the full round of geometric inequalities in the asymptotically hyperboloidal setting is reduced to their counterparts in the asymptotically flat setting modulo the solution of canonical system of partial differential equations.

Asymptotically hyperbolic data also arise naturally in the context of spacelike slices of asymptotically anti-de Sitter (AdS) spacetimes. In this realm, very little is known about the geometric inequalities discussed above, and in some cases the optimal statement of the inequalities has not been discussed in the literature. Unlike in the asymptotically flat and asymptotically hyperboloidal cases, where Penrose's heuristic arguments \cite{Penrose} based on cosmic censorship and the final state conjecture are valid, the physical motivation for the AdS inequalities is weak as a consequence of the likely instability of the AdS spacetime \cite{BizonRostworowski,Moschidis} and its black hole configurations. Furthermore, the inequalities which have been proven in the AdS setting have essentially all been established via spinorial techniques, modeled on Witten's proof of the positive mass theorem \cite{Witten}. In this article an alternate approach will be provided. Namely, with the goal of generalizing \cite{ChaKhuriSakovich} to the AdS setting, we construct deformation procedures which reduce a range of geometric inequalities to solving a canonical system of equations. The strategy is to transform the asymptotically AdS structure to an asymptotically flat structure, while preserving relevant quantities and achieving nonnegative scalar curvature at least weakly. One may then derive results in the AdS context from those in the asymptotically flat realm. Some of the inequalities that naturally arise from this method are new to the AdS setting, and for some values of the angular momentum and charge are stronger than those previously considered.
We also formulate the full set of optimal inequalities in the AdS context, filling a gap in the literature.

\section{Notation and Definitions}
\label{sec2} \setcounter{equation}{0}
\setcounter{section}{2}

Consider an initial data set $(M,g,k)$ for the Einstein equations modeling an asymptotically totally geodesic, spacelike hypersurface, in an asymptotically AdS spacetime. This consists of a Riemannian 3-manifold $M$ with asymptotically hyperbolic metric $g$, and asymptotically vanishing extrinsic curvature $k$. We define such an initial data set to be asymptotically AdS hyperbolic if it possesses an end which is diffeomorphic to $S^{2}\times [r_{0},\infty)$, and in this region there are coordinates such that
\begin{equation}\label{1}
g=g_{0}+a,
\end{equation}
where $g_{0}=\frac{dr^{2}}{1+r^{2}}+r^{2}\sigma$ is the hyperbolic metric with $\sigma$ the round metric on $S^2$, and
\begin{equation}\label{2}
a_{rr}= \frac{\mathbf{m}^{r}}{r^{5}}+O_{3}(r^{-6}), \text{ }\text{ }\text{ }\text{ }\text{ }\text{ }
a_{r\alpha}=O_{3}(r^{-3}), \text{ }\text{ }\text{ }\text{ }\text{ }\text{ } a_{\alpha\beta}=\frac{\mathbf{m}^{g}_{\alpha\beta}}{r}+ O_{3}(r^{-2}),
\end{equation}
%\begin{equation}\label{2.1}
%\widetilde{a}_{\alpha\beta},\widetilde{b}_{\alpha\beta}=O(r^{-2}),\text{ }\text{ }\text{ %}\text{ }\text{ }\partial_{\gamma}\widetilde{a}_{\alpha\beta}=O(r^{-2}),\text{ }\text{ %}\text{ }\text{ }\text{ }\partial_{r}\widetilde{a}_{\alpha\beta}=O(r^{-3}),
%\end{equation}
\begin{equation}\label{3}
k_{rr}=O_{2}(r^{-5}), \text{ }\text{ }\text{ }\text{ }\text{ }\text{ }  k_{r\alpha}=O_{2}(r^{-3}),\text{ }\text{ }\text{ }\text{ }\text{ }\text{ } k_{\alpha\beta}= \frac{\mathbf{m}^{k}_{\alpha\beta}}{r} + O_{2}(r^{-2}),
\end{equation}
with Greek letters denoting indices for coordinates on $S^{2}$.
Here $\mathbf{m}^{g}$ and $\mathbf{m}^{k}$ are tensors and $\mathbf{m}^{r}$ is a function, all on $S^{2}$ and independent of $r$.

The quantities $\mathbf{m}^{g}$ and $\mathbf{m}^{r}$ encode mass through the formula
\begin{equation}\label{3.0}
m=\frac{1}{16\pi}\int_{S^{2}} 3Tr_{\sigma}\mathbf{m}^{g} + 2\mathbf{m}^{r},
\end{equation}
where the integrand is the so called mass aspect function. Note that this is a slight abuse
of terminology since the quantity $m$ is physically the total energy, that is the first component of the energy-momentum 4-vector \cite{CederbaumCortierSakovich}. Definitions of the total energy, linear momentum, as well as other Hamiltonian charges, in the case of asymptotically AdS hyperbolic initial data with more general asymptotics, may be found in \cite{CederbaumCortierSakovich,ChruscielMaertenTod,ChruscielTod} (see \cite{Michel} for invariance).

Initial data modeling an asymptotically umbilical, spacelike hypersurface, in an asymptotically flat spacetime will be referred to as asymptotically hyperboloidal. These type of data are defined by the asymptotics \eqref{2} for the metric $g$, and the asymptotics \eqref{3} for the difference of the second fundamental form and the hyperbolic metric $k-g_{0}$. In this case the tensor $\mathbf{m}^{k}$ plays a role in the definition of mass
\begin{equation}\label{3.0'}
m=\frac{1}{16\pi}\int_{S^{2}}\left[Tr_{\sigma}\left( \mathbf{m}^{g} + 2\mathbf{m}^{k} \right)+2\mathbf{m}^{r}\right],
\end{equation}
where again the integrand is referred to as the mass aspect function.

The initial data are required to satisfy the constraint equations
\begin{equation}\label{4}
	2\mu = R+(Tr_{g} k)^{2}-|k|_{g}^{2}-2\Lambda,
	\text{ }\text{ }\text{ }\text{ }\text{ }\text{ } J = \operatorname{div}_{g}(k- (Tr_{g} k) g),
\end{equation}
where $\mu$ and $J$ are the energy and momentum density of the matter fields, $\Lambda$ is a cosmological constant and $R$ is scalar curvature. In the asymptotically AdS hyperbolic setting $\Lambda=-3$, whereas in the asymptotically hyperboloidal context $\Lambda=0$.
Furthermore, in both cases the dominant energy condition is expressed by
\begin{equation}\label{5}
\mu \geq |J|_{g}.
\end{equation}

Electromagnetic charge may be included into certain geometric inequalities by considering initial data for the Einstein-Maxwell equations $(M,g,k,E,B)$. The electric and magnetic fields induced on the slice are denoted by $E$ and $B$, respectively. These initial data sets will also be labeled as asymptotically AdS hyperbolic or asymptotically hyperboloidal, if the electromagnetic field satisfies the following asymptotics
\begin{equation}\label{6}
  E_{r},B_{r} = O(r^{-3}),\text{ }\text{ }\text{ }\text{ }\text{ }E_{\alpha},B_{\alpha}=O(r^{-1})\text{ }\text{ }\text{ }\text{ }\text{ }
  \Rightarrow\text{ }\text{ }\text{ }\text{ }|E|_{g}+|B|_{g}=O(r^{-2}).
\end{equation}
Let $(E\times B)_{i}=\epsilon_{ijl}E^{j}B^{l}$ denote the cross product with
$\epsilon$ the volume form of $g$, then the energy and momentum density of the non-electromagnetic matter fields is given by
\begin{equation}\label{7}
  	2\mu_{EM} = R+(Tr_{g} k)^{2}-|k|_{g}^{2} - 2(|E|_{g}^2+|B|_{g}^{2})-2\Lambda,\text{ }\text{ }\text{ }\text{ }\text{ }\text{ } J_{EM} = \operatorname{div}_{g}(k- (Tr_{g} k) g)+2E\times B,
\end{equation}
with the divergence of $E$ and $B$ interpreted as the electric and magnetic charge density. The following inequality will be referred to as the charged dominant energy condition
\begin{equation}\label{10}
   \mu_{EM} \geq |J_{EM}|_{g} + \frac{1}{2}\left(|\operatorname{div}_{g}E|
   +|\operatorname{div}_{g}B|\right).
\end{equation}
Observe that
\begin{equation}\label{8}
\mathcal{Q}_{e} = \frac1{4\pi} \int_{S_\infty} g(E,\nu_{g})\, , \qquad
\mathcal{Q}_{b} = \frac1{4\pi} \int_{S_\infty} g(B,\nu_{g})\, ,
\end{equation}
are well-defined under the conditions \eqref{6};
here $S_{\infty}$ represents the limit as $r\rightarrow\infty$ of integrals over coordinate spheres $S_{r}$, with outer unit normal $\nu_{g}$. The quantities $\mathcal{Q}_{e}$ and $\mathcal{Q}_{b}$ are the total electric and magnetic charge respectively,
and the square of the total charge is $\mathcal{Q}^{2}=\mathcal{Q}_{e}^{2}+\mathcal{Q}_{b}^{2}$.

The initial data may have a boundary $\partial M$, which will typically consist of an outermost apparent horizon. This means that each component $S\subset\partial M$ satisfies $\theta_{+}(S):=H_{S}+Tr_{S}k=0$ (future horizon) or $\theta_{-}(S):=H_{S}-Tr_{S}k=0$ (past horizon), and that no other apparent horizons are present; $H$ is the mean curvature with respect to the normal pointing towards the designated asymptotically hyperbolic end. Moreover, in some situations the initial data will have other ends which are not asymptotically hyperbolic, but are rather asymptotically flat or asymptotically cylindrical. Recall that an asymptotically flat end is diffeomorphic to $\mathbb{R}^{3}\setminus\mathrm{Ball}$, and in the Cartesian coordinates $x^i$ given by this diffeomorphism the following decay conditions are present
\begin{equation}\label{9}
|g_{ij}-\delta_{ij}|+|x||\partial g_{ij}| + |x|^2|\partial\partial g_{ij}|=O(|x|^{-1}),\text{ }\text{ }\text{
}\text{ }
|k_{ij}|+ |E_{i}|+|B_{i}|=O(|x|^{-2})\text{ }\text{ }\text{as}\text{ }\text{
}|x|\rightarrow\infty.
\end{equation}
%\begin{equation}\label{9}
%|(g_{ij}-\delta_{ij})|+|x||\partial g_{ij}|=O(|x|^{-1}),\text{ }\text{ }\text{
%}\text{ }|k_{ij}|=O(|x|^{-2}),\text{ }\text{ }\text{
%}\text{ }|E_{i}|+|B_{i}|=O(|x|^{-2})\text{ }\text{ }\text{as}\text{ }\text{
%}|x|\rightarrow\infty.
%\end{equation}
For such an end, the ADM mass is well-defined and given by
\begin{equation}\label{10.1}
m_{adm}=\frac{1}{16\pi}\int_{S_{\infty}}(\partial_{i}g_{ij}-\partial_{j}g_{ii})\nu^{j}.
\end{equation}
In the case of cylindrical ends, the asymptotics are most easily described in terms of Brill coordinates, see Section \ref{sec7}. Unless it is stated otherwise, initial data sets will be assumed to have only one end.

In each of the following sections, a reduction procedure will be introduced for a geometric inequality associated with asymptotically AdS hyperbolic initial data. These inequalities include the positive mass theorem (with and without charge), the Penrose inequality (with and without charge), and the mass-angular momentum inequality (with and without charge). Thus, each inequality is reduced to solving a canonical system of PDE. Furthermore, we show that the primary (or Jang-type) equations for each system may be solved independently with the desired asymptotics.\medskip

\textbf{Acknowledgements.} The first author would like to thank Anna Sakovich for helpful discussions.

\section{The Positive Mass Theorem}
\label{sec3} \setcounter{equation}{0}
\setcounter{section}{3}

Let $(M,g,k)$ be a smooth asymptotically AdS hyperbolic initial data set satisfying the dominant energy condition \eqref{5}, which either has an apparent horizon boundary or no boundary. Then the positive mass theorem states that
\begin{equation} \label{3.1}
m \geq 0,
\end{equation}
and equality holds if and only if $(M,g,k)$ arises from an embedding into anti-de Sitter space. Initial investigations were carried out in \cite{AbbottDeser,AshtekarMagnon,GibbonsHullWarner,HenneauxTeitelboim}, the time symmetric case ($k=0$) was treated in \cite{AnderssonCaiGalloway,ChruscielHerzlich,ChruscielJezierskiLeski,Wang,Zhang}, and the most general versions were given in \cite{ChruscielMaertenTod,Maerten,WangXieZhang,XieZhang}. See also \cite{ChenHungWangYau} for a discussion concerning the rigidity statement. Except for \cite{AnderssonCaiGalloway}, the methods used to prove \eqref{3.1} have relied on the spinor approach generalizing that of Witten \cite{Witten} in the asymptotically flat setting. In this section we propose a different strategy based on deformations of the initial data which result in an asymptotically flat structure. The deformations also preserve the mass and yield `sufficiently' nonnegative scalar curvature, so that \eqref{3.1} follows from the proofs of positivity of the ADM mass given by Schoen and Yau \cite{SchoenYau0,SchoenYau}. Thus the positive mass theorem in the asymptotically AdS hyperbolic setting is reduced to the positive mass theorem in the ADM context, whenever a canonical system of elliptic equations admits a solution. Moreover we show that when decoupled, each equation in the system possesses solutions with the desired asymptotics.

The deformation procedures consist of three parts. First, similar to \cite{BrayKhuri1,BrayKhuri2}, the given asymptotically AdS hyperbolic data $(M,g,k)$ is transformed into a time symmetric data set $(M_1,g_1)$ which is also asymptotically AdS hyperbolic, has the same mass, and satisfies the scalar curvature lower bound $R_1\geq -6$ weakly. Second, following \cite{ChruscielTod}, $(M_1,g_1)$ is transformed into an umbilic asymptotically hyperboloidal data set $(M_2,g_2,k_2)$ which again preserves the mass and satisfies $R_2\geq-6$ weakly. Lastly, in the spirit of \cite{ChaKhuriSakovich,Sakovich,SchoenYau1}, the hyperboloidal data $(M_2,g_2,k_3)$ is deformed into a time symmetric asymptotically flat data set $(M_3,g_3)$, having twice the original mass and nonnegative scalar curvature $R_3\geq 0$ weakly.

We now describe the first deformation. Let $M_{1}=\{t=f(x)\}$ be the graph embedded in the warped product 4-manifold $(M \times \mathbb{R}, g + u^2 dt^2)$, which satisfies the generalized Jang equation
\begin{equation}\label{3.2}
\left(g^{ij}-\frac{u^{2}f^{i}f^{j}}{1+u^{2}|\nabla f|_{g}^{2}}\right)
\left(\frac{u\nabla_{ij}f+u_{i}f_{j}+u_{j}f_{i}}
{\sqrt{1+u^{2}|\nabla f|_{g}^{2}}}-k_{ij}\right)=0.
\end{equation}
Then set $g_{1}=g+u^{2}df^{2}$ to be the induced metric on this graph.
Here $u$ is a positive function that will be chosen appropriately.
The purpose of the generalized Jang equation is to impart a desired lower bound for the
scalar curvature of $g_{1}$. Namely, as is shown in \cite{BrayKhuri1,BrayKhuri2} we have
\begin{equation}\label{3.3}
2 \mu_{1} := R_{1} - 2 \Lambda=2(\mu-J(w))+
|\pi-k|_{g_{1}}^{2}+2|q|_{g_{1}}^{2}
-2u^{-1}\operatorname{div}_{g_{1}}(u q),
\end{equation}
where $\mu_{1}$ is the energy density of the new data $(M_1,g_1)$,
$\pi$ denotes the extrinsic curvature of $M$ in the dual Lorentzian setting $(M_{1} \times \mathbb{R}, g_{1}-u^2 dt^2)$,
and $w$ and $q$ are 1-forms given by
\begin{equation}\label{3.4}
\pi_{ij}=\frac{ u \nabla_{ij}f
+ u_i f_j +  u_j  f_i}{ \sqrt{1 + u^2 |\nabla f|_g^2 }},\text{ }\text{ }\text{ }\text{ }
w_{i}=\frac{u f_{i}}{\sqrt{1+u^{2}|\nabla f|_{g}^{2}}},\text{
}\text{ }\text{ }\text{ }
q_{i}=\frac{u f^{j}}{\sqrt{1+u^{2}|\nabla f|_{g}^{2}}}(\pi_{ij}-k_{ij}).
\end{equation}
Observe that the dominant energy condition \eqref{5} together with \eqref{3.3} shows that $R_1\geq-6$ modulo a divergence term. Furthermore in order to preserve the asymptotic geometry, and motivated by the model examples of asymptotically totally geodesic slices in AdS space, the following expansions will be imposed

\begin{equation}\label{3.5}
u= \sqrt{1+r^{2}} +  u_{0} + O_{3}\left(r^{-1 + \varepsilon}\right),\quad\quad\quad
f = O_{4}\left(r^{-3}\right),
\end{equation}
where $u_{0}$ is a (mass related) constant, and $\varepsilon>0$.

\begin{lemma}\label{lemma3.1}
If $(M,g,k)$ is asymptotically AdS hyperbolic and \eqref{3.5} is satisfied, then $(M_1,g_1)$ is asymptotically AdS hyperbolic, and the mass is given by $m_{1}=m$.
\end{lemma}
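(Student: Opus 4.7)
The plan is to analyze $g_1 - g_0$ in the asymptotic end and verify that the extra piece $u^2\, df \otimes df$ is small enough both to preserve the asymptotically AdS hyperbolic form \eqref{1}--\eqref{2} and to leave the mass aspect untouched. Since $M_1$ is a graph over $M$, I will identify it with $M$ via vertical projection, transport the coordinates $(r,\theta^\alpha)$ from the end of $M$, and work with the induced metric in the explicit form $g_1 = g + u^2\, df \otimes df$ in these coordinates. With this identification there is nothing to verify concerning the topology of the end, and the whole question reduces to an order-of-decay computation for the tensor $u^2\, df \otimes df$.

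From \eqref{3.5}, squaring produces $u^2 = 1 + r^2 + 2 u_0 \sqrt{1+r^2} + u_0^2 + O_3(r^{\varepsilon}) = O_3(r^2)$. Using the standard convention that a radial coordinate derivative of a decaying function gains a factor $r^{-1}$ while an angular derivative preserves the decay rate, $f = O_4(r^{-3})$ yields $\partial_r f = O_3(r^{-4})$ and $\partial_\alpha f = O_3(r^{-3})$. A direct multiplication then produces the component bounds
\[
(u^2\, df \otimes df)_{rr} = O_3(r^{-6}), \quad (u^2\, df \otimes df)_{r\alpha} = O_3(r^{-5}), \quad (u^2\, df \otimes df)_{\alpha\beta} = O_3(r^{-4}).
\]

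Writing $g_1 = g_0 + a_1$ with $a_1 = a + u^2\, df \otimes df$, each of these corrections lies at or below the error tolerance in \eqref{2}. In particular the $rr$ correction contains no $r^{-5}$ contribution (its leading order is $9 h^2(\theta)/r^6$ when $f = h(\theta)/r^3 + O(r^{-4})$), so it merges into the $O_3(r^{-6})$ remainder of $a_{rr}$; the $r\alpha$ and $\alpha\beta$ corrections are strictly of higher order than the allowed remainders $O_3(r^{-3})$ and $O_3(r^{-2})$, respectively. This establishes that $(M_1, g_1)$ is asymptotically AdS hyperbolic and that the mass aspect tensors are unchanged: $\mathbf{m}^{g}_{1} = \mathbf{m}^{g}$ and $\mathbf{m}^{r}_{1} = \mathbf{m}^{r}$. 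The mass formula \eqref{3.0} then yields $m_1 = m$ directly.

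The main potential obstacle is verifying that the $O_3$ regularity (rather than just pointwise decay) does propagate through the bilinear products such as $u^2 (\partial_r f)^2$ and $u^2 (\partial_r f)(\partial_\alpha f)$; this requires invoking the full $O_4$ assumption on $f$ together with the $O_3$ assumption on $u$ and distributing derivatives via the product rule, keeping careful track of whether each derivative falls on the factor $u^2$ or on one of the factors $\partial f$. I expect this to be routine bookkeeping rather than a substantive obstacle, so that the entire lemma reduces to a careful accounting of orders of decay enforced by the input asymptotics \eqref{3.5}.
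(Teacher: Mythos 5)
Your proposal is correct and follows essentially the same route as the paper: the paper's proof likewise writes $(g_{1})_{rr}=g_{rr}+u^{2}f_{r}^{2}$, $(g_{1})_{r\alpha}=g_{r\alpha}+u^{2}f_{r}f_{\alpha}$, $(g_{1})_{\alpha\beta}=g_{\alpha\beta}+u^{2}f_{\alpha}f_{\beta}$ and checks via \eqref{1}, \eqref{2}, \eqref{3.5} that the extra terms decay at rates $O_{3}(r^{-6})$, $O_{3}(r^{-5})$, $O_{3}(r^{-4})$, hence fall within the error tolerances of \eqref{2} and leave the mass aspect function, and thus $m$, unchanged. Your decay bookkeeping matches the paper's displayed computation, so no further comment is needed.
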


\begin{proof}
It is clear that $(M_{1},g_{1})$ has an end diffeomorphic to $S^2 \setminus [r_0,\infty)$. Moreover \eqref{1}, \eqref{2}, and \eqref{3.5} yield
\begin{equation}\label{3.7}
\begin{split}
  & (g_{1})_{rr} = g_{rr}+u^{2}f_{r}^{2}= \frac{1}{1+r^{2}} + \frac{\mathbf{m}^{r} }{r^{5}} + O_{3}(r^{-6}),  \text{ }\text{ }\text{ } \\
  & (g_{1})_{r\alpha} =g_{r\alpha}+u^{2}f_{r}f_{\alpha}= O_{3}(r^{-3}),
\text{ }\text{ }\text{ } \\
  & (g_{1})_{\alpha \beta} = g_{\alpha\beta}+u^{2}f_{\alpha}f_{\beta}
= r^{2} \sigma_{\alpha \beta}+ \frac{\mathbf{m}^{g}_{\alpha\beta} }{r}+ O_{3}(r^{-2}).
\end{split}
\end{equation}
It follows that the mass aspect function of the new data unchanged from the original, and therefore $m_{1}=m$.
\end{proof}

In \cite{ChruscielTod}, a transformation is exhibited which takes a constant mean curvature asymptotically hyperboloidal data set into a maximal asymptotically AdS hyperbolic data set.
Here we perform this transformation in the opposite direction to obtain an umbilic asymptotically hyperboloidal data set from a time symmetric asymptotically AdS hyperbolic data set. Namely, consider the new data $(M_{2},g_{2},k_{2})$ defined by
\begin{equation} \label{3.8}
(M_{2}, g_{2}) \equiv (M_{1}, g_{1}), \quad\quad\text{ }
k_{2} = \sqrt{\frac{-\Lambda}{3}} g_{1},
\end{equation}
with corresponding matter energy and momentum density
\begin{equation} \label{3.11}
\begin{split}
& 2\mu_{2} = R_{2} + (Tr_{g_{2}} k_{2})^{2} - |k_{2}|_{g_{2}}^{2}
= R_{1} -2 \Lambda = 2 \mu_{1} ,  \\
& J_{2} = \operatorname{div}_{g_{2}}(k_{2}- (Tr_{g_{2}} k_{2}) g_{2}) =0.
\end{split}
\end{equation}

\begin{lemma}\label{lemma3.2}
If $(M_{1}, g_{1})$ is asymptotically AdS hyperbolic then $(M_{2}, g_{2}, k_{2})$ is asymptotically hyperboloidal with mass given by $m_{2}=m_{1}$.
\end{lemma}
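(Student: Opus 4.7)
The plan is to unpack the definition of asymptotically hyperboloidal (as given right after \eqref{3.0}) and compare it term by term to the known asymptotics of $(M_1,g_1)$ provided by Lemma \ref{lemma3.1}, then use the two mass formulas \eqref{3.0} and \eqref{3.0'} to match the coefficients.

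First, since $g_2 \equiv g_1$ as Riemannian manifolds, the asymptotics \eqref{2} for the metric are immediate from \eqref{3.7}, with
\[
\mathbf{m}^{g_2} = \mathbf{m}^{g_1}, \qquad \mathbf{m}^{r_2} = \mathbf{m}^{r_1}.
\]
Next I verify the asymptotics \eqref{3} for the tensor $k_2-g_0$. Using $\Lambda=-3$ we get $k_2 = g_1$, so
\[
k_2 - g_0 = g_1 - g_0.
\]
Subtracting the hyperbolic background from the expansions in \eqref{3.7} gives
\[
(k_2-g_0)_{rr} = \frac{\mathbf{m}^r}{r^5}+O_3(r^{-6}), \quad (k_2-g_0)_{r\alpha}=O_3(r^{-3}), \quad (k_2-g_0)_{\alpha\beta}=\frac{\mathbf{m}^{g_1}_{\alpha\beta}}{r}+O_3(r^{-2}),
\]
which matches the form of \eqref{3}. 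In particular the leading tensor coefficient is identified as $\mathbf{m}^{k_2} = \mathbf{m}^{g_1}$, while the $rr$ and $r\alpha$ components contribute only to the error (they decay faster than the decay required by \eqref{3}).

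With these identifications, I plug into the hyperboloidal mass formula \eqref{3.0'}:
\[
m_2 = \frac{1}{16\pi}\int_{S^2}\left[\,Tr_\sigma(\mathbf{m}^{g_2}+2\mathbf{m}^{k_2})+2\mathbf{m}^{r_2}\,\right] = \frac{1}{16\pi}\int_{S^2}\left[\,3\,Tr_\sigma\mathbf{m}^{g_1}+2\mathbf{m}^{r_1}\,\right],
\]
and the right-hand side is precisely the AdS hyperbolic mass \eqref{3.0} of $(M_1,g_1)$, i.e.\ $m_1$. Hence $m_2=m_1$.

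No step is really an obstacle; the lemma is a bookkeeping statement. The only thing to be careful about is the relative factor in the two mass formulas \eqref{3.0} and \eqref{3.0'}: the AdS version weights $Tr_\sigma\mathbf{m}^g$ by $3$ while the hyperboloidal version weights $Tr_\sigma(\mathbf{m}^g+2\mathbf{m}^k)$ by $1$. The choice $k_2 = \sqrt{-\Lambda/3}\,g_1 = g_1$ is exactly what makes $\mathbf{m}^{k_2}=\mathbf{m}^{g_1}$ and converts $Tr_\sigma \mathbf{m}^{g_2}+2Tr_\sigma\mathbf{m}^{k_2}$ into $3\,Tr_\sigma\mathbf{m}^{g_1}$, so the two mass aspect functions agree pointwise on $S^2$.
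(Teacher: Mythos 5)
Your proof is correct and follows essentially the same route as the paper: use $\Lambda=-3$ to get $k_2=g_1=g_2$, identify $\mathbf{m}^{k}_{2}=\mathbf{m}^{g}_{1}=\mathbf{m}^{g}_{2}$ and $\mathbf{m}^{r}_{2}=\mathbf{m}^{r}_{1}$, and then compare the mass formulas \eqref{3.0'} and \eqref{3.0}. The only difference is that you spell out the verification that $k_2-g_0=g_1-g_0$ satisfies \eqref{3}, which the paper leaves as "clear"; this is a harmless (and welcome) elaboration.
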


\begin{proof}
We have $k_{2} = g_{1} = g_{2}$, since $\Lambda = -3$. It is then clear that $(M_{2}, g_{2}, k_{2})$ is asymptotically hyperboloidal. Furthermore $\mathbf{m}_{2}^{k}=\mathbf{m}_{1}^{g}=\mathbf{m}_{2}^{g}$, and hence
\begin{equation} \label{3.10}
m_{2}
= \frac{1}{16\pi} \int_{S^{2}} \left[ Tr_{\sigma}\left( \mathbf{m}^{g}_{2} + 2\mathbf{m}^{k}_{2}\right) + 2\mathbf{m}^{r}_{2} \right]
= \frac{1}{16\pi} \int_{S^{2}} 3 Tr_{\sigma} \mathbf{m}_{1}^{g} + 2\mathbf{m}_{1}^{r} =m_{1}.
\end{equation}
\end{proof}

The third deformation is based on \cite{Sakovich,SchoenYau1}, and yields an asymptotically
flat data set from the asymptotically hyperboloidal $(M_2,g_2,k_2)$. Consider a graph $M_{3}=\{t=\widetilde{f}(x)\}$ embedded in the product 4-manifold $(M_{2} \times \mathbb{R}, g_{2} + dt^2)$, so that the induced metric on $M_{3}$ is given by $g_{3}=g_{2}+ d\widetilde{f}^{2}$. Motivated by the model hyperboloidal slices in Minkowski space, the following asymptotics will
be imposed on the graph
\begin{equation}\label{3.12}
\widetilde{f}(r,\theta,\phi)=\sqrt{1+r^{2}}+\mathcal{A}\log r+\mathcal{B}(\theta,\phi)+\hat{f}(r,\theta,\phi),
\end{equation}
where $(\theta,\phi)$ are coordinates on $S^{2}$,
\begin{equation}\label{3.13}
\mathcal{A}=2m_{2},\text{ }\text{ }\text{ }\text{ }\text{ }\text{ }
\Delta_{\sigma}\mathcal{B}=\frac{1}{2}\left[3Tr_{\sigma} \mathbf{m}^g  +2\mathbf{m}^r\right]-\frac{1}{8\pi}\int_{S^{2}}\left[3Tr_{\sigma} \mathbf{m}^g +2\mathbf{m}^r\right],
\end{equation}
and for some $\varepsilon>0$
\begin{equation}\label{3.14}
\hat{f} = O_{4}(r^{-1+\varepsilon}).
\end{equation}
If the classical Jang equation is satisfied
\begin{equation} \label{3.15}
\left(g_{2}^{ij}-\frac{\widetilde{f}^{i}\widetilde{f}^{j}}{1+|\widetilde{\nabla} \widetilde{f}|_{g_{2}}^{2}}\right)
\left(\frac{\widetilde{\nabla}_{ij}\widetilde{f}}
{\sqrt{1+|\widetilde{\nabla} \widetilde{f}|_{g_{2}}^{2}}}- (k_{2})_{ij}\right)=0,
\end{equation}
then the scalar curvature of the Jang graph $(M_{3}, g_{3})$ enjoys weak nonnegativity through the formula \cite{SchoenYau}
\begin{equation}\label{3.16}
R_{3}=2(\mu_{2}-J_{2}(\widetilde{w}))+
|\pi_{2}-k_{2}|_{g_{3}}^{2}+2|\widetilde{q}|_{g_{3}}^{2}
- 2\operatorname{div}_{g_{3}}(\widetilde{q}),
\end{equation}
where
$\pi_{2}$ is the second fundamental form of the graph in the dual Lorentzian setting, and $\widetilde{w}$ and $\widetilde{q}$ are 1-forms given by
\begin{equation}\label{3.17}
(\pi_{2})_{ij}=\frac{ \widetilde{\nabla}_{ij} \widetilde{f}}{ \sqrt{1 + |\widetilde{\nabla} \widetilde{f}|_{g_{2}}^2 }},\text{ }\text{ }\text{ }\text{ }
\widetilde{w}_{i}=\frac{\widetilde{f}_{i}}{\sqrt{1+|\widetilde{\nabla} \widetilde{f}|_{g_{2}}^{2}}},\text{
}\text{ }\text{ }\text{ }
\widetilde{q}_{i}=\frac{ \widetilde{f}^{j}}{\sqrt{1+|\widetilde{\nabla} \widetilde{f}|_{g_{2}}^{2}}} \left((\pi_{2})_{ij}- (k_{2})_{ij}\right).
\end{equation}
Here $\widetilde{\nabla}$ denotes covariant differentiation with respect to $g_{2}$. In \cite{ChaKhuriSakovich,Sakovich} the following fact is established.

\begin{lemma} \label{lemma3.3}
If $(M_{2},g_{2},k_{2})$ is asymptotically hyperboloidal and \eqref{3.12}-\eqref{3.14} are satisfied then $(M_3,g_3)$ is asymptotically flat, and the ADM mass is given by $m_{3}=2m_{2}$.
\end{lemma}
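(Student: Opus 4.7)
The plan is to verify both asymptotic flatness and the mass identity by an explicit asymptotic expansion of $g_{3}=g_{2}+d\widetilde{f}^{2}$ in the given polar coordinates, followed by a change to Cartesian coordinates and an application of the ADM formula \eqref{10.1}. The guiding observation is that on the exact hyperbolic model the height function $\widetilde{f}_{0}=\sqrt{1+r^{2}}$ already produces a \emph{flat} graph:
\[
\frac{dr^{2}}{1+r^{2}}+d\widetilde{f}_{0}^{2}=\frac{dr^{2}}{1+r^{2}}+\frac{r^{2}}{1+r^{2}}dr^{2}=dr^{2},
\]
so that $g_{0}+d\widetilde{f}_{0}^{2}=dr^{2}+r^{2}\sigma$. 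The logarithmic term $\mathcal{A}\log r$ in \eqref{3.12} is then designed to produce the Schwarzschild-type $2m_{3}/r$ coefficient in $(g_{3})_{rr}$, while $\mathcal{B}$ and $\hat{f}$ provide the angular corrections required to reach a genuine asymptotically flat form.

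Differentiating \eqref{3.12} once gives
\[
\partial_{r}\widetilde{f}=\frac{r}{\sqrt{1+r^{2}}}+\frac{\mathcal{A}}{r}+O_{3}(r^{-2+\varepsilon}),\qquad\partial_{\alpha}\widetilde{f}=\partial_{\alpha}\mathcal{B}+O_{3}(r^{-1+\varepsilon}),
\]
and combining with \eqref{2} I obtain
\[
(g_{3})_{rr}=1+\frac{2\mathcal{A}}{r}+O_{3}(r^{-2+\varepsilon}),\quad(g_{3})_{r\alpha}=\partial_{\alpha}\mathcal{B}+O_{3}(r^{-1+\varepsilon}),\quad(g_{3})_{\alpha\beta}=r^{2}\sigma_{\alpha\beta}+\partial_{\alpha}\mathcal{B}\,\partial_{\beta}\mathcal{B}+\frac{\mathbf{m}^{g}_{\alpha\beta}}{r}+O_{3}(r^{-2}).
\]
I then introduce Cartesian coordinates $x^{i}=r\,\omega^{i}(\theta,\phi)$ coming from the standard embedding $S^{2}\hookrightarrow\mathbb{R}^{3}$. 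Under this change, an $r$-independent angular tensor $h_{\alpha\beta}(\theta,\phi)\,d\theta^{\alpha}d\theta^{\beta}$ becomes an $O(|x|^{-2})$ Cartesian perturbation of $\delta_{ij}$, and a term of the form $h_{\alpha\beta}/r$ becomes $O(|x|^{-3})$, so that the only $O(|x|^{-1})$ contribution to $g_{3}-\delta$ in Cartesian components comes from the $2\mathcal{A}/r$ piece of $(g_{3})_{rr}$, which reads $2\mathcal{A}|x|^{-1}\omega_{i}\omega_{j}$. The remaining $O(r^{-2+\varepsilon})$ errors, together with two Cartesian derivatives of $\hat{f}$ controlled by \eqref{3.14}, yield \eqref{9}.

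With asymptotic flatness established, I apply \eqref{10.1}. The Schwarzschild-type perturbation $2\mathcal{A}|x|^{-1}\omega_{i}\omega_{j}$ contributes $16\pi\mathcal{A}$ exactly as in the classical calculation, while the lower order angular pieces $\partial_{\alpha}\mathcal{B}\,\partial_{\beta}\mathcal{B}$ and $\mathbf{m}^{g}_{\alpha\beta}/r$, and the mixed term $\partial_{\alpha}\mathcal{B}$ in $(g_{3})_{r\alpha}$, produce boundary integrands whose leading part either decays in the limit or integrates to zero on $S^{2}$ once the Poisson equation in \eqref{3.13} is invoked. Consequently $m_{3}=\mathcal{A}=2m_{2}$.

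The main obstacle is this last piece of bookkeeping: one must check that none of the subleading angular contributions, and in particular the $\partial_{\alpha}\mathcal{B}$ pieces which do enter at $O(|x|^{-1})$ in Cartesian coordinates, leave a residue in the ADM integral. The Poisson equation for $\mathcal{B}$ in \eqref{3.13} is precisely the compatibility condition on $S^{2}$ that kills this residue; equivalently, the non-constant part of the hyperboloidal mass aspect $3Tr_{\sigma}\mathbf{m}^{g}+2\mathbf{m}^{r}$ is absorbed into a pure coordinate change at infinity, leaving only its $S^{2}$-average, i.e.\ the mass itself, to appear in the coefficient $\mathcal{A}$. Verifying this cancellation at the level of two derivatives of $g_{3}$, while keeping the $O_{4}$ error $\hat{f}$ under control, is the technical heart of the argument.
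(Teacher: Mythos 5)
Your verification is essentially correct, and since the paper does not prove Lemma \ref{lemma3.3} itself but quotes it from \cite{ChaKhuriSakovich,Sakovich}, your explicit expansion of $g_{3}=g_{2}+d\widetilde{f}^{2}$ followed by the ADM formula \eqref{10.1} is the natural self-contained route and is in substance the computation carried out in those references. One point of mechanism deserves correction, though it does not affect the validity of your argument: the vanishing of the contribution of the cross term $(g_{3})_{r\alpha}=\partial_{\alpha}\mathcal{B}+O(r^{-1+\varepsilon})$ to the ADM integral is \emph{not} a consequence of the particular Poisson equation in \eqref{3.13}. Indeed, in Cartesian coordinates the only $O(|x|^{-1})$ pieces of $g_{3}-\delta$ are
\begin{equation*}
h_{ij}=\frac{2\mathcal{A}}{r}\,\omega_{i}\omega_{j}
+\frac{1}{r}\left(\omega_{i}V_{j}+\omega_{j}V_{i}\right)+O\!\left(r^{-2+\varepsilon}\right),
\qquad V_{j}:=r\,\partial_{j}\mathcal{B},\quad \omega=\frac{x}{|x|},
\end{equation*}
with $V$ tangential and $r$-independent, and a direct computation gives
$(\partial_{i}h_{ij}-\partial_{j}h_{ii})\,\omega^{j}
=\frac{4\mathcal{A}}{r^{2}}+\frac{\Delta_{\sigma}\mathcal{B}}{r^{2}}+O(r^{-3+\varepsilon})$,
so the $\mathcal{B}$-term integrates to zero over $S_{r}$ for \emph{any} smooth $\mathcal{B}$, simply because $\int_{S^{2}}\Delta_{\sigma}\mathcal{B}=0$; the remaining errors, as well as the $\mathbf{m}^{g}_{\alpha\beta}/r$ and $\partial_{\alpha}\mathcal{B}\,\partial_{\beta}\mathcal{B}$ pieces (which are $O(|x|^{-2})$ in Cartesian components), contribute $O(r^{-1+\varepsilon})$ and drop out in the limit. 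Thus the ``technical heart'' you defer at the end is a two-line divergence-theorem observation, and \eqref{3.13} enters the mass identity only through the normalization $\mathcal{A}=2m_{2}$; its role as a Poisson equation is to make the Jang equation \eqref{3.15} solvable with the asymptotics \eqref{3.12}--\eqref{3.14} (a solvability/compatibility condition), not to cancel a residue in the ADM integral. With that clarification, your expansion of $(g_{3})_{rr}$, $(g_{3})_{r\alpha}$, $(g_{3})_{\alpha\beta}$, the verification of \eqref{9}, and the conclusion $m_{3}=\mathcal{A}=2m_{2}$ are all sound.
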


At this stage we have an asymptotically flat initial data set $(M_3,g_3)$ which encodes the
mass of the original data. If the scalar curvature $R_3$ is nonnegative, then the
(ADM) positive mass theorem \cite{SchoenYau0} would apply to yield \eqref{3.1}. However this is not necessarily the case, since the scalar curvature is only guaranteed to be weakly nonnegative. In particular, combining
\eqref{3.3}, \eqref{3.11}, and \eqref{3.16} together leads to
\begin{equation} \label{3.18}
\begin{split}
R_{3}&=2(\mu_{2}-J_{2}(\widetilde{w}))+
|\pi_{2}-k_{2}|_{g_{3}}^{2}+2|\widetilde{q}|_{g_{3}}^{2}
-2\operatorname{div}_{g_{3}}(\widetilde{q}) \\
%&= 2(\mu-J(w))+
%|\pi-k|_{g_{1}}^{2}
%+|\pi_{2}-k_{2}|_{g_{3}}^{2}
%+2|q|_{g_{1}}^{2}
%+2|\widetilde{q}|_{g_{3}}^{2}
%-2u^{-1}\operatorname{div}_{g_{1}}(u q)
%-2\operatorname{div}_{g_{3}}(\widetilde{q}) \\
&= 2(\mu-J(w))+
|\pi-k|_{g_{2}}^{2}
+|\pi_{2}-k_{2}|_{g_{3}}^{2}
+2|q|_{g_{2}}^{2}
+2|\widetilde{q}|_{g_{3}}^{2}
-2u^{-1}\operatorname{div}_{g_{2}}(u q)
-2\operatorname{div}_{g_{3}}(\widetilde{q}).
\end{split}
\end{equation}
Nevertheless, the scalar curvature is `sufficiently' nonnegative to allow the basic strategy
in \cite{SchoenYau} to be carried out. Namely if $\psi>0$ solves
\begin{equation} \label{3.20}
\Delta_{g_{3}} \psi - \frac{1}{8} R_{3} \psi =0,
\end{equation}
with an asymptotic expansion at spatial infinity of the form
\begin{equation} \label{3.21}
\psi = 1 + \frac{\psi_{0}}{r} + O_{3}(r^{-2}),
\end{equation}
where $\psi_{0}$ is a constant, then $(M_4,g_4)=(M_3,\psi^4 g_3)$ is asymptotically flat with zero scalar curvature. By \cite{SchoenYau0} the mass of the conformal metric is nonnegative $m_4\geq 0$. Therefore, since $m_4=m_3+2\psi_0$ it remains to show that $\psi_0\leq -\frac{1}{4}m_3$. In order to accomplish this, and to aid with existence and positivity of $\psi$ we choose
\begin{equation} \label{3.19}
u = \psi^{2} \sqrt{1 + |\widetilde{\nabla} \widetilde{f}|_{g_{2}}^{2}}.
\end{equation}
In particular, $u$ follows the desired asymptotics in \eqref{3.5}, with $u_{0}=2 \psi_{0}+ \mathcal{A}$.

\begin{theorem}\label{thm3.4}
Let $(M, g, k)$ be a $3$-dimensional, complete, asymptotically AdS hyperbolic initial data set satisfying the dominant energy condition $\mu\geq|J|$.
If the coupled Jang system of equations \eqref{3.2}, \eqref{3.15}, \eqref{3.20}, and \eqref{3.19} admits a global smooth solution satisfying the asymptotics \eqref{3.5}, \eqref{3.12}-\eqref{3.14}, and \eqref{3.21}, then $m\geq 0$. Moreover $m=0$ if and only if the initial data arise from an embedding into the anti-de Sitter spacetime.
\end{theorem}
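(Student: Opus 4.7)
The plan is to chain the three deformations recorded in Lemmas \ref{lemma3.1}--\ref{lemma3.3} with a final conformal rescaling and then invoke the asymptotically flat positive mass theorem. Since the coupled Jang system is assumed to admit a global smooth solution with the prescribed asymptotics, all three deformation maps together with the coupling \eqref{3.19} are available. Lemma \ref{lemma3.1} produces $(M_{1},g_{1})$ asymptotically AdS hyperbolic with $m_{1}=m$ and scalar curvature satisfying \eqref{3.3}; Lemma \ref{lemma3.2} rewrites this as umbilic asymptotically hyperboloidal data $(M_{2},g_{2},k_{2})$ with $m_{2}=m_{1}$; and Lemma \ref{lemma3.3} applied to $\widetilde f$ produces the asymptotically flat Jang graph $(M_{3},g_{3})$ with $m_{3}=2m_{2}=2m$ and scalar curvature satisfying \eqref{3.16}. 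Assembling the three identities yields \eqref{3.18}, in which every bulk term is pointwise nonnegative by the dominant energy condition \eqref{5}; the only obstructions to nonnegativity of $R_{3}$ are the two divergences $-2u^{-1}\operatorname{div}_{g_{2}}(uq)$ and $-2\operatorname{div}_{g_{3}}(\widetilde q)$.

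The conformal rescaling $g_{4}:=\psi^{4}g_{3}$, with $\psi$ the hypothesized positive solution of \eqref{3.20}--\eqref{3.21}, gives $R_{4}=\psi^{-5}(-8\Delta_{g_{3}}\psi+R_{3}\psi)=0$, so $(M_{4},g_{4})$ is asymptotically flat with vanishing scalar curvature and ADM mass $m_{4}=m_{3}+2\psi_{0}$. The Schoen--Yau positive mass theorem \cite{SchoenYau0} then gives $m_{4}\geq 0$, equivalently $\psi_{0}\geq -\tfrac{1}{2}m_{3}$.

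The principal remaining step, and the reason for the specific coupling \eqref{3.19}, is the complementary bound $\psi_{0}\leq -\tfrac{1}{4}m_{3}$. Matching leading terms of $u=\psi^{2}\sqrt{1+|\widetilde{\nabla}\widetilde f|_{g_{2}}^{2}}$ at infinity, using \eqref{3.12}--\eqref{3.14} and \eqref{3.21}, yields the algebraic relation $u_{0}=2\psi_{0}+\mathcal{A}=2\psi_{0}+2m$, so the inequality is equivalent to the asymptotic statement $u_{0}\leq m$. To prove this I would integrate \eqref{3.20} over $M_{3}$ to obtain $-4\pi\psi_{0}=\tfrac{1}{8}\int_{M_{3}}R_{3}\psi\,dV_{g_{3}}$, substitute \eqref{3.18}, and exploit the identity $dV_{g_{3}}=\psi^{-2}u\,dV_{g_{2}}$ so that the weighted $g_{2}$-divergence of $uq$ and the $g_{3}$-divergence of $\widetilde q$ can be integrated by parts against $\psi$ within a common framework. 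The dominant energy condition \eqref{5} makes the bulk contribution favorable, and the boundary terms at spatial infinity should collapse, via the expansions in Section \ref{sec2}, to exactly the mass aspect integrand appearing in \eqref{3.0}, producing $u_{0}\leq m$. This asymptotic bookkeeping, reconciling the two divergences with the AdS mass integrand, is the main technical obstacle. Once established, combining with $m_{4}\geq 0$ gives
\begin{equation*}
m_{3}\;\geq\;-2\psi_{0}\;\geq\;\tfrac{1}{2}m_{3},
\end{equation*}
so $m=\tfrac{1}{2}m_{3}\geq 0$.

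For the rigidity statement, $m=0$ propagates to $m_{3}=m_{4}=0$; the rigidity clause of the asymptotically flat positive mass theorem forces $(M_{4},g_{4})\cong(\mathbb{R}^{3},\delta)$, the normalization on $\psi$ forces $\psi\equiv 1$ so that $(M_{3},g_{3})\cong(\mathbb{R}^{3},\delta)$ as well, and equality throughout \eqref{3.18} forces $\mu=J(w)=0$, $\pi=k$ on $M_{1}$, and $\pi_{2}=k_{2}$ on $M_{3}$. Inverting the Chru\'{s}ciel--Tod identification \eqref{3.8} and the generalized Jang graph \eqref{3.2} with these rigidity conditions exhibits $(M,g,k)$ as an asymptotically totally geodesic slice of anti-de Sitter spacetime, completing the equality case.
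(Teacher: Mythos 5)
Your global structure matches the paper's (chain Lemmas \ref{lemma3.1}--\ref{lemma3.3}, conformally rescale by $\psi$, apply \cite{SchoenYau0} to get $m_4=m_3+2\psi_0\geq 0$, and close the loop with $\psi_0\leq-\tfrac14 m_3$), but the one step you yourself flag as ``the main technical obstacle'' is exactly the step you have not supplied, and the route you sketch for it would not go through as stated. You propose to integrate \eqref{3.20} against $1$, obtaining $-4\pi\psi_0=\tfrac18\int_{M_3}R_3\psi$, and then substitute \eqref{3.18}. With only a single power of $\psi$ as weight, integrating the terms $-2\psi\operatorname{div}_{g_3}(\widetilde q)$ and $-2\psi u^{-1}\operatorname{div}_{g_2}(uq)$ by parts produces cross terms of the form $\nabla\psi\cdot\widetilde q$ (and its $q$ analogue) which have no sign, and there is no $|\nabla\psi|^2$ term available to absorb them, since integrating $\Delta_{g_3}\psi$ against $1$ yields only a boundary flux. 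Moreover the weight $\psi u^{-1}d\omega_{g_3}=\psi^{-1}\sqrt{1+|\widetilde\nabla\widetilde f|_{g_2}^2}\,u^{-1}\,$(times $d\omega_{g_2}$) does not reduce the $q$-divergence to a pure boundary integral, so the dominant energy condition alone does not make ``the bulk contribution favorable.'' The paper avoids all of this by multiplying \eqref{3.20} by $\psi$ before integrating: the bulk then carries $4|\nabla_{g_3}\psi|^2+\tfrac12 R_3\psi^2$, the cross term from $\int\psi^2\operatorname{div}_{g_3}\widetilde q$ is absorbed by Cauchy--Schwarz against $|\nabla\psi|^2+\psi^2|\widetilde q|_{g_3}^2$ (this is the point of keeping the quadratic $2|\widetilde q|^2$ term in \eqref{3.18}), and the specific coupling \eqref{3.19} makes $\psi^2u^{-1}d\omega_{g_3}=d\omega_{g_2}$, so that $\int_{M_3}\psi^2u^{-1}\operatorname{div}_{g_2}(uq)\,d\omega_{g_3}=\int_{M_2}\operatorname{div}_{g_2}(uq)\,d\omega_{g_2}$ becomes a flux which vanishes by Lemma \ref{lemma9.1}. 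Your expectation that the boundary terms ``collapse to the mass aspect integrand of \eqref{3.0}'' is also off target: the relevant flux is $\int_{\widetilde S_\infty}\psi^2 g_3(\widetilde q,\nu_3)=-4\pi\mathcal A=-4\pi m_3$, a computation at the asymptotically flat infinity of $M_3$ quoted from Lemma 10.1 of \cite{ChaKhuriSakovich}, while the $q$-flux at the AdS/hyperboloidal infinity is shown to vanish; no AdS mass aspect appears at this stage. Until the inequality $\psi_0\leq-\tfrac14 m_3$ is actually derived (by the weighted-by-$\psi$ argument of \eqref{3.22}--\eqref{3.26} or an equivalent), the proof is incomplete.

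Two smaller points on rigidity. The conclusion $\psi\equiv1$ does not follow from ``the normalization on $\psi$'' (which only fixes $\psi\to1$ at infinity); it follows from equality in \eqref{3.22} forcing $\int|\nabla_{g_3}\psi|^2=0$ once $\psi_0=0$. And passing from $(M_3,g_3)\cong(\mathbb R^3,\delta)$ with $\pi_2=k_2$ back to the statement that $(M_2,g_2)$ is hyperbolic space is not automatic from ``inverting'' \eqref{3.8}: the paper must show $\mathcal B$ is constant and $\hat f$ decays, and then invoke the uniqueness of umbilic (CMC) graphs asymptotic to the hyperboloid in Minkowski space \cite{Choquet-Bruhat2}, before concluding $u=\sqrt{1+r^2}$ and exhibiting the AdS embedding. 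These steps should be spelled out rather than compressed into a single sentence.
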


\begin{proof}
As discussed above, in order to establish the inequality $m\geq 0$ it remains to show that $\psi_0\leq -\frac{1}{4}m_3$. For this purpose multiply \eqref{3.20} by $\psi$ and integrate by parts, while utilizing \eqref{3.18}, to yield
\begin{equation} \label{3.22}
\begin{split}
\int_{\widetilde{S}_{\infty}} 4g_{3}(\psi \nabla \psi, \nu_{3})
&= \int_{M_{3}} 4|\nabla_{g_{3}} \psi|^{2} + \frac{1}{2}R_{3} \psi^{2} \\
&\geq \int_{M_{3}} 4|\nabla_{g_{3}} \psi|^{2}
+ \psi^{2} \left( |\widetilde{q}|_{g_{3}}^{2}
-\operatorname{div}_{g_{3}}(\widetilde{q}) \right)
+ \psi^{2} \left(|q|_{g_{2}}^{2}- u^{-1}\operatorname{div}_{g_{2}}(u q)\right)
 \\
&\geq  \int_{M_{3}} 3|\nabla_{g_{3}} \psi|^{2}
- \int_{\widetilde{S}_{\infty}} \psi^{2} g_{3}(\widetilde{q}, \nu_{3})
 + \int_{M_{3}} \psi^{2} \left(|q|_{g_{2}}^{2}- u^{-1}\operatorname{div}_{g_{2}}(u q)\right),
\end{split}
\end{equation}
where $\nu_{3}$ is the unit outer normal with respect to $g_{3}$ and $\widetilde{S}_r$ denotes
coordinate spheres in $M_3$. By Lemma 10.1 in \cite{ChaKhuriSakovich} and Lemma \ref{lemma3.3}
\begin{equation} \label{3.23}
\int_{\widetilde{S}_{\infty}} \psi^{2} g_{3}(\widetilde{q}, \nu_{3}) =-4\pi\mathcal{A}
=-8\pi m_2=-4\pi m_3.
\end{equation}
Now recall the relation between volume forms that
\begin{equation} \label{3.24}
d\omega_{g_{3}}  = \sqrt{1 + |\widetilde{\nabla} \widetilde{f}|_{g_{2}}^{2}} d\omega_{g_{2}}.
\end{equation}
This, together with the choice of $u$ in \eqref{3.19} produces
\begin{equation} \label{3.25}
\int_{M_{3}} \psi^{2} u^{-1}\operatorname{div}_{g_{2}}(u q) d\omega_{g_{3}}
= \int_{M_{2}}  \operatorname{div}_{g_{2}}(u q) d\omega_{g_{2}} =
\int_{\overline{S}_{\infty}}ug_{2}(q,\nu_2)=0,
\end{equation}
where the last equality is shown in Lemma \ref{lemma9.1} below and $\overline{S}_r$ denotes
coordinate spheres in $M_2$.
It follows that
\begin{equation} \label{3.26}
- 16\pi \psi_{0}
= \int_{ \widetilde{S}_{\infty}} 4g_{3}(\psi \nabla_{g_{3}} \psi, \nu_{3})
\geq 4\pi m_3,
\end{equation}
which is the desired result.

Now consider the case of equality. If $m=0$ then $\psi_0=m_{1}=m_{2}=m_{3}=0$, and in particular \cite{SchoenYau0} shows that $(M_{3}, g_{3})$ is isometric to $(\mathbb{R}^{3}, \delta)$. It follows that $(M_{2}, g_{2}, k_{2})$ is a graphical totally umbilical slice of Minkowski space, where the graph has an asymptotic expansion of the form $t=\widetilde{f}= \sqrt{1+r^{2}}+B(\theta, \phi) + \hat{f}$. Observe that since $g_2$ is asymptotically hyperboloidal $(g_{2})_{\alpha \beta} = r^{2}\sigma_{\alpha\beta} + O(r^{-1})$; on the other hand
\begin{equation} \label{3.27}
(g_{2})_{\alpha \beta}=(g_3)_{\alpha\beta} - \widetilde{f}_{\alpha} \widetilde{f}_{\beta}
=  \delta_{\alpha \beta} - \widetilde{f}_{\alpha} \widetilde{f}_{\beta}  \\
= r^{2}\sigma_{\alpha\beta} - \mathcal{B}_{\alpha}\mathcal{B}_{\beta} + O(r^{-1+\varepsilon}).
\end{equation}
Therefore $|\nabla\mathcal{B}|_{\sigma} = O(r^{-\frac{1}{2}+ \frac{\varepsilon}{2}})$, and since $\mathcal{B}$ is independent of $r$ this yields $\mathcal{B} = const$.
Furthermore, a similar argument applies in the radial direction to produce
\begin{equation} \label{3.28}
 \frac{1}{1+r^{2}} + O(r^{-5})=(g_{2})_{r r} =
 \delta_{r r} - (\widetilde{f}_{r})^{2}
= \frac{1}{1+r^{2}} -2 \hat{f}_{r} + O(r^{-4+2\varepsilon}).
\end{equation}
This together with the fact that $\hat{f}=O(r^{-1+\varepsilon})$, implies $\hat{f} = O(r^{-3+2\epsilon})$. In \cite{Choquet-Bruhat2}, it is shown that if a constant mean curvature spacelike graph $t=h(r, \theta, \phi)$ embedded in Minkowski space is sufficiently close to the hyperboloid at infinity ($h=\sqrt{1+r^{2}} + o(r^{-1})$), then it must be isometric to the hyperboloid. Hence, $\hat{f}\equiv 0$ and $(M_{2}, g_{2})\equiv (M_{1}, g_{1})$ is isometric to hyperbolic 3-space.

The original data $(M,g,k)$ must then arise from an embedding into the AdS spacetime via the graph $t=f(x)$.  To see this, observe that $g=g_1-u^2 df^2=g_0-u^2 df^2$. Moreover
$\psi_0=0$ and \eqref{3.22} gives $\psi\equiv 1$, so that together with $\widetilde{f}=\sqrt{1+r^2}+\mathcal{B}$ we have $u = \sqrt{1+r^{2}}$. Thus, $(M,g)$ is isometrically embedded into AdS space. Moreover $k$ agrees with the second fundamental form of the embedding, since \eqref{3.18} and \eqref{3.22} give $|\pi-k|_{g_2}=0$.
\end{proof}

It should be pointed out that the hypotheses in Theorem \ref{thm3.4} should be generalized to allow for initial data with apparent horizon boundary. This could be accomplished by allowing for blow-up of the Jang type equations \eqref{3.2}, \eqref{3.15} at apparent horizons as in \cite{HanKhuri1,SchoenYau}. We expect that the cylindrical ends which arise from such blow-up behavior can be conformally closed, so that the remaining arguments of the proof remain valid as in \cite{SchoenYau}.

We also remark that strong evidence suggests that a positive solution to \eqref{3.20}, \eqref{3.21} exists,
as a result of the choice of $u$ in \eqref{3.19}. To see this, observe that the same computation as in \eqref{3.22} shows that the kernel for this problem is trivial. Moreover
arguing by contradiction, suppose that the domain on which $\psi<0$ is nonempty and observe that it must be a bounded set in light of the asymptotics \eqref{3.21}. Then integrating by parts on this domain as in \eqref{3.22} yields a contradiction. The Hopf lemma may then be used to rule out any points at which $\psi$ vanishes.

In order to lend further credence to the above procedure we show here that solutions to the Jang equations \eqref{3.2} and \eqref{3.15} exist with the desired asymptotics. First note that \eqref{3.15} has been thoroughly analyzed in \cite{SakovichThesis,Sakovich}, and in particular solutions are shown to exist satisfying \eqref{3.12}-\eqref{3.14}. Equation \eqref{3.2} will be treated in the proposition below.
For this we assume that $(g,k)$ has the form \eqref{1} with $a$ as in \eqref{2} satisfying the extra conditions $a_{rr} = a_{r\alpha} = 0$; in this case $\mathbf{m}^{r}=0$. We point out that if $(g,k)$ is sufficiently regular with $|g - g_0|_{g_{0}} = O(r^{-3})$ and $|k|_{g_{0}} = O(r^{-3})$, one can perform a change of coordinates at infinity as described in Appendix A of \cite{ChaKhuriSakovich} to achieve $a_{rr} = a_{r\alpha} = 0$, and this change of coordinates does not affect the mass aspect function.

\begin{proposition} \label{thm3.5}
Suppose that $(M,g,k)$ has an outermost apparent horizon boundary, and that $u$ is a
given smooth positive function $u$ satisfying \eqref{3.5}. Then there exists a smooth solution to the generalized Jang equation \eqref{3.2} with the fall-off \eqref{3.5}. Moreover, the solution
blows-up (down) at the future (past) apparent horizon boundary components, and
precise asymptotics at the horizon are given in \cite{HanKhuri1}.
\end{proposition}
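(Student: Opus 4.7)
The plan is to follow the Schoen-Yau regularization of the Jang equation, as refined by Han-Khuri to handle blow-up at apparent horizons, adapted to the generalized Jang equation \eqref{3.2} in the asymptotically AdS hyperbolic setting. For each $\tau>0$, I introduce the capillary regularization
\[
\left(g^{ij}-\frac{u^{2}f^{i}f^{j}}{1+u^{2}|\nabla f|_{g}^{2}}\right)
\left(\frac{u\nabla_{ij}f+u_{i}f_{j}+u_{j}f_{i}}
{\sqrt{1+u^{2}|\nabla f|_{g}^{2}}}-k_{ij}\right)= \tau f,
\]
solved on a sequence of bounded domains $M_{R}$ exhausting $M$, with zero Dirichlet data on the outer sphere $\{r=R\}$ and with barriers at the horizon components of $\partial M$ forcing the desired blow-up behavior. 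Existence of the regularized solution $f_{\tau,R}$ reduces to obtaining uniform a priori estimates; once these are in hand, standard Leray-Schauder theory or the continuity method provides solvability.

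The estimates proceed in three stages. Near the horizon, I import the barrier construction of \cite{HanKhuri1}: the classification of each boundary component as a future or past apparent horizon permits logarithmic-type super- and sub-solutions that force $f_{\tau,R}\to\pm\infty$ with the precise asymptotics recorded there, uniformly in the parameters. Interior gradient estimates follow by the Bernstein technique, taking advantage of the interpretation of \eqref{3.2} as a prescribed mean curvature equation for the graph $t=f$ in the warped product $(M\times\mathbb{R},\, g+u^{2}dt^{2})$, dualized to a Lorentzian setting in the manner of \cite{BrayKhuri1}. At infinity, I construct radial super- and sub-solutions of the form $\pm C\, r^{-3}$ by inserting this ansatz into \eqref{3.2} and tracking the dominant terms. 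With $u$ satisfying \eqref{3.5} and with $a_{rr}=a_{r\alpha}=0$, $\mathbf{m}^{r}=0$, $k_{rr}=O(r^{-5})$, $k_{r\alpha}=O(r^{-3})$, the contracted source $(g^{ij}-u^{2}f^{i}f^{j}/(1+u^{2}|\nabla f|_{g}^{2}))k_{ij}$ decays like $r^{-5}$, and a direct computation shows that for a suitable $C$ the functions $\pm C r^{-3}$ dominate the induced inhomogeneity at leading order, independently of $\tau$. Passing to the limit $\tau\to 0$ followed by $R\to\infty$ via Arzel\`a-Ascoli and Schauder theory yields a global smooth solution of \eqref{3.2} obeying \eqref{3.5} and the horizon asymptotics of \cite{HanKhuri1}; the higher-order decay $O_{4}(r^{-3})$ comes from bootstrapping Schauder estimates on the rescaled equation in the hyperbolic end.

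I expect the main obstacle to be securing the sharp decay $f=O_{4}(r^{-3})$ uniformly in $\tau$ and verifying that no leading-order logarithmic correction appears in the asymptotic expansion. The linearization of \eqref{3.2} at $f=0$ around the hyperbolic background is a degenerate elliptic operator whose indicial analysis at infinity a priori admits weaker decay modes; eliminating them relies both on the outer Dirichlet normalization and on the fact that the effective source from $k$ genuinely decays at order $r^{-5}$. It is precisely here that the simplifying assumptions $a_{rr}=a_{r\alpha}=0$ and $\mathbf{m}^{r}=0$, achievable via the coordinate change from Appendix A of \cite{ChaKhuriSakovich}, are essential. The horizon blow-up itself is comparatively routine, since the weight $u$ does not alter the leading singular structure of the barriers of \cite{HanKhuri1}.
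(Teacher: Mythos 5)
Your overall scheme (capillary regularization, exhaustion, horizon barriers from \cite{HanKhuri1}, barriers at spatial infinity) is the same as the paper's, which delegates existence to modifications of \cite{SakovichThesis,Sakovich} and devotes the proof to the asymptotic barriers. The differences, and the problems, are in your barrier construction at infinity. First, your decay claim for the source is wrong: the hypotheses $a_{rr}=a_{r\alpha}=0$, $\mathbf{m}^r=0$ concern only the metric, while $k$ retains the asymptotics \eqref{3} with $\mathbf{m}^k_{\alpha\beta}$ generally nonzero, so the contracted term contains $g^{\alpha\beta}k_{\alpha\beta}\sim r^{-2}\cdot \mathbf{m}^k_{\alpha\beta}/r=O(r^{-3})$ and $g^{rr}k_{rr}\sim r^{2}\cdot O(r^{-5})=O(r^{-3})$; it decays like $r^{-3}$, not $r^{-5}$. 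Your conclusion that $\pm C r^{-3}$ are barriers for large $r$ can nevertheless be salvaged, but for a different reason than the one you give: the radial model operator has indicial roots $0$ and $4$, so acting on $Cr^{-3}$ it produces a term of size $\sim C r^{-2}$ (equivalently, in the paper's normalized ODE \eqref{3.53} one gets $p\approx -3Cr^{-2}$ and $p'+\tfrac{3}{r}p\approx -3Cr^{-3}$), which dominates the correctly computed $O(r^{-4})$ normalized forcing. You should redo this computation rather than rely on the claimed $r^{-5}$ decay.

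Second, and more seriously, a fixed power-law barrier $\pm Cr^{-3}$ has finite slope at the inner sphere $\{r=r_1\}$ of the asymptotic region, so the comparison with the regularized solutions $f_{\tau,R}$ requires knowing $|f_{\tau,R}|\leq Cr_1^{-3}$ there, uniformly in $\tau$ and $R$; no such a priori bound is available (the naive sup bound from the $\tau f$ term degenerates as $\tau\to 0$), and your sketch supplies no mechanism to close this. This is exactly why the paper, following \cite{SakovichThesis,Sakovich}, builds the barriers from the substitution \eqref{3.51} and solves the boundary value problems \eqref{3.54}--\eqref{3.55} with $p_{\pm}(r_0)=\mp 1$: the barriers then have infinite slope at the inner edge of the end, so the maximum-principle comparison holds on all of $\{r\geq r_0\}$ without any knowledge of the solution's values there, and the resulting decay $p_\pm=O(r^{-2})$, i.e. $\overline{f}_\pm=O(r^{-3})$, yields \eqref{3.5}. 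Unless you either incorporate this infinite-slope feature into your barriers or prove a uniform sup bound at $r=r_1$, your passage to the limit does not give the fall-off \eqref{3.5} (only $O(r^{-3})$ decay is needed, by the way, so your worry about logarithmic corrections is beside the point).
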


\begin{proof}
The existence of a solution may be established by suitable modifications of the arguments
in \cite{SakovichThesis,Sakovich}. What remains is to find appropriate barriers in order to
show the desired asymptotics. At the horizons, one may take the barriers given in \cite{HanKhuri1}. Below, barriers are constructed at infinity following \cite{SakovichThesis,Sakovich}.

We search for a radial function
\begin{equation} \label{3.51}
p(r)=\frac{\overline{f}'(r)(1+r^2)}{\sqrt{1+(1+r^2)^{2}\overline{f}'(r)^2}},
\end{equation}
from which a barrier function $\overline{f}(r)$ will be determined. Note that $-1\leq p \leq 1$, and $p= \pm 1$ when $\overline{f}'=\pm \infty$. Let $\mathcal{F}(\overline{f})$ denote the left-hand side of the generalized Jang equation \eqref{3.2} evaluated at $f = \overline{f}$, and set
\begin{equation}\label{3.52}
\begin{split}
\beta(r, \theta, \phi) =&\frac{\frac{1}{1+r^{2}}+(1+r^2)\overline{f}'(r)^2}{u^{-2}+(1+r^2)\overline{f}'(r)^2}\\
                        =& \frac{1}{1 + (1-p^2)\left((1+r^{2})u^{-2}-1 \right)}\\
                        =& 1 + \frac{2 u_{0}}{r} (1-p^2) + O(r^{-2+\varepsilon}).
\end{split}
\end{equation}
Then a computation shows that
\begin{equation} \label{3.53}
\begin{split}
&\frac{\mathcal{F}(\overline{f})}{\beta^{3/2}(1 + r^2)^{3/2}u^{-2}} \\
=&  p'  + \left(\frac{3}{r}-\frac{2u_{0}}{r^{2}}+O(r^{-3+\varepsilon})\right)p +O(r^{-4})p^{2}
+ \left(-\frac{1}{r} + \frac{6u_{0}}{r^{2}} + O(r^{-3}) \right) p^{3} +O(r^{-4}),
\end{split}
\end{equation}
Now choose $p_+$ and $p_-$ to be solutions of the boundary value problems
\begin{equation} \label{3.54}
p_{\pm}'
+ \left( \frac{3}{r}-\frac{2u_{0}}{r^{2}} \right)p_{\pm} \pm \frac{C_{1}}{r^{3-\varepsilon}}|p_{\pm}|
\pm \frac{C_{2}}{r^{4}} p_{\pm}^{2}
+ \left(-\frac{1}{r} + \frac{6u_{0}}{r^{2}} \right) p_{\pm}^{3} \pm \frac{C_{3}}{r^{3}} |p_{\pm}|^{3}
\pm \frac{C_{4}}{r^{4}} \\
=0,
\end{equation}
\begin{equation} \label{3.55}
p_{\pm}(r_0)  = \mp 1,
\end{equation}
where $C_i$, $i=1,\ldots,4$, are positive constants. A similar analysis as conducted in \cite{SakovichThesis, Sakovich} applies to this system to yield $p_{\pm}=O(r^{-2})$. This
implies that the resulting barriers satisfy $\overline{f}_{\pm}=O(r^{-3})$. The remainder of the proof may be carried out as in \cite{SakovichThesis, Sakovich}.

%The only difficulty is
%making sure that the barriers constructed still work with the lower %order terms $\phi_{i}f_{j}$. However
%since $\phi_{i}=O(r^{-2})$, this should not be a problem. [[Of course %more details are needed here.]]

\end{proof}

\section{The Penrose Inequality}
\label{sec4} \setcounter{equation}{0}
\setcounter{section}{4}

The heuristic arguments of Penrose \cite{Penrose, Penrose1}, if assumed to be valid in the asymptotically AdS hyperbolic setting, lead to the following version of the Penrose inequality
\begin{equation}\label{4.1}
m\geq\sqrt{\frac{A}{16\pi}} + \frac{1}{2}\left( \sqrt{\frac{A}{4\pi}} \right)^{3},
\end{equation}
where $A$ is the minimum area required to enclose the outermost apparent horizon. The rigidity statement asserts that equality is achieved precisely for initial data which arise from an embedding into the Schwarzschild-AdS spacetime. Unfortunately the heuristic arguments, which depend crucially on black hole stability, are most likely not applicable in the AdS case since
the AdS spacetime itself is believed to be unstable \cite{Bizon,BizonRostworowski,DafermosHolzegel}. Thus, the inequality \eqref{4.1} which is not even known to be true for time symmetric data, is not particularly well-motivated.

Here we propose an alternate version of the Penrose inequality in the asymptotically AdS hyperbolic realm. Namely
\begin{equation}\label{4.1a}
m \geq \sqrt{\frac{A}{16\pi}},
\end{equation}
where $A$ is the minimum area required to enclose the outermost mean convex surface $S$ satisfying
\begin{equation}\label{4.1b}
\theta_{+}(S)\theta_{-}(S)=H_{S}^2-(Tr_{S}k)^2=4
\end{equation}
in an asymptotically AdS hyperbolic initial data set $(M,g,k)$ with dominant energy condition \eqref{5}; the rigidity statement here is identical with that of \eqref{4.1}. In the time symmetric case this reduces to the conjecture in \cite{Wang}. Furthermore, the structure of \eqref{4.1a} is analogous to the traditional Penrose inequality \cite{Bray,HuiskenIlmanen} within the context of the asymptotically flat setting. In fact the motivation for \eqref{4.1a} and \eqref{4.1b} arises from the asymptotically flat regime through a deformation procedure described below.

As in the previous section, we seek three deformations of the initial data which yield weak nonnegativity of the scalar curvature and preserve the mass in an appropriate sense. In the current context certain aspects of the boundary geometry must also be controlled throughout the process, in particular the mean curvature. It will be assumed that the initial data has a single component boundary consisting of an outermost mean convex surface satisfying \eqref{4.1b}.
The first step is to solve the generalized Jang equation \eqref{3.2} with a warping factor $u$ to be suitably chosen, where the solution $f$ and warping function satisfy the asymptotics \eqref{3.5}. According to Lemma \ref{lemma3.1} this yields a new time symmetric, asymptotically AdS hyperbolic, initial data set $(M_1,g_1)$ with the same mass $m_1=m$.
Moreover, in order to make contact with apparent horizons in the next deformation, geometric boundary conditions are imposed when solving the generalized Jang equation, namely
\begin{equation}\label{4.1c}
H_{\partial M_{1}}=2,\quad\quad\quad q(\nu_{g_{1}})=0\text{ }\text{ on }\text{ }\partial M_{1},
\end{equation}
where $q$ is defined in \eqref{3.4}. The property \eqref{4.1b} of $\partial M$ is chosen to facilitate the boundary conditions \eqref{4.1c}.  To see this consider the case of spherically symmetric data, where a direct calculation shows that
\begin{equation}
H_{\partial M_{1}}=\frac{H_{\partial M}}{\sqrt{1+(u\partial_{\nu_{g}}f)^2}},\quad\quad\quad
q(\nu_{g_{1}})=-\frac{(u\partial_{\nu_{g}}f)^2 H_{\partial M}}{\sqrt{1+(u\partial_{\nu_{g}}f)^2}}
+(u\partial_{\nu_{g}}f) Tr_{\partial M}k.
\end{equation}
If \eqref{4.1b} is valid, then the Neumann-type boundary condition $u\partial_{\nu_{g}}f=\frac{1}{2}Tr_{\partial M}k$ guarantees \eqref{4.1c}.

The second step in the deformation agrees with that in the previous section, and is given by \eqref{3.8}. According to Lemma \ref{lemma3.2} we obtain an asymptotically hyperboloidal data set $(M_{2}, g_{2}, k_{2})$ with preserved mass $m_{2}=m_{1}$. In addition, the boundary condition \eqref{4.1c} gives a past apparent horizon boundary $H_{\partial M_{2}}-Tr_{\partial M_{2}}k_{2}=0$.

The third and final step in the deformation procedure is a generalization of that which was given in the previous section. It yields a time symmetric, asymtotically flat
initial data set $(M_{3}, g_{3})$ which we now describe. Consider a graph $M_{3}=\{t=\widetilde{f}(x)\}$ embedded in the warped product 4-manifold $(M_{2} \times \mathbb{R}, g_{2} + \widetilde{u}^2 dt^2)$, where $\widetilde{f}$ satisfies the asymptotics
\eqref{3.12}, \eqref{3.13} and \eqref{3.14}, and the function $\widetilde{u}$ is nonnegative
with asymptotic expansion
\begin{equation}\label{4.5}
\widetilde{u}= 1 + \frac{\widetilde{u}_{0}}{r}+O_{3}\left(\frac{1}{r^{2-\varepsilon}}\right),
\end{equation}
for constants $\varepsilon>0$ and $\widetilde{u}_{0}$.
The induced metric on $M_{3}$ is given by $g_{3}=g_{2}+\widetilde{u}^{2}d\widetilde{f}^{2}$.
If the generalized Jang equation
\begin{equation}\label{4.2}
\left(g_{2}^{ij}-\frac{\widetilde{u}^{2}\widetilde{f}^{i}\widetilde{f}^{j}}{1+\widetilde{u}^{2} |\widetilde{\nabla} \widetilde{f}|_{g_{2}}^{2}}\right)
\left(\frac{\widetilde{u}\widetilde{\nabla}_{ij}\widetilde{f}
+\widetilde{u}_{i}\widetilde{f}_{j}+\widetilde{u}_{j}\widetilde{f}_{i}}
{\sqrt{1+\widetilde{u}^{2}|\widetilde{\nabla} \widetilde{f}|_{g_{2}}^{2}}}-(k_{2})_{ij}\right)=0
\end{equation}
is satisfied, then the scalar curvature $R_{3}$ of the Jang graph $(M_{3}, g_{3})$ is weakly nonnegative as follows
\begin{equation}\label{4.3}
R_{3}=2(\mu_{2}-J_{2}(\widetilde{w}))+
|\pi_{2}-k_{2}|_{g_{3}}^{2}+2|\widetilde{q}|_{g_{3}}^{2}
-2\widetilde{u}^{-1}\operatorname{div}_{g_{3}}(\widetilde{u} \widetilde{q}),
\end{equation}
where
$\pi_{2}$ is the second fundamental form of of the graph in the dual Lorentzian setting,
and $\widetilde{w}$ and $\widetilde{q}$ are 1-forms given by
\begin{equation}\label{4.4}
(\pi_{2})_{ij}
=\frac{ \widetilde{u} \widetilde{\nabla}_{ij}\widetilde{f}
+ \widetilde{u}_i \widetilde{f}_j +  \widetilde{u}_j  \widetilde{f}_i}{ \sqrt{1 + \widetilde{u}^2 |\widetilde{\nabla} \widetilde{f}|_{g_{2}}^2 }},
\text{ }\text{ }\text{ }\text{ }
\widetilde{w}_{i}=\frac{\widetilde{u} \widetilde{f}_{i}}{\sqrt{1+\widetilde{u}^{2}|\widetilde{\nabla} \widetilde{f}|_{g_{2}}^{2}}},\text{
}\text{ }\text{ }\text{ }
\widetilde{q}_{i}=\frac{\widetilde{u} \widetilde{f}^{j}\left( (\pi_{2})_{ij}-(k_{2})_{ij}\right)}{\sqrt{1+\widetilde{u}^{2}|\widetilde{\nabla} \widetilde{f}|_{g_{2}}^{2}}}.
\end{equation}
Here the covariant derivative $\widetilde{\nabla}$ is with respect to $g_{2}$.

When solving \eqref{4.2}, the appropriate geometric boundary condition to impose requires
the Jang graph to have a minimal surface boundary,
\begin{equation}\label{4.1d}
H_{\partial M_{3}}=0.
\end{equation}
This typically involves blow-up of the Jang equation at the apparent
horizon boundary present in $M_{2}$.
The existence, regularity, and blow-up behavior for solutions of the generalized Jang equation was studied in \cite{HanKhuri1}, and discussed in \cite{BrayKhuri1, BrayKhuri2}.
Moreover, in \cite{ChaKhuriSakovich} it was shown that the new data possess the desired asymptotics and that the mass is preserved up to a contribution from the warping factor.

\begin{lemma}\label{lemma4.1}
If $(M_{2},g_{2},k_{2})$ is asymptotically hyperboloidal and \eqref{3.12}-\eqref{3.14}, \eqref{4.5} are satisfied then the Jang metric $g_{3}=g_{2}+ \widetilde{u}^{2}d\widetilde{f}^{2}$ is asymptotically flat, and the mass of the Jang metric is given by  $m_{3}= 2m_{2} + \widetilde{u}_{0}$.
\end{lemma}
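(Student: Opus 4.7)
The plan is to bootstrap from Lemma~\ref{lemma3.3}, which already establishes the asymptotically flat form and mass $2m_{2}$ for the unwarped Jang graph $g_{3}^{(0)}:=g_{2}+d\widetilde{f}^{2}$. Writing
\[
g_{3}=g_{3}^{(0)}+h,\qquad h:=(\widetilde{u}^{2}-1)\,d\widetilde{f}^{2},
\]
the task reduces to (i) checking that the perturbation $h$ preserves asymptotic flatness, and (ii) evaluating the extra ADM flux contributed by $h$.

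For (i), I would combine the expansions \eqref{3.12}--\eqref{3.14} and \eqref{4.5} to record
\[
\widetilde{f}_{r}=1+O(r^{-1}),\quad \widetilde{f}_{\alpha}=\mathcal{B}_{\alpha}+O(r^{-1+\varepsilon}),\quad \widetilde{u}^{2}-1=\frac{2\widetilde{u}_{0}}{r}+O(r^{-2+\varepsilon}),
\]
from which each component $h_{rr}$, $h_{r\alpha}$, $h_{\alpha\beta}$ is of order $O(r^{-1})$ with derivatives decaying one order faster. Transforming to the Cartesian chart at infinity used in \cite{ChaKhuriSakovich} for $g_{3}^{(0)}$, the Jacobian factors $\partial r/\partial x^{i}=O(1)$ and $\partial\theta^{\alpha}/\partial x^{i}=O(r^{-1})$ convert these bounds to $|h_{ij}|=O(|x|^{-1})$ together with the appropriate decay of first and second derivatives, so that $g_{3}=g_{3}^{(0)}+h$ still satisfies \eqref{9}.

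For (ii), I would invoke linearity of the ADM integrand in the metric to split $m_{3}=m_{3}^{(0)}+\Delta m$ with $m_{3}^{(0)}=2m_{2}$. In the same Cartesian chart, $\widetilde{f}\sim|x|$, so $d\widetilde{f}^{2}$ agrees to leading order with $\frac{x^{i}x^{j}}{|x|^{2}}dx^{i}dx^{j}$, giving
\[
h\approx\frac{2\widetilde{u}_{0}\,x^{i}x^{j}}{|x|^{3}}\,dx^{i}dx^{j}.
\]
This is the standard Schwarzschild-type radial perturbation whose flux integral $(\partial_{i}h_{ij}-\partial_{j}h_{ii})\nu^{j}$ over $S_{\infty}$ evaluates to $16\pi\widetilde{u}_{0}$, so $\Delta m=\widetilde{u}_{0}$ and hence $m_{3}=2m_{2}+\widetilde{u}_{0}$.

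The main obstacle is the careful control of subleading pieces of $\widetilde{f}$---in particular the logarithmic term $\mathcal{A}\log r$ and the angular function $\mathcal{B}(\theta,\phi)$---under the change to Cartesian coordinates at infinity. One must verify that their interaction with the $\widetilde{u}_{0}/r$ correction either lands at subleading order in $|x|^{-1}$ or averages to zero against $\nu^{j}$ on the asymptotic sphere, so that only the clean Schwarzschild contribution survives in $\Delta m$. This is the same mechanism that produces the identity $m_{3}^{(0)}=2m_{2}$ in Lemma~\ref{lemma3.3}, extended to accommodate the warping factor, and should proceed by directly augmenting the asymptotic analysis carried out there.
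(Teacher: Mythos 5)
Your argument is correct, and it reaches the conclusion by a genuinely different organization than the paper, which gives no in-text proof but appeals to \cite{ChaKhuriSakovich}, where (as in the analogous computations \eqref{4.15}--\eqref{4.16} and Lemma \ref{lemma7.3}) the warped Jang metric $g_{3}=g_{2}+\widetilde{u}^{2}d\widetilde{f}^{2}$ is expanded directly in asymptotically flat coordinates and the mass $2m_{2}+\widetilde{u}_{0}$ is read off in one sweep. You instead treat the warping as a perturbation $h=(\widetilde{u}^{2}-1)d\widetilde{f}^{2}$ of the unwarped case already covered by Lemma \ref{lemma3.3} (which is a purely asymptotic statement, so it applies to $g_{2}+d\widetilde{f}^{2}$ whether or not $\widetilde{f}$ solves the unwarped Jang equation), and use linearity of the ADM flux integrand in a fixed chart to add the contribution of $h$; your evaluation of that contribution is right, since $\partial_{i}h_{ij}-\partial_{j}h_{ii}=\tfrac{4\widetilde{u}_{0}x_{j}}{|x|^{3}}$ for the leading piece, giving flux $16\pi\widetilde{u}_{0}$ and hence $\Delta m=\widetilde{u}_{0}$. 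The "main obstacle" you flag does close exactly as you suspect, and in fact more simply than you allow: because $\widetilde{f}_{r}=1+O(r^{-1})$ while the angular pieces $\mathcal{B}_{\alpha}d\theta^{\alpha}$ carry an extra factor $O(r^{-1})$ in Cartesian components, every interaction of $\widetilde{u}_{0}/r$ with $\mathcal{A}\log r$ or $\mathcal{B}$ enters $h_{ij}$ only at $O(|x|^{-2})$ with derivatives at $O(|x|^{-3})$, so these terms contribute nothing to the flux and no averaging over the sphere is needed for $h$ (the zero-mean mechanism is only relevant inside Lemma \ref{lemma3.3} itself, via \eqref{3.13}). The trade-off is that your route makes the origin of the $+\widetilde{u}_{0}$ term transparent as a Schwarzschild-type radial perturbation and reuses Lemma \ref{lemma3.3} as a black box, whereas the direct expansion of the cited reference verifies asymptotic flatness and the mass simultaneously without presupposing the unwarped case; both are sound, and your derivative bounds are available because \eqref{3.14} and \eqref{4.5} are assumed with $O_{4}$ and $O_{3}$ regularity.
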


Combining \eqref{3.3}, \eqref{3.11}, and \eqref{4.3} leads to
\begin{equation} \label{4.6}
\begin{split}
R_{3}&=2(\mu_{2}-J_{2}(\widetilde{w}))+
|\pi_{2}-k_{2}|_{g_{3}}^{2}+2|\widetilde{q}|_{g_{3}}^{2}
-2 \widetilde{u}^{-1}\operatorname{div}_{g_{3}}(\widetilde{u}\widetilde{q}) \\
&= 2(\mu-J(w))+
|\pi-k|_{g_{1}}^{2}
+|\pi_{2}-k_{2}|_{g_{3}}^{2}
+2|q|_{g_{1}}^{2}-2u^{-1}\operatorname{div}_{g_{1}}(u q)
+2|\widetilde{q}|_{g_{3}}^{2}
-2 \widetilde{u}^{-1}\operatorname{div}_{g_{3}}(\widetilde{u}\widetilde{q}) \\
&= 2(\mu-J(w))+
|\pi-k|_{g_{2}}^{2}
+|\pi_{2}-k_{2}|_{g_{3}}^{2}
+2|q|_{g_{2}}^{2}-2u^{-1}\operatorname{div}_{g_{2}}(u q)
+2|\widetilde{q}|_{g_{3}}^{2}
-2 \widetilde{u}^{-1}\operatorname{div}_{g_{3}}(\widetilde{u}\widetilde{q})
\end{split}
\end{equation}
where the final line arises from \eqref{3.8}. We will follow the arguments in \cite{BrayKhuri1,ChaKhuriSakovich} to motivate the choice of $u$ and $\widetilde{u}$.
Consider an inverse mean curvature flow (IMCF) $\{\widetilde{S}_{\tau}\}$ inside $(M_{3}, g_{3})$ emanating from the minimal boundary $\widetilde{S}_{0}=\partial M_{3}$. According to the arguments of \cite{BrayKhuri1, BrayKhuri2, ChaKhuriSakovich} together with \eqref{4.6},
\begin{equation}\label{4.8}
\begin{split}
M_{H}(\infty)-\sqrt{\frac{A}{16\pi}}
&\geq M_{H}(\infty)-M_{H}(0) \\
&\geq
-\frac{2}{(16\pi)^{3/2}}\int_{M_{3}}
\left[ u^{-1}\operatorname{div}_{g_{2}}(u q)
+\widetilde{u}^{-1}\operatorname{div}_{g_{3}}(\widetilde{u}\widetilde{q})\right]
\widetilde{H}_{\tau}\sqrt{|\widetilde{S}_{\tau}|} d\omega_{g_{3}}
\end{split}
\end{equation}
where $M_{H}$ denotes Hawking mass and $\widetilde{H}_{\tau}$, $|\widetilde{S}_{\tau}|$ are the mean curvature and area of $\widetilde{S}_{\tau}$, respectively. Recall that the relation \cite{BrayKhuri1, BrayKhuri2} between volume forms is given by
\begin{equation} \label{4.9}
d\omega_{g_{3}} = \sqrt{1 + \widetilde{u}^{2}|\widetilde{\nabla} \widetilde{f}|_{g_{2}}^{2}} d\omega_{g_{2}}.
\end{equation}
Thus, in order to apply the divergence theorem to \eqref{4.8} and avoid interior terms
we choose
\begin{equation}\label{4.7}
\widetilde{u}=\sqrt{\frac{|\widetilde{S}_{\tau}|}{16\pi}}\widetilde{H}_{\tau},\quad\quad\quad
u = \widetilde{u} \sqrt{1 + \widetilde{u}^{2}|\widetilde{\nabla} \widetilde{f}|_{g_{2}}^{2}}.
\end{equation}
It follows that
\begin{equation} \label{4.11}
M_{H}(\infty)-\sqrt{\frac{A}{16\pi}}
\geq-\frac{1}{8\pi} \left[
\int_{\overline{S}_{0}\cup \overline{S}_{\infty}} ug_{2}( q, \nu_{g_{2}} )
+ \int_{\widetilde{S}_{0}\cup \widetilde{S}_{\infty}}
\widetilde{u}g_{3}(\widetilde{q}, \nu_{g_{3}} ) \right],
\end{equation}
where $\nu_{g_{2}}$, $\nu_{g_{3}}$ are the unit outer normals with respect to
$g_{2}$ and $g_{3}$ respectively, and $\overline{S}_{\tau}$ denotes the natural projection of $\widetilde{S}_{\tau}$ to $M_{2}$.
Note that the choice of the warping factors $u$ and $\widetilde{u}$ in \eqref{4.7} gives rise to  a coupling of the generalized Jang equations \eqref{3.2}, \eqref{4.2} with the inverse mean curvature flow in $(M_{3}, g_{3})$. This is related to the Jang-IMCF system introduced in \cite{BrayKhuri1, BrayKhuri2}. Furthermore, it should be pointed out that \eqref{4.7} is consistent with the desired asymptotics. For instance from \eqref{3.7}, \eqref{3.8}, and \eqref{3.12}
\begin{equation}\label{4.15}
\begin{split}
|\widetilde{\nabla} \widetilde{f}|_{g_{2}}^{2}&=
\left( (1+r^{2}) -\frac{\mathbf{m}^{r} }{r} +O(r^{-2})\right)\left( \frac{r}{\sqrt{1+r^{2}}} + \frac{2m_{2}}{r}+O(r^{-2 + \varepsilon})\right)^{2} +O(r^{-2})  \\
&= r^{2} + 4m_{2}\sqrt{1+r^{2}} +O(r^{\varepsilon}).
\end{split}
\end{equation}
Moreover, it was shown in Section 3 of \cite{ChaKhuriSakovich} that
\begin{equation}\label{4.13}
\widetilde{u}_{0}= -M_{H}(\infty) = -m_{3}.
\end{equation}
Lemma \ref{lemma4.1} then implies that $m_2=m_3$. This together with \eqref{4.5} and \eqref{4.15} produces
\begin{equation} \label{4.16}
\begin{split}
u &= \widetilde{u} \sqrt{1 + \widetilde{u}^{2}|\widetilde{\nabla} \widetilde{f}|_{g_{2}}^{2}} \\
&= \left(1-\frac{m_{2}}{r} +O(r^{-2+\varepsilon}) \right)
\left[ 1 + \left(1-\frac{m_{2}}{r} +O(r^{-2+\varepsilon}) \right)^{2}
\left( r^{2} + 4m_{2}\sqrt{1+r^{2}} +O(r^{\varepsilon}) \right)  \right]^{1/2} \\
&= \sqrt{1+r^{2}}\left(1-\frac{m_{2}}{r} +O(r^{-2+\varepsilon}) \right)
\left(1+\frac{m_{2}}{r} +O(r^{-2+\varepsilon}) \right) \\
& = \sqrt{1+r^{2}} + O(r^{-1+ \varepsilon}),
\end{split}
\end{equation}
which is consistent with \eqref{3.5} and $u_{0}=0$.

\begin{theorem}\label{thm4.2}
Let $(M, g, k)$ be a $3$-dimensional, asymptotically AdS hyperbolic initial data set with a connected outermost mean convex boundary satisfying \eqref{4.1b}, and such that the dominant energy condition $\mu\geq|J|_{g}$ holds.
If the coupled system of equations \eqref{3.2}, \eqref{4.2}, and \eqref{4.7} admits a solution satisfying the asymptotics \eqref{3.5}, \eqref{3.12}-\eqref{3.14}, and \eqref{4.5}, with a weak IMCF in the sense of \cite{HuiskenIlmanen}, and such that the boundary conditions \eqref{4.1c}, \eqref{4.1d} are valid, then the Penrose inequality \eqref{4.1a} holds and if equality is achieved then the initial data arise from an embedding into the Schwarzschild-AdS spacetime.
\end{theorem}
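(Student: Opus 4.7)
The plan is to chain the three deformations established in Lemmas \ref{lemma3.1}, \ref{lemma3.2}, \ref{lemma4.1} with Huisken-Ilmanen's weak inverse mean curvature flow on the asymptotically flat layer $(M_3, g_3)$, exploiting the Jang-IMCF coupling \eqref{4.7} to eliminate the divergence obstruction present in \eqref{4.6}. After the three deformations, Lemma \ref{lemma4.1} gives $m_3 = 2m_2 + \widetilde{u}_0$, and the asymptotic identity \eqref{4.13} yields $\widetilde{u}_0 = -m_3$, so that $m_3 = m_2 = m_1 = m$. The boundary condition \eqref{4.1d} is realized via blow-up of \eqref{4.2} at the past apparent horizon $\partial M_2$, producing a cylindrical end in $(M_3, g_3)$ with minimal cross section of area $|\partial M_3|_{g_3} = |\partial M_2|_{g_2} = |\partial M_1|_{g_1}$ by the analysis of \cite{BrayKhuri1, BrayKhuri2, HanKhuri1}; combined with the Neumann-type condition in \eqref{4.1c} and the outer-minimizing property of $S$, this yields $|\partial M_3|_{g_3} \geq A$.

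Next I would run Huisken-Ilmanen's weak IMCF $\{\widetilde{S}_\tau\}$ on $(M_3, g_3)$ starting from $\widetilde{S}_0 = \partial M_3$. Geroch monotonicity applied to the scalar curvature decomposition \eqref{4.6} gives the chain of inequalities in \eqref{4.8}. The algebraic engine is that the coupling \eqref{4.7}, namely $\widetilde{u} = \sqrt{|\widetilde{S}_\tau|/(16\pi)}\,\widetilde{H}_\tau$ along the leaves together with $u = \widetilde{u}\sqrt{1 + \widetilde{u}^{2}|\widetilde{\nabla}\widetilde{f}|_{g_2}^2}$, converts the two interior integrands $\widetilde{u}^{-1}\operatorname{div}_{g_3}(\widetilde{u}\widetilde{q})\widetilde{H}_\tau\sqrt{|\widetilde{S}_\tau|}$ and $u^{-1}\operatorname{div}_{g_2}(uq)\widetilde{H}_\tau\sqrt{|\widetilde{S}_\tau|}$ into constant multiples of $\operatorname{div}_{g_3}(\widetilde{u}\widetilde{q})\,d\omega_{g_3}$ and, via the volume-form identity \eqref{4.9}, of $\operatorname{div}_{g_2}(uq)\,d\omega_{g_2}$. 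The divergence theorem then collapses the right-hand side of \eqref{4.8} to the boundary statement \eqref{4.11}.

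The third step is to dispose of the four boundary integrals in \eqref{4.11}. At infinity, the expansions \eqref{3.5}, \eqref{3.12}-\eqref{3.14}, \eqref{4.5}, \eqref{4.15}-\eqref{4.16}, combined with the computations of Section 3 of \cite{ChaKhuriSakovich}, force both $\int_{\overline{S}_\infty} u\, g_2(q,\nu_{g_2})$ and $\int_{\widetilde{S}_\infty} \widetilde{u}\, g_3(\widetilde{q},\nu_{g_3})$ to vanish. On the inner boundary, the Neumann-type condition $q(\nu_{g_1}) = 0$ in \eqref{4.1c} annihilates the first inner integrand pointwise on $\overline{S}_0 = \partial M_1$, while the minimality $H_{\partial M_3} = 0$ forces $\widetilde{u}|_{\widetilde{S}_0} \equiv 0$ through \eqref{4.7} and kills the second. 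Consequently $m = m_3 = M_H(\infty) \geq \sqrt{A/(16\pi)}$, which is precisely \eqref{4.1a}.

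For rigidity, trace equality backwards: $m = \sqrt{A/(16\pi)}$ forces Geroch monotonicity to be sharp, so every square term in \eqref{4.6} must vanish pointwise, yielding $\pi = k$, $\pi_2 = k_2$, $q = \widetilde{q} = 0$, $\mu = |J|$, and each leaf $\widetilde{S}_\tau$ round; therefore $(M_3, g_3)$ is a spatial Schwarzschild slice. Pushing back through \eqref{3.8} and repeating the equality analysis of Theorem \ref{thm3.4}, $(M_2, g_2, k_2)$ must be an umbilic hyperboloidal slice of Schwarzschild-AdS, and consequently $(M, g, k)$ embeds into Schwarzschild-AdS as the graph $t = f(x)$. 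The main obstacle in this program is not the bookkeeping of boundary integrals or mass aspects but rather the solvability and regularity of the fully coupled generalized Jang-IMCF system under the prescribed asymptotics \eqref{3.5}, \eqref{4.5} and boundary conditions \eqref{4.1c}, \eqref{4.1d} --- in particular, establishing proper horizon blow-up of \eqref{4.2} at $\partial M_2$ together with a weak IMCF whose leaves sustain the coupling \eqref{4.7} compatible with the required asymptotic control --- and this is precisely what the hypotheses of the theorem take as input.
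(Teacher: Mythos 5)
Your proposal follows essentially the same route as the paper: the three-step deformation with the mass chain $m=m_1=m_2=m_3$ via Lemma \ref{lemma4.1} and \eqref{4.13}, the Jang--IMCF coupling \eqref{4.7} collapsing the divergence terms of \eqref{4.6} into the boundary statement \eqref{4.11}, the vanishing of the four boundary integrals (via \eqref{4.1c}, \eqref{4.1d}, Appendix B of \cite{ChaKhuriSakovich}, and Lemma \ref{lemma9.1}), and rigidity by tracing the Schwarzschild slice back through the deformations exactly as in the paper. The only slip is cosmetic: in the equality case $(M_2,g_2,k_2)$ is a totally umbilic slice of the Schwarzschild spacetime (not Schwarzschild-AdS); it is $(M_1,g_1)$ that becomes the $t=0$ slice of Schwarzschild-AdS, from which $(M,g,k)$ embeds into Schwarzschild-AdS.
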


\begin{remark}
It should be possible to treat the case of multiple component boundaries by coupling the generalized Jang equations with Bray's conformal flow \cite{Bray}. In \cite{HanKhuri2} this has been carried out for the asymptotically flat setting.
\end{remark}

\begin{proof}
%In the Section 3 of \cite{ChaKhuriSakovich} it was shown that
%\begin{equation}\label{4.13}
%\widetilde{u}_{0}= -M_{H}(\infty) = -m_{3}.
%\end{equation}
%This, together with Lemma \ref{lemma4.1}, implies that $m_2=m_3$.
Lemmas \ref{lemma3.1} and \ref{lemma3.2} together with \eqref{4.13} yield $m=m_1=m_2=m_3$. Consider now the boundary terms of \eqref{4.11}.
Since $\widetilde{S}_{0}$ is a minimal surface, $\widetilde{u}$ vanishes there, and the
boundary condition \eqref{4.1c} shows that the integrand over $\overline{S}_{0}$ vanishes as well,
since $g_{2}=g_{1}$. Thus, both interior boundary integrals vanish.
Moreover, Appendix B of \cite{ChaKhuriSakovich} has established that the second integral at spatial infinity also vanishes. Finally, in Appendix A of the current paper it is proven that the first integral at null infinity vanishes. The desired inequality \eqref{4.1a} now follows.

Now consider the case of equality for \eqref{4.1a}. The arguments of
\cite{BrayKhuri1, BrayKhuri2, ChaKhuriSakovich} show that
$\widetilde{u} = \sqrt{1 -\frac{2m}{r}}$, and $(M_{3}, g_{3})$, $(M_{2}, g_{2}, k_{2})$ are respectively isometric to the $t=0$ slice and a graphical totally umbilic slice of the Schwarzschild spacetime. Uniqueness of such umbilical slices is expected\footnote{Note, however, that a proof does not appear to be in the literature.}, and would imply that $\widetilde{f} = \widetilde{f}(r)$ arises from the ODE
\begin{equation} \label{4.17}
\widetilde{f}' = \frac{r}{\left( 1-\frac{2m}{r}\right)\sqrt{r^{2}+ 1-\frac{2m}{r}}}.
\end{equation}
We then have
\begin{equation} \label{4.18}
u = \widetilde{u}\sqrt{1+ \widetilde{u}^{2}|\widetilde{\nabla} \widetilde{f}|_{g_{2}}^{2}}
= \frac{\widetilde{u}}{\sqrt{1- \widetilde{u}^{2}|\nabla\widetilde{f}|_{g_{3}}^{2}}}
= \sqrt{r^{2} + 1-\frac{2m}{r}},
\end{equation}
which is the warping factor of the Schwarzschild-AdS spacetime. It is straightforward to see that $(M_{1}, g_{1}=g_{2}=g_{3}- \widetilde{u}^{2}d \widetilde{f}^2)$ is then isometric to the $t=0$ slice of Schwarzschild-AdS. Therefore $(M, g, k)$ indeed emerges from an embedding into the Schwarzschild-AdS spacetime given by the solution of the generalized Jang equation \eqref{3.2}, after imitating the case of equality arguments in \cite{BrayKhuri1,BrayKhuri2}.
\end{proof}

\section{The Penrose Inequality with Charge}
\label{sec5} \setcounter{equation}{0}
\setcounter{section}{5}

Let $(M,g,k,E)$ be an asymptotically AdS hyperbolic initial data set for the Einstein-Maxwell equations as described in Section \ref{sec2}, with $E$ divergence free. For simplicity in this section, it will be assumed that the magnetic field vanishes $B=0$.
The deformations of the previous section may be employed to reduce a version of the charged
Penrose inequality in the asymptotically AdS hyperbolic setting to the asymptotically flat
regime. Thus we have a three step transformation procedure
\begin{equation}
(M,g,k,E)\rightarrow(M_1,g_1,E_1)\rightarrow
(M_2,g_2,k_2,E_2)\rightarrow(M_3,g_3,E_3)
\end{equation}
in which the masses agree $m=m_1=m_2=m_3$, and it remains to describe how the electric field changes in each deformation.

Following \cite{ChaKhuriSakovich, DisconziKhuri} we define
\begin{equation}
(E_{1})_{i}= \frac{E_i + u^2 f_i f^j E_j}{\sqrt{1 + u^2 |\nabla f|^2_{g}}},\quad\quad
(E_{2})_{i}=(E_{1})_{i},\quad\quad
(E_{3})_{i}= \frac{(E_2)_i + \widetilde{u}^2 \widetilde{f}_i \widetilde{f}^j (E_2)_j}{\sqrt{1 + \widetilde{u}^2 |\widetilde{\nabla}  \widetilde{f}|^2_{g_{2}}}}.
\end{equation}
The results of \cite{DisconziKhuri} then guarantee that
\begin{equation}\label{5.1}
|E|_{g}\geq|E_1|_{g_{1}}=|E_2|_{g_2}\geq|E_{3}|_{g_{3}},
\end{equation}
and
\begin{align}
\begin{split}
\operatorname{div}_{g_{3}}E_{3}=&(1+\widetilde{u}^{2}|\widetilde{\nabla} f|_{g_{2}}^{2})^{-1/2}\operatorname{div}_{g_{2}}E_{2}\\
=&(1+\widetilde{u}^{2}|\widetilde{\nabla} f|_{g_{2}}^{2})^{-1/2}\operatorname{div}_{g_{1}}E_{1}\\
=&(1+\widetilde{u}^{2}|\widetilde{\nabla} f|_{g_{2}}^{2})^{-1/2}
(1+u^{2}|\nabla f|_{g}^{2})^{-1/2}\operatorname{div}_{g}E=0.
\end{split}
\end{align}
Moreover a direct calculation using the asymptotics \eqref{6}, \eqref{3.5}, \eqref{3.12}-\eqref{3.14}, and \eqref{4.5} shows that the total charge is conserved through the deformation process
\begin{equation}\label{q}
\mathcal{Q}^{3}_{e}=\mathcal{Q}^{2}_{e}=\mathcal{Q}^{1}_{e}=\mathcal{Q}_{e}.
\end{equation}

In order to utilize the charged dominant energy condition \eqref{10}, it is necessary to relate the
non-electromagnetic matter field energy density of the transformed data to that of the original.
With the notation of previous sections and \eqref{3.3}, observe that a compuatation yields
\begin{equation} \label{5.6}
\begin{split}
2 \mu^{1}_{EM}
&= R_{1} -2 |E_{1}|^{2}_{g_{1}} -2 \Lambda   \\
&=2(\mu_{EM}-J_{EM}(w))+|\pi-k|_{g_{1}}^{2}+2|q|_{g_{1}}^{2}
-2u^{-1}\operatorname{div}_{g_{1}}(u q)
+2 (|E|^{2}_{g}- |E_{1}|^{2}_{g_{1}}),
\end{split}
\end{equation}
and trivially $\mu_{EM}^{2}=\mu_{EM}^{1}$.
Then the scalar curvature formula \eqref{4.6} together with \eqref{5.6} produces
\begin{equation}\label{5.0}
\begin{split}
2 \mu^{3}_{EM}
&= R_{3} -2 |E_{3}|^{2}_{g_{3}}   \\
&=2(\mu_{EM}-J_{EM}(w))
+|\pi-k|_{g_{1}}^{2}
+|\pi_{2}-k_{2}|_{g_{3}}^{2}
+2|q|_{g_{1}}^{2}-2u^{-1}\operatorname{div}_{g_{2}}(u q) \\
& \quad +2|\widetilde{q}|_{g_{3}}^{2}
-2 \widetilde{u}^{-1}\operatorname{div}_{g_{3}}(\widetilde{u}\widetilde{q})
 +2 (|E|^{2}_{g}- |E_{1}|^{2}_{g_{1}})
+2(|E_{2}|^{2}_{g_{2}}- |E_{3}|^{2}_{g_{3}}),
\end{split}
\end{equation}
which is weakly nonnegative in light of the charged dominant energy condition \eqref{10} and \eqref{5.1}.
In addition $J^1_{EM}=J^{2}_{EM}=J^3_{EM}=0$.

\begin{theorem}\label{thm5.1}
Let $(M, g, k, E)$ be a $3$-dimensional, asymptotically AdS hyperbolic initial data set for the Einstein-Maxwell equations with a connected outermost mean convex boundary satisfying \eqref{4.1b}, without charged matter, and such that the charged dominant energy condition $\mu_{EM}\geq|J_{EM}|_{g}$ holds.
If the coupled system of equations \eqref{3.2}, \eqref{4.2}, and \eqref{4.7} admits a solution satisfying the asymptotics \eqref{3.5}, \eqref{3.12}-\eqref{3.14}, and \eqref{4.5}, with a weak IMCF in the sense of \cite{HuiskenIlmanen}, and such that the boundary conditions \eqref{4.1c}, \eqref{4.1d} are valid, then
\begin{equation}\label{5.5}
m\geq\sqrt{\frac{A}{16\pi}}+\sqrt{\frac{\pi}{A}}\mathcal{Q}_{e}^2
\end{equation}
where $A$ is the minimum area required to enclose the boundary. Furthermore,
if equality is achieved then the initial data arise from an embedding into the Reissner-Nordstr\"{o}m-AdS spacetime.
\end{theorem}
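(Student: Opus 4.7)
The plan is to follow the proof of Theorem \ref{thm4.2} closely, with two modifications: the Hawking mass along the inverse mean curvature flow in $(M_3, g_3)$ is replaced by the charged Hawking mass, and the weak nonnegativity of $R_3$ is replaced by the weak nonnegativity of $R_3 - 2|E_3|_{g_3}^2$ arising from \eqref{5.0} together with \eqref{5.1} and the charged dominant energy condition \eqref{10}. Mass preservation follows from Lemmas \ref{lemma3.1}, \ref{lemma3.2}, \ref{lemma4.1} and \eqref{4.13}, giving $m = m_1 = m_2 = m_3$, while \eqref{q} yields $\mathcal{Q}_e^3 = \mathcal{Q}_e$. Since $E_3$ is divergence free on $(M_3, g_3)$, the electric flux through any surface homologous to $\partial M_3$ in $M_3$ equals $\mathcal{Q}_e$, so the charge contribution in the charged Hawking mass remains constant along the flow.

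Concretely, with the choices \eqref{4.7} of $u$ and $\widetilde{u}$, combining \eqref{5.0}, \eqref{5.1}, and \eqref{10} gives the weak pointwise bound
\begin{equation*}
R_3 - 2|E_3|_{g_3}^2 \geq -2u^{-1}\operatorname{div}_{g_2}(uq) - 2\widetilde{u}^{-1}\operatorname{div}_{g_3}(\widetilde{u}\widetilde{q}).
\end{equation*}
Next I would introduce the weak IMCF $\{\widetilde{S}_\tau\}$ in $(M_3, g_3)$ emanating from the minimal boundary $\widetilde{S}_0 = \partial M_3$ and the charged Hawking mass
\begin{equation*}
M_H^c(\tau) = \sqrt{\frac{|\widetilde{S}_\tau|}{16\pi}}\left(1 - \frac{1}{16\pi}\int_{\widetilde{S}_\tau}\widetilde{H}_\tau^2 + \frac{4\pi \mathcal{Q}_e^2}{|\widetilde{S}_\tau|}\right).
\end{equation*}
A Geroch-type monotonicity calculation as in \cite{DisconziKhuri, HuiskenIlmanen}, adapted along the lines of \eqref{4.8}--\eqref{4.11}, should yield
\begin{equation*}
M_H^c(\infty) - M_H^c(0) \geq -\frac{1}{8\pi}\left[\int_{\overline{S}_0 \cup \overline{S}_\infty} u\, g_2(q, \nu_{g_2}) + \int_{\widetilde{S}_0 \cup \widetilde{S}_\infty} \widetilde{u}\, g_3(\widetilde{q}, \nu_{g_3})\right].
\end{equation*}
All four boundary integrals can then be handled exactly as in Theorem \ref{thm4.2}: the integral over $\overline{S}_0$ vanishes by \eqref{4.1c} together with $g_2 = g_1$; the integral over $\widetilde{S}_0$ vanishes because $\widetilde{u}$ vanishes on the minimal surface; the integral at $\widetilde{S}_\infty$ vanishes by Appendix B of \cite{ChaKhuriSakovich}; and the integral at $\overline{S}_\infty$ vanishes by Appendix A of the present paper.

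To close the argument, I would evaluate the endpoints: the decay \eqref{6} together with $m_3 = m$ yields $M_H^c(\infty) = m$, and the minimality of $\widetilde{S}_0$ with the outer-minimizing-hull identification $|\widetilde{S}_0| = A$ gives $M_H^c(0) = \sqrt{A/(16\pi)} + \sqrt{\pi/A}\, \mathcal{Q}_e^2$, from which \eqref{5.5} follows. For the rigidity portion, saturation forces each nonnegative term in \eqref{5.0} and each boundary term to vanish, and the charged Hawking mass monotonicity to be an equality along the flow; the rigidity statement of the asymptotically flat charged Penrose inequality from \cite{DisconziKhuri} then identifies $(M_3, g_3, E_3)$ with the exterior region of the $t = 0$ slice of Reissner-Nordström with parameters $(m, \mathcal{Q}_e)$. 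Reversing the three deformations as in the rigidity argument of Theorem \ref{thm4.2} pins down $\widetilde{u}$ and $u$ as the lapses of an umbilical slice of Reissner-Nordström-AdS, and concludes that $(M, g, k, E)$ arises from an embedding into that spacetime.

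The main obstacle will be rigorously extending the Huisken-Ilmanen weak-IMCF monotonicity of the charged Hawking mass to accommodate the two divergence corrections in the bound for $R_3 - 2|E_3|_{g_3}^2$. Heuristically this proceeds as in the uncharged case, but one must adapt the jump formulas and test-function calculus of \cite{HuiskenIlmanen} to the charged setting with two weak corrections simultaneously; ultimately this should reduce to the identities already used in \eqref{4.8}--\eqref{4.11} together with charge conservation along the flow.
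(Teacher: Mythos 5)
Your proposal is correct and follows essentially the same route as the paper: the paper's own proof simply invokes the argument of Theorem \ref{thm4.2} with the Hawking mass replaced by the charged Hawking mass of \cite{DisconziKhuri} (using \eqref{5.0}, \eqref{5.1}, charge conservation \eqref{q}, and the same vanishing boundary integrals), and then runs the same rigidity reversal through the three deformations to land in Reissner--Nordstr\"om--AdS. Your write-up just fills in the monotonicity and endpoint details that the paper defers to \cite{DisconziKhuri}, and shares the same conditional caveats (weak IMCF coupling, uniqueness of umbilic slices) as the original.
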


\begin{proof}
The inequality \eqref{5.5} follows from the proof of Theorem \ref{thm4.2} by replacing the role of the Hawking mass by the so called charged Hawking mass \cite{DisconziKhuri}, which has appropriate monotonicty properties along IMCF under the charged dominant energy condition. Details may be found in \cite{DisconziKhuri}.

Consider now the case of equality in \eqref{5.5}.
According to \cite{DisconziKhuri} the data set $(M_{3}, g_{3}, E_{3})$ is isometric to the $t=0$ slice of Reissner-Nordstr\"{o}m spacetime, and
\begin{equation}
\widetilde{u} = \sqrt{1 - \frac{2m_{3}}{r} + \left(\frac{\mathcal{Q}_{e}^{3}}{r}\right)^{2}} = \sqrt{1 - \frac{2m}{r}+ \left(\frac{\mathcal{Q}_{e}}{r}\right)^{2}},\quad\quad
E_{3}= \frac{\mathcal{Q}_{e}}{r^{2}}\left(1-\frac{2m}{r}+ \left(\frac{\mathcal{Q}_{e}}{r}\right)^{2}\right)^{-\frac{1}{2}} dr.
\end{equation}
Then $(M_{2}, g_{2}, k_{2},E_2)$ is a graphical totally umbilic slice of the Reissner-Nordstr\"{o}m spacetime. As in the proof of Theorem \ref{thm4.2}, the
uniqueness of such umbilical slices would imply that $\widetilde{f} = \widetilde{f}(r)$ arises from the ODE
\begin{equation} \label{5.12}
\widetilde{f}' = \frac{r}{\left( 1-\frac{2m}{r} + \left(\frac{\mathcal{Q}_{e}}{r}\right)^{2}\right)
\sqrt{1-\frac{2m}{r}+\left(\frac{\mathcal{Q}_{e}}{r}\right)^{2}+r^2}}.
\end{equation}
We then have
\begin{equation} \label{5.13}
u = \widetilde{u}\sqrt{1+ \widetilde{u}^{2}|\widetilde{\nabla} \widetilde{f}|_{g_{2}}^{2}}
= \frac{\widetilde{u}}{\sqrt{1- \widetilde{u}^{2}|\nabla\widetilde{f}|_{g_{3}}^{2}}}
= \sqrt{1-\frac{2m}{r}+ \left(\frac{\mathcal{Q}_{e}}{r}\right)^{2}+r^2},
\end{equation}
which is the warping factor of the Reissner-Nordstr\"{o}m-AdS solution. Moreover
\begin{equation}
E_{1}=E_{2}
= \frac{E_3 - \widetilde{u}^2 E_3(\nabla \widetilde{f})d\widetilde{f}}
{\sqrt{1 - \widetilde{u}^2 |\nabla  \widetilde{f}|^2_{g_{3}}}}
= \frac{\mathcal{Q}_{e}}{r^{2}}\left(1-\frac{2m}{r} + \left(\frac{\mathcal{Q}_{e}}{r} \right)^{2}+r^2\right)^{-\frac{1}{2}} dr,
\end{equation}
and it is straightforward to see that $(M_{1}, g_{1}=g_{2}=g_{3}- \widetilde{u}^{2}d \widetilde{f}^2)$ is then isometric to the $t=0$ slice of Reissner-Nordstr\"{o}m-AdS. Therefore $(M, g, k,E)$ indeed emerges from an embedding into the Reissner-Nordstr\"{o}m-AdS spacetime given by the solution of the generalized Jang equation \eqref{3.2}, after imitating the case of equality arguments in \cite{BrayKhuri1,BrayKhuri2,DisconziKhuri}.
\end{proof}

\section{The Positive Mass Theorem with Charge}
\label{sec6} \setcounter{equation}{0}
\setcounter{section}{6}

%[[Nonexistence of MP-AdS likely due to lack of harmonic map structure, ie. no superposition %of static vacuum solutions. Physically the lack may be due to the fact that a negative %cosmological constant enhances gravity so it is hard to balance the electromagnetic and %gravitational forces.]]

%[[Need $H^2-Trk^2=4$ condition on initial data in order to get horizon in $M_2$. Then
%need $\tilde{f}$ to satisfy asymptotics as in \cite{KhuriWeinstein} to get cylindrical %ends.]]

%Extreme RN-AdS has a naked singularity and no horizon.\cite{HristovToldaVandoren}

Consider the setting of the previous section, without the divergence free assumption on
the electric field $E$. The three deformations of the previous section, leading to the asymptotically flat data $(M_{3},g_{3}, E_{3})$, will be employed here with the following
modifications. Instead of choosing boundary behavior for $\widetilde{f}$ to give rise to a minimal boundary for $M_3$ as in \eqref{4.1d}, here the boundary behavior of $\widetilde{f}$ will be prescribed to achieve:
\begin{equation}\label{bdry}
\text{$M_3$ has asymptotically cylindrical ends over the horizons of $M_2$.}
\end{equation}
This type of blow-up behavior for solutions of the classical Jang equation originated in the proof of the positive mass theorem by Schoen and Yau \cite{SchoenYau} in the asymptotically flat case, and has been studied and used in the context of the generalized Jang equation in \cite{ChaKhuriSakovich,HanKhuri1,KhuriWeinstein}. In addition, $\widetilde{u}$ will be chosen differently than in previous sections while $u$ will still be constructed from $\widetilde{u}$ according to \eqref{4.7}.

We now describe how to choose $\widetilde{u}$ following \cite{ChaKhuriSakovich}.
Let $\mathcal{S}$ denote the $SL(2,\mathbb{C})$ spinor bundle over $M_3$ with the Einstein-Maxwell \cite{GHHP} spin connection
\begin{equation}
   \nabla_{e_{i}}= \hat{\nabla}_{e_i}
   -\frac{1}{2} \, E_3\cdot e_{i}\cdot e_{0}\cdot,
\end{equation}
where $\cdot$ indicates Clifford multiplication, $\hat{\nabla}$ is the $g_{3}$ metric compatible connection on $\mathcal{S}$, $(e_1,e_2,e_3)$ is an orthonormal frame field for $M_3$ and $e_0$ is the unit normal to $M_3$ in the Lorentzian warped product $(M_3\times\mathbb{R}, g_3-\widetilde{u}^2 dt^2)$. Let $\psi_0$ be a given constant spinor of modulus 1, and consider a harmonic spinor
\begin{equation}\label{dirac}
  \slashed{D}\psi = \sum_{i=1}^3 e_i\cdot
  \nabla_{e_i} \psi=0,
\end{equation}
satisfying the asymptotic conditions
\begin{equation}\label{dirac1}
\psi\rightarrow\psi_0\text{ }\text{ as }\text{ }r\rightarrow\infty,\quad\quad\quad
\psi\rightarrow 0\text{ }\text{ along the cylindrical ends of $M_3$.}
\end{equation}
The Dirac equation is coupled to the generalized Jang equation \eqref{4.2} through the choice
\begin{equation}\label{psi}
\widetilde{u}=|\psi|^{2},\quad\quad\quad
u=\widetilde{u}\sqrt{1+\widetilde{u}^2|\widetilde{\nabla}\widetilde{f}|_{g_2}^2}.
\end{equation}
According to Lemma 5.1 of \cite{ChaKhuriSakovich}, under mild conditions on the asymptotics of $(M_{3}, g_{3}, E_{3})$, $\widetilde{u}$ defined by \eqref{psi} will satisfy the asymptotic condition \eqref{4.5}.

\begin{theorem}\label{thm6.1}
Let $(M, g, k, E)$ be a $3$-dimensional, asymptotically AdS hyperbolic initial data set
for the Einstein-Maxwell equations with an outermost mean convex boundary satisfying \eqref{4.1b}, and such that the charged dominant energy condition \eqref{10} holds. If the coupled system of equations \eqref{3.2}, \eqref{4.2}, \eqref{dirac}, and \eqref{psi} admits a solution satisfying the asymptotics \eqref{3.5}, \eqref{3.12}-\eqref{3.14}, and \eqref{4.5}, and such that the boundary conditions \eqref{4.1c}, \eqref{bdry}, \eqref{dirac1} are valid, then
\begin{equation}\label{masscharge}
m\geq|\mathcal{Q}_{e}|.
\end{equation}
Furthermore, if equality is achieved and $\partial M$ has only one component
then the initial data arise from an embedding into the BPS Reissner-Nordstr\"{o}m-AdS spacetime.
\end{theorem}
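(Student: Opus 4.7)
My plan is to combine the three deformation procedures from Sections \ref{sec3}--\ref{sec4} with a Witten--Dirac spinor argument on the resulting asymptotically flat Jang graph. Applying the deformations to the original data $(M,g,k,E)$ with the modified boundary prescription \eqref{bdry} produces an asymptotically flat data set $(M_3,g_3,E_3)$ with cylindrical ends over the original horizons; Lemmas \ref{lemma3.1}, \ref{lemma3.2}, and \ref{lemma4.1} then give $m = m_1 = m_2$ and $m_3 = 2m + \widetilde{u}_0$, while the argument of Section \ref{sec5} (which applies verbatim in the present $B=0$ setting) produces charge conservation $\mathcal{Q}_e^{(3)} = \mathcal{Q}_e$. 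The identity \eqref{5.0} together with the charged dominant energy condition \eqref{10} exhibits $R_3 - 2|E_3|^2$ as a sum of nonnegative matter and quadratic terms plus the two divergences $-2u^{-1}\operatorname{div}_{g_2}(uq)$ and $-2\widetilde{u}^{-1}\operatorname{div}_{g_3}(\widetilde{u}\widetilde{q})$.

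Next I would apply the Lichnerowicz--Witten identity for the Einstein--Maxwell spin connection to the harmonic spinor $\psi$ of \eqref{dirac}. Integration by parts against $\slashed{D}\psi = 0$ yields
\begin{equation*}
\int_{M_3}|\nabla\psi|^2\,d\omega_{g_3} + \tfrac{1}{4}\int_{M_3}(R_3 - 2|E_3|^2)|\psi|^2\,d\omega_{g_3} = \int_{\partial M_3}\mathcal{B}_{\mathrm{spin}},
\end{equation*}
where any additional $\operatorname{div}_{g_3}E_3$ contribution arising from charged matter is absorbed by the charged DEC term $\tfrac12|\operatorname{div}_g E|$ in \eqref{10}. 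By a standard GHHP computation on the asymptotically flat end, after selecting $\psi_0$ as an eigenspinor of $e_0\cdot$ with eigenvalue $\epsilon = \operatorname{sgn}(\mathcal{Q}_e)$, the boundary integral at $\widetilde{S}_\infty$ evaluates to a constant multiple of $m_3 - |\mathcal{Q}_e|$, while the cylindrical-end contribution vanishes since $\psi\to 0$.

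The key move is to substitute \eqref{5.0} into the bulk integral and exploit the coupling $\widetilde{u} = |\psi|^2$, $u = \widetilde{u}\sqrt{1+\widetilde{u}^2|\widetilde{\nabla}\widetilde{f}|_{g_2}^2}$ in order to convert the divergence contributions into pure boundary integrals. The algebraic identity $|\psi|^2\widetilde{u}^{-1} = 1$ and the volume-form relation \eqref{4.9} yield
\begin{equation*}
\int_{M_3}|\psi|^2\widetilde{u}^{-1}\operatorname{div}_{g_3}(\widetilde{u}\widetilde{q})\,d\omega_{g_3} = \int_{\partial M_3}\widetilde{u}\,g_3(\widetilde{q},\nu_3),
\qquad
\int_{M_3}|\psi|^2 u^{-1}\operatorname{div}_{g_2}(uq)\,d\omega_{g_3} = \int_{\partial M_2}u\,g_2(q,\nu_2).
\end{equation*}
The first boundary integral vanishes along the cylindrical ends (where $\widetilde{u}=|\psi|^2\to 0$) and at $\widetilde{S}_\infty$ contributes a constant multiple of $-m$ by the same computation that produced \eqref{3.23}; the second vanishes at $\overline{S}_0$ by \eqref{4.1c} (which forces $q(\nu_{g_2})=0$) and at $\overline{S}_\infty$ by the Appendix~A argument. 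Discarding the remaining nonnegative terms leaves an inequality relating $m$, $m_3$, and $|\mathcal{Q}_e|$. Invoking the Dirac-equation asymptotics $\psi = \psi_0 + \psi_1/r + O(r^{-2})$, which forces $\psi_1$ to depend linearly on $m_3\psi_0$ and $\mathcal{Q}_e\,e_0\cdot\psi_0$ and hence pins $\widetilde{u}_0 = 2\operatorname{Re}(\psi_0,\psi_1)$ in terms of $m_3$ and $\mathcal{Q}_e$, the relation $m_3 = 2m + \widetilde{u}_0$ collapses the inequality to $m \geq |\mathcal{Q}_e|$.

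For the rigidity, saturation of \eqref{masscharge} with $\partial M$ connected forces each nonnegative term in the chain to vanish: $\nabla\psi \equiv 0$ produces a parallel Einstein--Maxwell (Killing--BPS) spinor on $(M_3,g_3,E_3)$, pinning it down as an extremal Reissner--Nordstr\"om slice, while the vanishing of $|\pi-k|$, $|q|$, $|\widetilde{q}|$ and the electromagnetic differences permits one to unwind the three deformations exactly as in the proof of Theorem \ref{thm5.1} and recover an embedding of $(M,g,k,E)$ into the BPS Reissner--Nordstr\"om--AdS spacetime. The main obstacle will be the precise bookkeeping of boundary contributions: confirming that the value of $\widetilde{u}_0$ extracted from the Dirac-equation asymptotics combines with the GHHP spinor boundary term to cancel exactly the $2m$ contribution in $m_3 = 2m + \widetilde{u}_0$, leaving the clean bound $m \geq |\mathcal{Q}_e|$ rather than some weaker hybrid.
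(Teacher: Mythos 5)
Your outline follows essentially the same route as the paper: the paper proves \eqref{masscharge} by applying the asymptotically hyperboloidal mass--charge argument of \cite{ChaKhuriSakovich} to the intermediate data $(M_{2},g_{2},k_{2},E_{2})$, and that argument is exactly what you reconstruct --- the generalized Jang deformation to $(M_{3},g_{3},E_{3})$ with cylindrical ends, coupled to the Einstein--Maxwell Dirac equation through \eqref{psi}, the Lichnerowicz--Witten/GHHP identity with the charged dominant energy condition absorbing the $\operatorname{div}E$ terms, the conversion of the divergence pieces of \eqref{5.0} into boundary integrals via $|\psi|^{2}=\widetilde{u}$ and \eqref{4.9}, and the vanishing of those boundary integrals along the cylinders (exponential decay of $\widetilde{u}$, $u$), at the horizon via \eqref{4.1c}, and at infinity via Lemma \ref{lemma9.1}; the conclusion then follows from $m=m_{1}=m_{2}$ and \eqref{q}. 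Your rigidity discussion also matches the paper's: saturation forces $(M_{3},g_{3},E_{3})$ to be an extreme Reissner--Nordstr\"{o}m ($=$ one-component Majumdar--Papapetrou) slice, and the deformations are unwound exactly as in the proof of Theorem \ref{thm5.1}.

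The one genuine weak spot is the step you flag yourself: you close the bookkeeping by asserting that the Dirac-equation asymptotics force $\psi_{1}$ to be a linear expression in $m_{3}\psi_{0}$ and $\mathcal{Q}_{e}\,e_{0}\cdot\psi_{0}$, hence pin $\widetilde{u}_{0}=2\operatorname{Re}(\psi_{0},\psi_{1})$ as an explicit function of $m_{3}$ and $\mathcal{Q}_{e}$. That is not justified and in general not true: the subleading coefficient of a harmonic spinor is a globally determined quantity, not a function of the asymptotic charges alone (the cited Lemma 5.1 of \cite{ChaKhuriSakovich} only guarantees that $\widetilde{u}=|\psi|^{2}$ has the expansion \eqref{4.5}, with no formula for $\widetilde{u}_{0}$). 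The cancellation of the factor $2$ in $m_{3}=2m_{2}+\widetilde{u}_{0}$ does not come from such a formula; it comes from the evaluation of the flux of $\widetilde{u}\,\widetilde{q}$ at spatial infinity (the analogue of \eqref{3.23}, via Lemma 10.1 of \cite{ChaKhuriSakovich}), which, combined with the GHHP boundary term, yields the inequality directly at the level of the hyperboloidal mass $m_{2}$, with $\widetilde{u}_{0}$ dropping out of the final balance. This is precisely the bookkeeping the paper outsources to \cite{ChaKhuriSakovich} by stating the intermediate conclusion as $m_{2}\geq|\mathcal{Q}^{2}_{e}|$ rather than working with $m_{3}$; as written, your mechanism for obtaining the clean bound $m\geq|\mathcal{Q}_{e}|$ rather than a ``hybrid'' bound would fail and needs to be replaced by that boundary-integral computation.
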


\begin{remark}
In the asymptotically flat and asymptotically hyperboloidal setting, the analogous BPS bound \eqref{masscharge} is saturated only for the Majumdar-Papapetrou black holes \cite{ChaKhuriSakovich,KhuriWeinstein}, whose horizon may have multiple components. The current theorem only treats the rigidity statement for a single component boundary since it is not clear what spacetime should serve as the ground state when more than one component is present. In particular, there are no known analogues of the Majumdar-Papapetrou solutions in the AdS regime. This is perhaps due to the lack of solution generating techniques, as the axisymmetric stationary (or static) electrovacuum Einstein equations with nonvanishing cosmological constant do not reduce to a sigma model.
\end{remark}

\begin{proof}
In \cite{ChaKhuriSakovich}, the inequality \eqref{masscharge} is treated for asymptotically hyperboloidal data. Thus, with only minor modifications we may apply those arguments to $(M_{2}, g_{2}, k_{2}, E_2)$ to find $m_{2} \geq |Q^{2}_{e}|$. The only difference arises from the scalar curvature formulas \eqref{4.6} above, and (3.3) in \cite{ChaKhuriSakovich}.
In the present work an extra divergence term appears in the scalar curvature expression, which yields an extra boundary integral.
However, as explained in \cite{ChaKhuriSakovich} the choice \eqref{psi}
ensures that $\widetilde{u}$ (and $u$) exhibit exponential decay along the cylindrical ends
of $M_3$, so that the associated boundary integrals vanish.
The desired conclusion now follows from $m=m_{2}$ (Lemmas \ref{lemma3.1} and \ref{lemma3.2}) and $Q=Q_{2}$ \eqref{q}.

Consider now the case of equality in \eqref{masscharge} when $\partial M$ has one component. According to \cite{ChaKhuriSakovich} and \cite{KhuriWeinstein}, $(M_3,g_3,E_3)$ is then isometric to the $t=0$ slice of a Majumdar-Papapetrou spacetime; since the horizon has only one component this is equivalent to extreme Reissner-Nordstr\"{o}m. The proof of Theorem \ref{thm5.1} now shows that $(M,g,k,E)$ arises from an embedding into the Reissner-Nordstr\"{o}m-AdS solution with $m=\mathcal{Q}_e$, which we refer to as a BPS state.
\end{proof}

In the asymptotically flat and asymptotically hyperboloidal case the BPS bound
agrees with \eqref{masscharge}, and is saturated by a proper black hole, namely an
extreme Reissner-Nordstr\"{o}m black hole. However, in the AdS case the BPS ground state
(RN-AdS with $m=\mathcal{Q}_e$) has a naked singularity. This suggests the possibility of an alternative mass-charge inequality in the AdS regime which is saturated by a proper black hole. The most likely candidate is the extreme RN-AdS solution, in which the inner and outer horizons coincide. Let $m_{ext}$ denote the mass of such a black hole for a given charge, then we conjecture that the following mass-charge inequality holds, under the charged dominant energy condition, for asymptotically AdS hyperbolic initial data
\begin{equation}\label{extreme}
m\geq m_{ext}=\frac{1}{3\sqrt{6}}\left(\sqrt{1+12\mathcal{Q}_{e}^2}+2\right)
\left(\sqrt{1+12\mathcal{Q}_{e}^2}-1\right)^{1/2}.
\end{equation}
Note that $m_{ext}\geq m_{bps}:=|\mathcal{Q}_e|$ with equality if and only if $\mathcal{Q}_e=0$. Therefore \eqref{extreme} is a stronger inequality than \eqref{masscharge}.

\section{The Mass-Angular Momentum Inequality}
\label{sec7} \setcounter{equation}{0}
\setcounter{section}{7}

Based on the Jang-type deformations of the previous sections we propose here a new
relation between mass and angular momentum for asymptotically AdS hyperbolic initial
data, namely
\begin{equation}\label{7.0}
m\geq\sqrt{|\mathcal{J}|}.
\end{equation}
This inequality is respectively stronger/weaker for small/large angular momentum, when compared with the BPS bound
\begin{equation}\label{7.00}
m\geq|\mathcal{J}|
\end{equation}
established in \cite{ChruscielMaertenTod}. We conjecture that the optimal mass-angular momentum inequality in this setting is given by
\begin{equation}\label{7.000}
m\geq \frac{1}{3\sqrt{6}}\left[\sqrt{\left(1+\frac{\mathcal{J}^2}{m^2}\right)^2+
\frac{12\mathcal{J}^2}{m^2}}+2\left(1+\frac{\mathcal{J}^2}{m^2}\right)\right]
\left[\sqrt{\left(1+\frac{\mathcal{J}^2}{m^2}\right)^2+\frac{12\mathcal{J}^2}{m^2}}
-\left(1+\frac{\mathcal{J}^2}{m^2}\right)\right]^{1/2},
\end{equation}
which is saturated for extreme Kerr-AdS black holes \cite{CaldarelliKlemm}. If $m_{ext}$ denotes the right-hand side of \eqref{7.000}, then a computation shows that $m_{ext}> |\mathcal{J}|/m$ and $m_{ext}>|\mathcal{J}|$ unless $\mathcal{J}=0$. Therefore \eqref{7.000} implies both \eqref{7.0} and \eqref{7.00}. However it is not expected \cite{HristovToldaVandoren} that any proper black hole solution saturates the bounds \eqref{7.0} or \eqref{7.00}, since within the Kerr-AdS family such a state would result in a naked singularity.

In contrast to previous sections, here the initial data $(M,g,k)$ shall have two
asymptotic ends. One end $M_{end}^{+}$, from which the mass $m$ arises, is designated asymptotically AdS hyperbolic, whereas the other end $M_{end}^{-}$ is either asymptotically AdS hyperbolic or asymptotically cylindrical. In the asymptotically cylindrical case the end is diffeomorphic to $S^2\times [r_0,\infty)$ and with the coordinates given by this diffeomorphism, as $r\rightarrow 0$ the fall-off conditions are
\begin{equation}\label{7.1b}
\begin{split}
&g_{rr}= r^{-2} + o_{1}(r^{-\frac{3}{2}}), \text{ }\text{ }\text{ }\text{ }
g_{r \alpha}=o_{1}(r^{-\frac{1}{2}}),\text{ }\text{ }\text{ }\text{ }
g_{\alpha \beta}= \sigma_{\alpha \beta} + o_{1}(r^{\frac{1}{2}}), \\
&k_{rr}=O(r^{-2}), \text{ }\text{ }\text{ } \text{ }\text{ }\text{ }\text{ }\text{ }
k_{r \alpha}=O(r^{-1}), \text{ }\text{ }\text{ }\text{ }\text{ } \text{ }\text{ }
k_{\alpha \beta}=O(1),
\end{split}
\end{equation}
where $\sigma$ is the round metric and $\alpha$, $\beta$ are coordinates on the sphere. The asymptotic conditions for $k$ are chosen to facilitate the reduction argument.
Moreover, for later reference in this section and the next, we record the modified decay rates to be used in asymptotically flat ends, namely
\begin{equation}\label{7.1}
g_{ij} - \delta_{ij} = o_{2}(r^{-\frac{1}{2}}), \quad\quad\text{ }\text{ }\text{ }k_{ij}=O_{1}(r^{-\frac{5}{2}-\varepsilon})\text{ }\text{ }\text{ }\text{ }\text{ }\text{ as }\text{ }\text{ }\text{ }\text{ }\text{ }r \to \infty
\end{equation}
for some $\varepsilon>0$.

It will be assumed that $M$ is simply connected and that the data are
are axisymmetric. The later means
that a subgroup of the group of isometries is isomorphic to $U(1)$, and that all quantities associated with the initial data are invariant under this $U(1)$ action. In particular, if $\eta=\partial_{\phi}$ denotes the Killing field which generates the symmetry, then
\begin{equation}\label{7.2}
\mathfrak{L}_{\eta}g=\mathfrak{L}_{\eta}k=0,
\end{equation}
where $\mathfrak{L}_{\eta}$ denotes Lie differentiation. Axisymmetry allows for simple definition of angular momentum via the following Komar integral
\begin{equation}\label{7.3}
\mathcal{J}=\frac{1}{8\pi}\int_{S}(k_{ij}-(Tr_{g} k)g_{ij})\nu^{i}_{g}\eta^{j},
\end{equation}
where $\nu_{g}$ is the unit outer normal to any surface $S$ which is homologous to
the `sphere at infinity' associated with $M_{end}^{+}$. This definition is
well-defined (independent of the choice of $S$) as long as
\begin{equation}\label{7.4}
J_i\eta^{i}=0.
\end{equation}

We seek three deformations of the initial data such that $M_{end}^{+}$ is
transformed into an asymptotically flat end, and $M_{end}^{-}$ is transformed into
an asymptotically flat/asymptotically cylindrical end if it was originally
asymptotically AdS hyperbolic/asymptotically cylindrical.
Furthermore, the topology of $M$ as well as physical quantities such as mass and angular momentum will be preserved.
In the first deformation a new data set $(M_{1}, g_{1}, k_{1})$ is constructed based largely on the ideas of \cite{ChaKhuri1}. Consider a graph $M_{1}=\{t=f(x)\}$ embedded in the warped product stationary 4-manifold
\begin{equation}
(M \times \mathbb{R}, g + 2Y_{i}dx^{i}dt+(u^{2}-|Y|_{g_{1}}^{2}) dt^2),
\end{equation}
with induced metric
\begin{equation}
(g_{1})_{ij}=g_{ij}+f_{i}Y_{j}+f_{j}Y_{i}+(u^{2}-|Y|_{g_{1}}^{2}) f_{i}f_{j}.
\end{equation}
The new extrinsic curvature is defined to be the second fundamental form of the $t=0$ slice in the dual Lorentzian setting
\begin{equation}
(M_{1} \times \mathbb{R}, g_{1} - 2Y_{i}dx^{i}dt-(u^2 - |Y|^{2}_{g_{1}}) dt^2),
\end{equation}
and is given by
\begin{equation}\label{7.7}
(k_{1})_{ij}=\frac{1}{2u}\left(\overline{\nabla}_{i}Y_{j}+\overline{\nabla}_{j}Y_{i}\right),
\end{equation}
where $\overline{\nabla}$ is the Levi-Civita connection with respect to $g_{1}$.
Note also that $M=\{t=f(x)\}$ is embedded in this spacetime with induced metric $g$. Moreover, motivated by the structure of the Kerr-AdS spacetime, it will be assumed that $Y$ has a single component in the Killing direction
\begin{equation}\label{7.8}
Y^{i}\partial_{i}:=(g_{1})^{ij}Y_{j}\partial_{i}=Y^{\phi}\partial_{\phi}.
\end{equation}
This condition also guarantees that $g_1$ is a Riemannian metric \cite{ChaKhuri1}.
Thus, the first deformation is characterized by three functions $(u,Y^{\phi},f)$, which are axisymmetric
\begin{equation}\label{7.6}
\mathfrak{L}_{\eta}u=\mathfrak{L}_{\eta}Y^{\phi}=\mathfrak{L}_{\eta}f=0,
\end{equation}
and will be chosen appropriately below.

In order to have a well-defined angular momentum after deformation, \eqref{7.4} must continue to hold
\begin{equation}\label{7.9}
\operatorname{div}_{g_{1}}k_{1}(\eta)=0.
\end{equation}
This is a linear elliptic equation for $Y^{\phi}$, if $u$ and $f$ are independent of $Y^{\phi}$.  The total angular momentum will be preserved if the following asymptotics
hold in the designated AdS hyperbolic end
\begin{equation}\label{7.10}
Y^{\phi}=-\frac{2\mathcal{J}(\theta)}{r^{3}}+O_{3}(r^{-4})\text{ }\text{ }\text{ as }\text{ }\text{ }r\rightarrow\infty,
\end{equation}
for a function $\mathcal{J}(\theta)$ on $S^{2}$ such that
\begin{equation}\label{7.10'}
\int_{0}^{\pi}  \mathcal{J}(\theta)\sin^{3} \theta d\theta = \frac{4}{3}\mathcal{J}.
\end{equation}
As in \cite{ChaKhuri1,ChaKhuriSakovich}, asymptotics in the other end are given by
\begin{equation}\label{7.11}
Y^{\phi}=O_{3}(r^{5})\text{ }\text{ }\text{ in asymptotically AdS hyperbolic }M_{end}^{-},
\end{equation}
\begin{equation}\label{7.12}
Y^{\phi}=\mathcal{Y}+O_{3}(r)\text{ }\text{ }\text{ in asymptotically cylindrical }M_{end}^{-},
\end{equation}
where $\mathcal{Y}$ is a constant determined by the data. Here $r$ arises from Kelvin inversion of the coordinates in \eqref{1} and \eqref{2} in the asymptotically AdS case.
%The existence of solutions to \eqref{7.9} satisfying the desired asymptotics will be shown in %Theorem \ref{thm7.6} below, in the case that $u$ and $f$ are independent of $Y^{\phi}$.

Next we choose $f$ to satisfy the Jang-type equation
\begin{equation}\label{7.13}
g^{ij}\left(\frac{u\nabla_{ij}f+u_{i}f_{j}+u_{j}f_{i}}{\sqrt{1+u^{2}|\nabla f|_{g}^{2}}}-k_{ij}\right)=0\text{ }\text{ }\text{ }\text{ }\text{ }\Leftrightarrow\text{ }\text{ }\text{ }\text{ }\text{ }
\operatorname{div}_{g}(u^{2}\nabla f)=u(Tr_{g}k)\sqrt{1+u^{2}|\nabla f|_{g}^{2}},
\end{equation}
with fall-off as $r\rightarrow\infty$ in the designated asymptotically AdS hyperbolic end given by
\begin{equation}\label{7.14}
f =  o_{4}\left(r^{-3}\right),
\end{equation}
and asymptotics as $r\rightarrow 0$ satisfying
\begin{equation}\label{7.15}
f = O_{4}(r)\text{ }\text{ }\text{ in asymptotically AdS hyperbolic }M_{end}^{-},
\end{equation}
\begin{equation}\label{7.16}
\sum_{p=1}^{4}|\nabla^{p} f|_{g}=o(r^{\frac{1}{2}})\text{ }\text{ }\text{ in asymptotically cylindrical }M_{end}^{-}.
\end{equation}
Existence of solutions to \eqref{7.13} satisfying the desired asymptotics will be shown
in Theorem \ref{thm7.5} below, in the case that $u$ is independent of $f$.
The purpose of the Jang-type equation \eqref{7.13} is to guarantee weak nonnegativity
of matter energy density for the new data. More precisely it is shown in \cite{ChaKhuri1} that
\begin{equation}\label{7.17}
2\mu_{1}=R_{1}-|k_{1}|_{g_{1}}^{2}-2\Lambda= 2(\mu-J(v))+|k-\pi|_{g}^{2}+2u^{-1}\operatorname{div}_{g_{1}}(uQ),
\end{equation}
where
\begin{equation}\label{7.18}
\pi_{ij}=\frac{u\nabla_{ij}f+u_{i}f_{j}+u_{j}f_{i}+\frac{1}{2u}(g_{i\phi}Y^{\phi}_{,j}+g_{j\phi}Y^{\phi}_{,i})}{\sqrt{1+u^{2}|\nabla f|_{g}^{2}}}
\end{equation}
denotes extrinsic curvature of the graph in the Lorentzian setting,
\begin{equation}\label{7.19}
v^{i}=\frac{uf^{i}}{\sqrt{1+u^{2}|\nabla f|_{g}^{2}}},\text{ }\text{ }\text{ }\text{ }\text{ }\text{ }\quad\text{ }w^{i}=\frac{uf^{i}+u^{-1}Y^{\phi}\partial_{\phi}}{\sqrt{1+u^{2}|\nabla f|_{g}^{2}}},
\end{equation}
and
\begin{equation}\label{7.20}
Q_{i}=g_{1}^{jl}Y_{l}\overline{\nabla}_{ij}f-ug_{1}^{jl}f_{l}(k_{1})_{ij}+w^{j}(k-\pi)_{ij}+uf_{i}w^{l}w^{j}(k-\pi)_{lj}\sqrt{1+u^{2}|\nabla f|_{g}^{2}}.
\end{equation}
If the dominant energy condition \eqref{5} is valid, then $\mu_{1} \geq 0$ holds weakly in the sense that this quantity differs from a nonnegative function by a divergence term.

It remains to choose $u$. This will require knowledge of further deformations, and thus will be explained at the end. Here we simply note the appropriate asymptotics. As $r\rightarrow\infty$ in $M_{end}^{+}$ the desired expansion is
\begin{equation}\label{7.21}
u= \sqrt{1+r^{2}}+ u_{0} + O_{3}(r^{-1})
\end{equation}
for a constant $u_0$, and as $r\rightarrow 0$ the required asymptotics are given by
\begin{equation}\label{7.22}
u=r+o_{3}(r^{\frac{3}{2}})\text{ }\text{ }\text{ }\text{ in asymptotically AdS hyperbolic }M_{end}^{-},
\end{equation}
\begin{equation}\label{7.23}
u=r+o_{3}(r^{\frac{3}{2}})\text{ }\text{ }\text{ }\text{ in asymptotically cylindrical }M_{end}^{-}.
\end{equation}

\begin{lemma}\label{lemma7.1}
The initial data set $(M_1,g_1,k_1)$ is axially symmetric and maximal $Tr_{g_{1}}k_{1}=0$, and preserves the asymptotic geometry of $(M,g,k)$ as well as the mass and angular momentum $m_1=m$,
$\mathcal{J}_{1}=\mathcal{J}$.
%If $M_{end}^{+}$ is asymptotically AdS hyperbolic and \eqref{7.21}, \eqref{7.10}, and %\eqref{7.14} are satisfied then the data set $((M_{1})^{+}_{end}, g_{1}, k_{1})$ is %asymptotically AdS hyperbolic and maximal. Similarly, if $M_{end}^{-}$ is asymptotically %%AdS hyperbolic and \eqref{7.22}, \eqref{7.11}, and \eqref{7.15} are satisfied then the %data set $((M_{1})^{-}_{end}, g_{1}, k_{1})$ is also asymptotically AdS hyperbolic and %maximal. Moreover, the mass and the angular momentum of the new data are given by %$m_{1}=m$, $\mathcal{J}_{1}=\mathcal{J}$.
\end{lemma}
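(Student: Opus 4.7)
The plan is to verify the four assertions in order: axial symmetry, maximality, preservation of the asymptotic structure (together with the mass), and preservation of the angular momentum.

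Axial symmetry is immediate: the tensors $g_1$ and $k_1$ are built out of $g$, $k$, $u$, $Y^{\phi}$ and $f$ by algebraic operations and covariant differentiation with respect to $g_1$ itself, and by \eqref{7.2} and \eqref{7.6} every one of these ingredients is $\eta$-invariant, so $\mathfrak{L}_{\eta}g_1 = \mathfrak{L}_{\eta}k_1 = 0$. For maximality, observe that \eqref{7.7} says $2u\,k_1 = \mathfrak{L}_{Y}g_1$, whence $Tr_{g_1}k_1 = u^{-1}\operatorname{div}_{g_1}(Y)$. By \eqref{7.8} $Y = Y^{\phi}\eta$; since $\eta$ is Killing for $g_1$ (from Step 1) and $\eta(Y^{\phi})=0$ by axisymmetry, $\operatorname{div}_{g_1}(Y^{\phi}\eta) = \eta(Y^{\phi}) + Y^{\phi}\operatorname{div}_{g_1}(\eta) = 0$, so $Tr_{g_1}k_1 \equiv 0$.

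For the asymptotic geometry in $M^{+}_{\mathrm{end}}$, I would substitute the expansions \eqref{7.21}, \eqref{7.10}, \eqref{7.14} into the defining formula $(g_1)_{ij}-g_{ij} = f_i Y_j + f_j Y_i + (u^2 - |Y|_{g_1}^2) f_i f_j$ and check component-by-component that the correction decays strictly faster than the leading terms in \eqref{2}. Using $Y^{\phi}=O(r^{-3})$ one has $Y_{\phi}=(g_1)_{\phi\phi}Y^{\phi}=O(r^{-1})$, $Y_{r},Y_{\theta}=O(r^{-6})$, while $f_r = O(r^{-4})$, $f_{\theta}=O(r^{-3})$, $f_{\phi}=0$, so each component of $(g_1)_{ij}-g_{ij}$ falls off at a rate at least one power faster than the corresponding tensor in $a_{ij}$. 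Hence the mass aspect functions $\mathbf{m}^{g}_1 = \mathbf{m}^{g}$ and $\mathbf{m}^{r}_1 = \mathbf{m}^{r}$ agree, and $m_1 = m$ by \eqref{3.0}. A parallel but easier bookkeeping applies to $k_1 = \tfrac{1}{2u}\mathfrak{L}_{Y}g_1$ to verify \eqref{3}. In the other end the analogous verification uses \eqref{7.11}/\eqref{7.12}, \eqref{7.15}/\eqref{7.16}, \eqref{7.22}/\eqref{7.23}.

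For the angular momentum, first note that $\mathcal{J}_1$ computed via \eqref{7.3} is independent of the surface $S$ because $k_1$ is traceless and \eqref{7.9} holds. Evaluate on the coordinate sphere $S_r$ as $r\to\infty$: since $\eta$ is Killing for $g_1$, $\mathfrak{L}_{Y}g_1 = dY^{\phi}\otimes\eta^{\flat} + \eta^{\flat}\otimes dY^{\phi}$, so
\begin{equation*}
(k_1)_{ij}\eta^{j} = \frac{1}{2u}(\partial_i Y^{\phi})(g_1)_{\phi\phi}.
\end{equation*}
The outer unit normal is $\nu^{r} = \sqrt{1+r^{2}}+o(1)$ and $u=\sqrt{1+r^{2}}+o(1)$, so $(k_1)_{r\phi}\nu^{r} = \tfrac{1}{2}(\partial_{r}Y^{\phi})(g_1)_{\phi\phi}+o(r^{-2})$. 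Plugging in $\partial_{r}Y^{\phi} = 6\mathcal{J}(\theta)r^{-4}+O(r^{-5})$, $(g_1)_{\phi\phi} = r^{2}\sin^{2}\theta+O(r^{-1})$, together with the area element $r^{2}\sin\theta\,d\theta d\phi+o(1)$, gives an integrand $3\mathcal{J}(\theta)\sin^{3}\theta\,d\theta d\phi + o(1)$, and the normalization \eqref{7.10'} yields $\mathcal{J}_1 = \mathcal{J}$. The main obstacle is the bookkeeping in Step 3 — the decay estimates for $k_1$ require careful treatment of the Christoffel correction terms in $\bar\nabla_i Y_j$ against the hyperbolic background — while the angular momentum computation hinges on recognizing that the Killing property of $\eta$ collapses $\mathfrak{L}_Y g_1$ into a clean product, at which point \eqref{7.10'} is precisely engineered to produce $\mathcal{J}$.
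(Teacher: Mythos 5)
Your proposal is correct and follows essentially the same route as the paper: the asymptotics and mass are checked by the same component bookkeeping (the $Y$- and $f$-corrections decay below the mass aspect), the decay of $k_1$ and the flux computation of $\mathcal{J}_1$ rest on the identity $2u(k_1)_{ij}=(g_1)_{\phi i}\partial_j Y^{\phi}+(g_1)_{\phi j}\partial_i Y^{\phi}$ (the paper's \eqref{7.24}) together with the normalization \eqref{7.10'}, and the secondary end is handled by the analogous estimates. The only differences are cosmetic: you derive maximality directly from $Tr_{g_1}k_1=u^{-1}\operatorname{div}_{g_1}(Y^{\phi}\eta)=0$ where the paper cites \cite{ChaKhuri1}, and your $Y_{\theta}=O(r^{-6})$ should be $O(r^{-4})$ (since $(g_1)_{\theta\phi}=O(r^{-1})$), which is still fast enough and changes nothing.
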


\begin{proof}
Maximality is established in \cite{ChaKhuri1}, and axisymmetry is clear. Furthermore,
it follows from Lemma \ref{lemma3.1} that $((M_{1})^{+}_{end}, g_{1})$ is asymptotically AdS hyperbolic and that the mass is given by $m_{1}=m$.
A computation \cite[Lemma 2.1]{ChaKhuri1} shows that
\begin{equation} \label{7.24}
2u (k_{1})_{ij} = (g_{1})_{\phi i}\partial_{j}Y^{\phi} + (g_{1})_{\phi j}\partial_{i}Y^{\phi},
\end{equation}
which together with \eqref{7.10} and \eqref{7.21} yields
\begin{equation} \label{7.25}
(k_{1})_{rr}= O_{2}(r^{-8}), \text{ }\text{ }\text{ }\text{ }\text{ }\text{ }\text{ }\text{ }\text{ }\text{ }
(k_{1})_{r\alpha} = O_{2}(r^{-3}),\text{ }\text{ }\text{ }\text{ }\text{ }\text{ }\text{ }\text{ }\text{ }\text{ }
(k_{1})_{\alpha \beta} = O_{2}(r^{-2}),
\end{equation}
as $r\rightarrow\infty$. This shows that $k_{1}$ satisfies \eqref{3}.
In order to calculate the angular momentum use \eqref{1}, \eqref{2}, \eqref{7.10}, \eqref{7.10'}, \eqref{7.21}, and \eqref{7.24} to find
\begin{equation}\label{7.26}
\mathcal{J}_{1}
=\frac{1}{8\pi}\int_{\overline{S}_{\infty}}k_{1}(\nu_{g_{1}}, \eta)
= \frac{1}{16\pi}\int_{\overline{S}_{\infty}} \frac{g_{\phi \phi} \partial_{r}Y^{\phi}}{u|\partial_{r}|_{g_{1}}}
= \frac{3}{4} \int_{0}^{\pi}  \mathcal{J}(\theta)\sin^{3} \theta d\theta
 = \mathcal{J}.
\end{equation}

Consider now the case that $(M^{-}_{end}, g, k)$ is asymptotically AdS hyperbolic. Kelvin inversion then implies the asymptotics
\begin{equation} \label{7.230}
g_{rr}=\frac{1}{r^{2}} - 1 + \mathbf{m}^{r}r + O_{3}(r^{2}),\text{ }\text{ }\text{ }\text{ }\text{ }\text{ }\text{ }\text{ }
g_{r \alpha} = O_{3}(r),\text{ }\text{ }\text{ }\text{ }\text{ }\text{ }\text{ }\text{ }
g_{\alpha \beta} = \frac{\sigma_{\alpha \beta}}{r^{2}} + \mathbf{m}^{g}_{\alpha \beta} r + O_{3}(r^{2}),
\end{equation}
as $r\rightarrow 0$.  This together with \eqref{7.11}, \eqref{7.15}, and \eqref{7.22} yields
\begin{equation} \label{7.232}
\begin{split}
(g_{1})_{rr}
&= g_{rr} + 2 Y_{r}f_{r} + \left(u^{2} - |Y|_{g_{1}}^{2} \right)f_{r}^{2} \\
&= g_{rr} + 2 g_{r \phi}Y^{\phi}f_{r} + \left(u^{2} + (Y^{\phi})^{2}g_{\phi \phi} \right)f_{r}^{2} \\
&= \frac{1}{r^{2}} - 1 + \mathbf{m}^{r}r + O_{3}(r^{2}),
\end{split}
\end{equation}
\begin{equation} \label{7.233}
\begin{split}
(g_{1})_{r \alpha}
&= g_{r \alpha} + g_{\alpha \phi}Y^{\phi}f_{r} + g_{r \phi}Y^{\phi}f_{\alpha} + \left(u^{2} + (Y^{\phi})^{2}g_{\phi \phi} \right)f_{r}f_{\alpha} \\
&= O_{3}(r),
\end{split}
\end{equation}
\begin{equation} \label{7.234}
\begin{split}
(g_{1})_{\alpha \beta}
&= g_{\alpha \beta} + g_{\beta \phi}Y^{\phi}f_{\alpha} + g_{\alpha \phi}Y^{\phi}f_{\beta} + \left(u^{2} + (Y^{\phi})^{2}g_{\phi \phi} \right)f_{\alpha}f_{\beta} \\
&=\frac{\sigma_{\alpha \beta}}{r^{2}} + \mathbf{m}^{g}_{\alpha \beta} r + O_{3}(r^{2}).
\end{split}
\end{equation}
This shows that $((M_{1})^{-}_{end}, g_{1})$ satisfies \eqref{1} and \eqref{2}. In addition, from
\eqref{7.24} we find that
\begin{equation} \label{7.251}
(k_{1})_{rr}= O_{2}(r^{4}), \text{ }\text{ }\text{ }\text{ }\text{ }\text{ }\text{ }\text{ }\text{ }\text{ }
(k_{1})_{r\alpha} = O_{2}(r),\text{ }\text{ }\text{ }\text{ }\text{ }\text{ }\text{ }\text{ }\text{ }\text{ }
(k_{1})_{\alpha \beta} = O_{2}(r^{2}),
\end{equation}
which shows that $k_1$ satisfies \eqref{3} on $(M_{1})^{-}_{end}$. Similar considerations
are used in the asymptotically cylindrical case.
\end{proof}

We will now proceed to the second deformation which is based on a procedure introduced in \cite{ChruscielTod}. The goal is to construct a constant mean curvature (CMC) asymptotically hyperboloidal data set, denoted $(M_{2}, g_{2}, k_{2})$, from the maximal asymptotically AdS data $(M_{1}, g_{1}, k_{1})$. In order to accomplish this define
\begin{equation} \label{7.27}
(M_{2}, g_{2}) \equiv (M_{1}, g_{1}), \text{ }\text{ }\text{ }\text{ }\text{ }\text{ }\text{ }\text{ }\text{ }\text{ }\text{ }\text{ }\text{ }\text{ }
(k_{2})_{ij} = (k_{1})_{ij} + \sqrt{\frac{-\Lambda}{3}} (g_{1})_{ij}.
\end{equation}
Observe that
\begin{equation}\label{7.31}
Tr_{g_{2}} k_{2} = Tr_{g_{1}} k_{1} + 3 = -\Lambda, \quad\quad\quad
|k_{2}|_{g_{2}}^{2} = |k_{1}|_{g_{1}}^{2} + 2 Tr_{g_{1}} k_{1} +  3 = |k_{1}|_{g_{1}}^{2} - \Lambda,
\end{equation}
and therefore
\begin{equation} \label{7.32}
\begin{split}
& 2\mu_{2} = R_{2} + (Tr_{g_{2}} k_{2})^{2} - |k_{2}|_{g_{2}}^{2}
= R_{1} - |k_{1}|_{g_{1}}^{2} -2 \Lambda = 2 \mu_{1} ,  \\
& J_{2} = \operatorname{div}_{g_{2}}(k_{2}- (Tr_{g_{2}} k_{2}) g_{2}) = J_{1}.
\end{split}
\end{equation}
In particular, this shows that the condition for a well-defined angular momentum
is satisfied, namely
\begin{equation} \label{7.33}
J_{2} (\eta) = J_{1} (\eta) = 0,
\end{equation}
according to \eqref{7.9}

\begin{lemma}\label{lemma7.2}
The initial data set $(M_2,g_2,k_2)$ is CMC and axially symmetric with two ends, one designated asymptotically hyperboloidal $(M_{2})_{end}^{+}$ and the other $(M_{2})_{end}^{-}$ either asymptotically hyperboloidal or asymptotically cylindrical depending on whether $(M_{1})_{end}^{-}$ is asymptotically AdS hyperbolic or asymptotically cylindrical. Furthermore the mass and angular momentum are preserved, that is $m_{2}=m_{1}$ and $\mathcal{J}_{2}=\mathcal{J}_{1}$.
\end{lemma}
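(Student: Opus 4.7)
The plan is to verify each of the four claims by direct computation, leveraging that $g_2 = g_1$ and $k_2 = k_1 + g_1$, and relying on the detailed asymptotic information established in Lemma \ref{lemma7.1}. The strategy parallels that of Lemma \ref{lemma3.2}, but one must now track the nontrivial contributions of $k_1$.

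First I would observe that the CMC property $Tr_{g_2}k_2 = 3$ is already recorded in \eqref{7.31}, and axial symmetry is inherited from the axisymmetry of both $k_1$ and $g_1$. For the asymptotic geometry at $(M_2)_{end}^+$, the metric $g_2 = g_1$ satisfies \eqref{1}, \eqref{2} by Lemma \ref{lemma7.1}; to check \eqref{3} I would write $k_2 - g_0 = k_1 + (g_1 - g_0)$ and use the improved decay \eqref{7.25}, in particular the fact that $(k_1)_{\alpha\beta} = O_2(r^{-2})$ with no $1/r$ leading term. Consequently the $1/r$ part of the mass aspect tensor of $k_2 - g_0$ comes entirely from $g_1 - g_0$, giving $\mathbf{m}^k_2 = \mathbf{m}^g_1 = \mathbf{m}^g_2$ and $\mathbf{m}^r_2 = \mathbf{m}^r_1$. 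Substituting into \eqref{3.0'} and matching against \eqref{3.0},
\begin{equation*}
m_2 = \frac{1}{16\pi}\int_{S^2}\left[Tr_\sigma(\mathbf{m}^g_1 + 2\mathbf{m}^g_1) + 2\mathbf{m}^r_1\right] = \frac{1}{16\pi}\int_{S^2}\left[3 Tr_\sigma \mathbf{m}^g_1 + 2\mathbf{m}^r_1\right] = m_1.
\end{equation*}

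Next I would address angular momentum. Since $k_2 - (Tr_{g_2}k_2)g_2 = k_1 - 2g_1$ and $(M_1, g_1, k_1)$ is maximal by Lemma \ref{lemma7.1},
\begin{equation*}
\mathcal{J}_2 = \mathcal{J}_1 - \frac{1}{4\pi}\int_{\overline{S}_\infty} g_1(\nu_{g_1}, \eta).
\end{equation*}
The extra boundary integral vanishes pointwise because $\eta = \partial_\phi$ is tangent to the level sets of $r$: writing $\nu_{g_1}^i = g_1^{ij}\partial_j r / |\nabla r|_{g_1}$, one finds $g_1(\eta, \nu_{g_1}) = \eta^j \partial_j r / |\nabla r|_{g_1} = \eta(r)/|\nabla r|_{g_1} = 0$.

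The hard part will be the asymptotic analysis at the inner end $(M_2)_{end}^-$. In the asymptotically AdS hyperbolic case I would use the Kelvin-inverted asymptotics \eqref{7.230}--\eqref{7.234} together with the decay \eqref{7.251} and repeat the mass-aspect analysis of the outer end to identify $(M_2)_{end}^-$ as asymptotically hyperboloidal. In the asymptotically cylindrical case governed by \eqref{7.1b}, the metric $g_2 = g_1$ trivially remains cylindrical, and one need only verify that $k_2 = k_1 + g_1$ lies in the appropriate decay class; this follows directly from $(g_1)_{rr} = r^{-2} + o_1(r^{-3/2})$, $(g_1)_{r\alpha} = o_1(r^{-1/2})$, and $(g_1)_{\alpha\beta} = \sigma_{\alpha\beta} + o_1(r^{1/2})$ matching the required bounds on $k$ in \eqref{7.1b}. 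The real subtlety throughout is that the nonvanishing $k_1$, unlike in the time-symmetric Lemma \ref{lemma3.2}, could in principle have spoiled mass preservation or the Komar angular momentum; both pitfalls are avoided only because of the improved decay of $k_1$ enforced by the axisymmetric Jang-type construction in Lemma \ref{lemma7.1}.
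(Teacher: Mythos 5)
Your proposal is correct and follows essentially the same route as the paper: CMC and axisymmetry are immediate, the outer-end asymptotics and mass come from $k_2-g_0=k_1+(g_1-g_0)$ together with the decay \eqref{7.25} so that $\mathbf{m}^k_2=\mathbf{m}^g_1$ and \eqref{3.0'} reproduces \eqref{3.0}, the angular momentum is unchanged because the $-2g_1$ term in $k_2-(Tr_{g_2}k_2)g_2=k_1-2g_1$ pairs $\nu_{g_1}$ with $\eta$ tangent to the axisymmetric sphere and so integrates to zero (this is exactly the paper's remark that the normal is perpendicular to $\eta$), and the inner end is handled via \eqref{7.230}--\eqref{7.234}, \eqref{7.251} and the cylindrical fall-off \eqref{7.1b}. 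No gaps beyond the level of detail the paper itself leaves implicit.
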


\begin{proof}
It is clear that the new data set is CMC and axisymmetric. Since $\Lambda=-3$,
we have $k_2=k_1+g_1$ so that the decay rates for $g_{1}$ in \eqref{3.7} and for $k_{1}$ in \eqref{7.25} yield
\begin{equation} \label{7.29}
 (k_{2})_{rr} = \frac{1}{1+r^{2}} + \frac{\mathbf{m}^{r} }{r^{5}} + O_{2}(r^{-6}),  \text{ }\text{ }\text{ }\text{ }\text{ }\text{ }
 (k_{2})_{r\alpha} = O_{2}(r^{-3}),
 \text{ }\text{ }\text{ }\text{ }\text{ }\text{ }
 (k_{2})_{\alpha \beta} = r^{2} \sigma_{\alpha \beta}+ \frac{\mathbf{m}^{g}_{\alpha\beta} }{r}+ O_{2}(r^{-2}),
\end{equation}
as $r\rightarrow\infty$. This shows that $(M_{2})_{end}^{+}$ is asymptotically hyperboloidal. The mass is then given by
\begin{equation} \label{7.30}
m_{2}
= \frac{1}{16\pi} \int_{S^{2}} \left[ Tr_{\sigma}\left( \mathbf{m}^{g}_{2} + 2\mathbf{m}^{k}_{2}\right) + 2\mathbf{m}^{r}_{2} \right]
= \frac{1}{16\pi} \int_{S^{2}} 3 Tr_{\sigma} \mathbf{m}^{g} + 2\mathbf{m}^{r} =m_{1}.
\end{equation}
In addition, angular momentum is preserved $\mathcal{J}_{2}=\mathcal{J}_{1}$ since the normal vector of an axisymmetric $2$-surface is perpendicular to the killing vector $\eta$. Lastly, the desired asymptotics for $(M_2)_{end}^{-}$
follow from \eqref{7.232}-\eqref{7.234} and \eqref{7.251} in the asymptotically hyperboloidal case; similar considerations may be used in the asymptotically cylindrical case.
\end{proof}

The third and final deformation is similar to the first, however here the asymptotically
hyperboloidal ends of $M_2$ will be transformed into asymptotically flat ends.
Consider a graph $M_{3}=\{t=\widetilde{f}(x)\}$ embedded in the warped product 4-manifold \begin{equation}
(M_{2} \times \mathbb{R}, g_{2} + 2\widetilde{Y}_{i}dx^{i}dt+(\widetilde{u}^{2}-|\widetilde{Y}|_{g_{3}}^{2}) dt^2).
\end{equation}
The new data $(M_{3}, g_{3}, k_{3})$ arise from the induced metric and second fundamental form of the $t=0$ slice in the dual Lorentzian setting, that is
\begin{equation}\label{7.34}
(g_{3})_{ij}=(g_{2})_{ij}+ \widetilde{Y}_{i}\widetilde{f}_{j} + \widetilde{Y}_{j}\widetilde{f}_{i} + (\widetilde{u}^{2}-|\widetilde{Y}|_{g_{3}}^{2}) \widetilde{f}_{i}\widetilde{f}_{j},\text{ }\text{ }\text{ }\text{ }\text{ }\text{ }\text{ }
(k_{3})_{ij}=\frac{1}{2\widetilde{u}}\left(\hat{\nabla}_{i}\widetilde{Y}_{j}+\hat{\nabla}_{j}\widetilde{Y}_{i}\right),
\end{equation}
where $\hat{\nabla}$ is the Levi-Civita connection associated to the metric $g_{3}$. As in the first deformation $\widetilde{Y}$ is set to have a single component in the Killing direction
\begin{equation}\label{7.35}
\widetilde{Y}^{i}\partial_{i}:=(g_{3})^{ij}\widetilde{Y}_{j}\partial_{i}=\widetilde{Y}^{\phi}\partial_{\phi}.
\end{equation}
Thus, the third deformation is characterized by three functions $(\widetilde{u}, \widetilde{Y}^{\phi}, \widetilde{f})$, which are axisymmetric
\begin{equation}
\mathfrak{L}_{\eta}\widetilde{u}=\mathfrak{L}_{\eta}\widetilde{Y}^{\phi}
=\mathfrak{L}_{\eta}\widetilde{f}=0,
\end{equation}
and will be chosen appropriately below.

In analogy with \eqref{7.9} the following equation will be imposed which allows for the
existence of a twist potential and well-defined angular momentum
\begin{equation}\label{7.36}
\operatorname{div}_{g_{3}}k_{3}(\eta)=0.
\end{equation}
This is a linear elliptic equation for $\widetilde{Y}^{\phi}$ if the functions $u$, $\widetilde{u}$, $Y^{\phi}$, $f$, and $\widetilde{f}$ are fixed. Existence and uniqueness in this case has been treated in \cite{ChaKhuri1} and \cite{ChaKhuriSakovich} under the following asymptotics
\begin{equation}\label{7.360}
\widetilde{Y}^{\phi}=-\frac{2\mathcal{J}_{2}}{r^{3}}+o_{2}(r^{-\frac{7}{2}})\text{ }\text{ }\text{ as }\text{ }\text{ }r\rightarrow\infty\text{ }\text{ }\text{ in }\text{ }\text{ }(M_2)_{end}^{+},
\end{equation}
\begin{equation}\label{7.361}
\widetilde{Y}^{\phi}=O_{2}(r^{5})\text{ }\text{ }\text{ as }\text{ }\text{ }r\rightarrow 0\text{ }\text{ }\text{ in asymptotically hyperboloidal }(M_{2})_{end}^{-},
\end{equation}
and
\begin{equation}\label{7.362}
\widetilde{Y}^{\phi}=O_{2}(r)\text{ }\text{ }\text{ as }\text{ }\text{ }r\rightarrow 0\text{ }\text{ }\text{ in asymptotically cylindrical }(M_{2})_{end}^{-}.
\end{equation}
%where $\widetilde{\mathcal{Y}}$ is a constant determined by the data.

Next, choose $\widetilde{f}$ to satisfy the following Jang-type equation (see \eqref{7.13})
\begin{equation}\label{7.37}
g_{2}^{ij}\left(\frac{\widetilde{u} \widetilde{\nabla}_{ij} \widetilde{f}
+ \widetilde{u}_{i}\widetilde{f}_{j}
+ \widetilde{u}_{j}\widetilde{f}_{i}}{\sqrt{1+\widetilde{u}^{2}|\widetilde{\nabla}  \widetilde{f}|_{g_{2}}^{2}}}- (k_{2})_{ij}\right)=0,
\end{equation}
where $\widetilde{\nabla}$ is the Levi-Civita connection associated with the metric $g_{2}$. According to \cite{ChaKhuri1} this equation imparts an advantageous structure
to the matter energy density. More precsiely
\begin{equation}\label{7.38}
2\mu_{3} = R_{3}-|k_{3}|_{g_{3}}^{2} = 2(\mu_{2}-J_{2}(\widetilde{v}))
+|k_{2}- \pi_{2}|_{g_{2}}^{2}
+2\widetilde{u}^{-1}\operatorname{div}_{g_{3}}(\widetilde{u}\widetilde{Q}),
\end{equation}
where
\begin{equation}\label{7.39}
(\pi_{2})_{ij}=\frac{ \widetilde{u} \widetilde{\nabla}_{ij}f
+\widetilde{u}_{i}\widetilde{f}_{j}
+\widetilde{u}_{j}\widetilde{f}_{i}
+\frac{1}{2\widetilde{u}}\left((g_{2})_{i\phi}\widetilde{Y}^{\phi}_{,j}+(g_{2})_{j\phi}\widetilde{Y}^{\phi}_{,i}\right)}{\sqrt{1+\widetilde{u}^{2}|\widetilde{\nabla} \widetilde{f}|_{g_{2}}^{2}}}
\end{equation}
is the second fundamental form of the graph in the Lorentzian setting,
\begin{equation}\label{7.40}
\widetilde{v}^{i}=\frac{\widetilde{u}\widetilde{f}^{i}}{\sqrt{1+\widetilde{u}^{2}|\widetilde{\nabla} \widetilde{f}|_{g_{2}}^{2}}},\text{ }\text{ }\text{ }\text{ }\text{ }\text{ }\text{ }
\widetilde{w}^{i}=\frac{\widetilde{u}\widetilde{f}^{i} +\widetilde{u}^{-1}\widetilde{Y}^{i}}{\sqrt{1+\widetilde{u}^{2}|\widetilde{\nabla} \widetilde{f}|_{g_{2}}^{2}}},
\end{equation}
and
\begin{equation}\label{7.41}
\widetilde{Q}_{i}=\widetilde{Y}^{j}\hat{\nabla}_{ij} \widetilde{f}
-\widetilde{u}g_{3}^{jl}\widetilde{f}_{l}(k_{3})_{ij}
+\widetilde{w}^{j}(k_{2}-\pi_{2})_{ij}
+\widetilde{u}\widetilde{f}_{i}\widetilde{w}^{l}\widetilde{w}^{j}(k_{2}-\pi_{2})_{lj}\sqrt{1+\widetilde{u}^{2}|\widetilde{\nabla} \widetilde{f}|_{g_{2}}^{2}}.
\end{equation}

The choice of $\widetilde{u}$ will be give below. Here we simply state the
asymptotics (cf. \cite{ChaKhuriSakovich}) which allow for an appropriate solution to equation \eqref{7.37}. Namely
\begin{equation}\label{7.42}
\widetilde{u}= 1 + \frac{\mathcal{C}_{1}}{r} + \frac{\mathcal{C}_{2}(\theta, \phi)}{r^{2}} + \frac{\mathcal{C}_{3}(\theta, \phi)}{r^{3}} + O_{3}(r^{-4})\text{ }\text{ }\text{ as }\text{ }\text{ }r\rightarrow\infty\text{ }\text{ }\text{ in }\text{ }\text{ }(M_2)_{end}^{+},
\end{equation}
where $\mathcal{C}_{1} = -m_{3}$ is the ADM mass of the third deformation.  On the other end we impose the following asymptotic conditions
\begin{equation}\label{7.420}
\widetilde{u}=r^{2}+ \mathcal{C}_{4}(\theta, \phi)r^{3} + o_{3}(r^{\frac{7}{2}})
\text{ }\text{ }\text{ as }\text{ }\text{ }r\rightarrow 0\text{ }\text{ }\text{ in asymptotically hyperboloidal }(M_{2})_{end}^{-},
\end{equation}
\begin{equation}\label{7.421}
\widetilde{u}=r+o_{3}(r^{\frac{3}{2}})\text{ }\text{ }\text{ as }\text{ }\text{ }r\rightarrow 0\text{ }\text{ }\text{ in asymptotically cylindrical }(M_{2})_{end}^{-},
\end{equation}
where $\mathcal{C}_{2}$, $\mathcal{C}_{3}$, and $\mathcal{C}_{4}$ are functions on the sphere $S^{2}$. The desired asymptotics for $\widetilde{f}$ are given by (cf. \cite{ChaKhuriSakovich})
\begin{equation}\label{7.43}
\widetilde{f}(r, \theta, \phi)=\sqrt{1+r^{2}} + \mathcal{A}\log{r} + \mathcal{B}(\theta, \phi)+\frac{\mathcal{D}_{1}(\theta,\phi)}{r} + \frac{\mathcal{D}_{2}(\theta, \phi)}{r^{2}} + \hat{f}(r, \theta, \phi)
\end{equation}
as $r\rightarrow\infty$ in $(M_{2})_{end}^{+}$,
where $\mathcal{A}=2m_{2}$ and $\mathcal{B}$, $\mathcal{D}_{1}$, $\mathcal{D}_{2}$ are functions on $S^{2}$ satisfying
\begin{align} \label{7.44}
\begin{split}
\Delta_{\sigma}\mathcal{B} = &\frac{1}{2}\left[Tr_{\sigma}( \mathbf{m}_{2}^{g} + 2\mathbf{m}_{2}^{k})+2\mathbf{m}_{2}^{r}\right]
+ \frac{1}{8\pi}\int_{S^{2}}\left[Tr_{\sigma}( \mathbf{m}_{2}^{g} + 2\mathbf{m}_{2}^{k})+2\mathbf{m}_{2}^{r}\right], \\
\mathcal{D}_{1} (\theta, \phi) =& \mathcal{A} \mathcal{C}_{1} - \mathcal{C}_{1}^{2} + 2 \mathcal{C}_{2},  \\ %= -3\overline{m}^{2} + 2C_{2}%
6\mathcal{D}_{2} (\theta, \phi) =& \mathbf{m}^{r}-2\mathbf{m}_{2}^{r}
+\frac{1}{8\pi}\int_{S^{2}}\left[Tr_{\sigma}( \mathbf{m}_{2}^{g} + 2\mathbf{m}_{2}^{k})+2\mathbf{m}_{2}^{r}\right] \\
&+ 6\mathcal{C}_{3}-2\mathcal{C}_{1}\mathcal{D}_{1} + 4\mathcal{A} \mathcal{C}_{2} -2\mathcal{A} \mathcal{C}_{1}^{2} -6\mathcal{C}_{1}\mathcal{C}_{2} +2\mathcal{C}_{1}^{3},
%=\frac{1}{6} \left( \textbf{b} + \frac{3Tr_{\sigma} \textbf{m}}{2} + 6C_{3} +18 \overline{m} C_{2} -12 \overline{m}^{3} \right)%
\end{split}
\end{align}
with
\begin{equation} \label{7.45}
\hat{f}=O_{4}(r^{-3}).
\end{equation}
On the other end
\begin{equation}\label{7.46}
\widetilde{f} = - \frac{1}{3r^{3}} + \frac{\mathcal{C}_{4}(\theta, \phi)}{r^{2}} + o_{4}(r^{-\frac{3}{2}})\text{ }\text{ }\text{ as }\text{ }\text{ }r\rightarrow 0\text{ }\text{ }\text{ in asymptotically hyperboloidal }(M_{2})_{end}^{-},
\end{equation}
\begin{equation}\label{7.47}
\sum_{p=1}^{4}
|\widetilde{\nabla}^{p} \widetilde{f}|_{g_{2}}=o(r^{\frac{1}{2}})\text{ }\text{ }\text{ as }\text{ }\text{ }r\rightarrow 0\text{ }\text{ }\text{ in asymptotically cylindrical }(M_{2})_{end}^{-}.
\end{equation}

\begin{lemma}\label{lemma7.3}
The initial data set $(M_3,g_3,k_3)$ is axially symmetric with two ends, one designated asymptotically flat $(M_{3})_{end}^{+}$ and the other $(M_{3})_{end}^{-}$ either asymptotically flat or asymptotically cylindrical depending on whether $(M_{2})_{end}^{-}$ is asymptotically hyperboloidal or asymptotically cylindrical. Furthermore the mass and angular momentum are preserved, that is $m_{3}=m_{2}$ and $\mathcal{J}_{3}=\mathcal{J}_{2}$.
\end{lemma}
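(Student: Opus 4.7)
The plan is to follow the template of Lemmas \ref{lemma7.1}--\ref{lemma7.2} for the twist structure, combined with Lemma \ref{lemma4.1} for the Jang-graph structure, splitting the argument into four parts. Axisymmetry of $(g_3,k_3)$ is immediate from \eqref{7.34}--\eqref{7.35}, the axisymmetry of $(g_2,k_2)$ granted by Lemma \ref{lemma7.2}, and the axisymmetry of the triple $(\widetilde{u},\widetilde{Y}^\phi,\widetilde{f})$.

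For the designated end $(M_3)_{end}^{+}$, I would insert \eqref{7.42}, \eqref{7.43}, \eqref{7.45}, and \eqref{7.360} into \eqref{7.34}, and decompose
\begin{equation*}
g_3 \;=\; \bigl(g_2 + \widetilde{u}^{2}\, d\widetilde{f}^{\,2}\bigr) \;+\; \bigl(\widetilde{Y}\otimes d\widetilde{f}+d\widetilde{f}\otimes \widetilde{Y} - |\widetilde{Y}|_{g_3}^{2}\, d\widetilde{f}^{\,2}\bigr).
\end{equation*}
The first block is precisely the warped Jang metric of Lemma \ref{lemma4.1} and contributes ADM mass $2m_{2} + \widetilde{u}_{0}$. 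The twist block satisfies the modified decay \eqref{7.1} since $\widetilde{Y}_{i} = (g_3)_{i\phi}\widetilde{Y}^{\phi}$ with $(g_2)_{\phi\phi} = r^{2}\sin^{2}\theta + O(r^{-1})$ and $\widetilde{Y}^{\phi} = O(r^{-3})$ give $\widetilde{Y}_{i} = O(r^{-1})$, while $\widetilde{f}_{\phi}=0$ by axisymmetry eliminates the only potentially leading cross term; hence these corrections do not affect the mass aspect. The self-consistency condition $\widetilde{u}_{0} = \mathcal{C}_{1} = -m_{3}$ built into \eqref{7.42} then forces $m_{3} = 2m_{2} - m_{3}$, i.e., $m_{3} = m_{2}$. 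The decay of $k_{3}$ in \eqref{7.1} follows from the identity $2\widetilde{u}(k_{3})_{ij} = (g_3)_{\phi i}\partial_{j}\widetilde{Y}^{\phi} + (g_3)_{\phi j}\partial_{i}\widetilde{Y}^{\phi}$ (the direct analogue of \eqref{7.24}) together with $\partial_{r}\widetilde{Y}^{\phi} = O(r^{-4})$.

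For the inner end I would split into two cases. In the asymptotically hyperboloidal case, the asymptotic ans\"atze \eqref{7.420}, \eqref{7.46}, \eqref{7.361} are calibrated so that after Kelvin inversion $r\mapsto 1/r$ the computation above repeats and $(g_3)_{end}^{-}$ becomes asymptotically flat with decay \eqref{7.1}; this parallels the analysis of $(M_{1})_{end}^{-}$ in the proof of Lemma \ref{lemma7.1}. In the asymptotically cylindrical case, \eqref{7.421} gives $\widetilde{u}\to r$, \eqref{7.47} gives $\widetilde{u}^{2}|\widetilde{\nabla}\widetilde{f}|_{g_{2}}^{2}\to 0$, and \eqref{7.362} together with $|\widetilde{Y}|_{g_3}^{2} = (g_3)_{\phi\phi}(\widetilde{Y}^{\phi})^{2}\to 0$ ensure that all correction terms in \eqref{7.34} vanish at $r\to 0$, so $g_{3}$ inherits the cylindrical asymptotics \eqref{7.1b} of $g_{2}$. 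Preservation of angular momentum then follows from a Komar computation mirroring \eqref{7.26}: the structural identity for $k_{3}$ and axisymmetry collapse the integrand, yielding
\begin{equation*}
\mathcal{J}_{3} \;=\; \frac{1}{16\pi}\int_{\widetilde{S}_{\infty}} \frac{(g_3)_{\phi\phi}\,\partial_{r}\widetilde{Y}^{\phi}}{\widetilde{u}\,|\partial_{r}|_{g_{3}}},
\end{equation*}
and substituting $\widetilde{u}\to 1$, $|\partial_{r}|_{g_{3}}\to 1$, $(g_3)_{\phi\phi}\to r^{2}\sin^{2}\theta$, and $\partial_{r}\widetilde{Y}^{\phi}\to 6\mathcal{J}_{2}/r^{4}$ yields $\mathcal{J}_{3}=\mathcal{J}_{2}$ after the angular integration.

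The hard part will be checking cleanly that every twist-induced correction in the expansion of $g_{3}$ at $(M_{3})_{end}^{+}$ stays strictly below the order $r^{-1/2}$ of the modified decay \eqref{7.1} and does not perturb the leading Jang-graph mass aspect, and correspondingly that the self-consistent relation $\mathcal{C}_{1}=-m_{3}$ prescribed in \eqref{7.42} is compatible with (rather than an extra constraint on) the independently computed ADM mass. Once that bookkeeping is dispatched, the remaining assertions reduce to verbatim adaptations of Lemmas \ref{lemma7.1}, \ref{lemma7.2}, and \ref{lemma4.1}.
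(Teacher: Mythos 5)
Your proposal is correct and follows essentially the same route as the paper: the paper likewise gets the designated end from the identity $m_{3}=2m_{2}+\mathcal{C}_{1}$ together with the prescribed $\mathcal{C}_{1}=-m_{3}$ (citing \cite{ChaKhuriSakovich} for the asymptotic flatness, mass formula, and $\mathcal{J}_{3}=\mathcal{J}_{2}$, where you instead sketch the Lemma \ref{lemma4.1} plus twist-block and Komar computations that underlie that citation), and it handles the inner end exactly as you propose, by inserting \eqref{7.361}, \eqref{7.420}, \eqref{7.46} into \eqref{7.34} to obtain \eqref{7.472}--\eqref{7.475} and then applying Kelvin inversion to verify \eqref{7.1}, with the cylindrical case left to ``similar considerations.'' The only cosmetic slip is that the inner-end computation parallels the third-deformation analysis of \cite{ChaKhuriSakovich} (hyperboloidal to flat) rather than the inner-end analysis of Lemma \ref{lemma7.1}, but this does not affect the argument.
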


\begin{proof}
It is clear that the new data are axisymmetric.  Moreover, according to \cite{ChaKhuriSakovich} $(M_{3})_{end}^{+}$ is asymptotically flat with $m_{3}=2m_{2}+\mathcal{C}_{1}$ and $\mathcal{J}_{3}=\mathcal{J}_{2}$. Since $\mathcal{C}_{1}=-m_{3}$ we have $m_3=m_2$.

It remains to establish the asymptotics of $(M_{3})_{end}^{-}$. When
$(M_2)_{end}^{-}$ is asymptotically hyperboloidal, as $r\rightarrow 0$ it holds that
\begin{equation} \label{7.472}
\begin{split}
(g_{3})_{rr}
&= (g_{2})_{rr} + 2 \widetilde{f}_{r}(g_{2})_{r \phi}\widetilde{Y}^{\phi} + \left(\widetilde{u}^{2} + (\widetilde{Y}^{\phi})^{2}(g_{2})_{\phi \phi} \right)\widetilde{f}_{r}^{2} \\
&= r^{-4} + o_{2}(r^{-\frac{7}{2}}),
\end{split}
\end{equation}
\begin{equation} \label{7.473}
\begin{split}
(g_{3})_{r \alpha}
&= (g_{2})_{r \alpha} + \widetilde{f}_{r}(g_{2})_{\alpha \phi}\widetilde{Y}^{\phi} + \widetilde{f}_{\alpha}(g_{2})_{r \phi}\widetilde{Y}^{\phi} + \left(\widetilde{u}^{2} + (\widetilde{Y}^{\phi})^{2}(g_{2})_{\phi \phi} \right)\widetilde{f}_{r}\widetilde{f}_{\alpha} \\
&= o_{2}(r^{-\frac{5}{2}}),
\end{split}
\end{equation}
\begin{equation} \label{7.474}
\begin{split}
(g_{3})_{\alpha \beta}
&= (g_{2})_{\alpha \beta} + \widetilde{f}_{\alpha}(g_{2})_{\beta \phi}\widetilde{Y}^{\phi} + \widetilde{f}_{\beta}(g_{2})_{\alpha \phi}\widetilde{Y}^{\phi} + \left(\widetilde{u}^{2} + (\widetilde{Y}^{\phi})^{2}(g_{2})_{\phi \phi} \right)\widetilde{f}_{\alpha}\widetilde{f}_{\beta} \\
&= r^{-2} \sigma_{\alpha \beta} + o_{2}(r^{-\frac{3}{2}}).
\end{split}
\end{equation}
Moreover, \eqref{7.34} together with \eqref{7.361} and \eqref{7.420} yield
\begin{equation} \label{7.475}
(k_{3})_{rr}= O_{1}(r^{-\frac{1}{2}}), \text{ }\text{ }\text{ }\text{ }\text{ }\text{ }\text{ }\text{ }\text{ }\text{ }
(k_{3})_{r\alpha} = O_{1}(1),\text{ }\text{ }\text{ }\text{ }\text{ }\text{ }\text{ }\text{ }\text{ }\text{ }
(k_{3})_{\alpha \beta} = O_{1}(r).
\end{equation}
It is now straightforward to see that \eqref{7.472}-\eqref{7.475} satisfy \eqref{7.1} after applying Kelvin inversion, showing that $(M_{3})^{-}_{end}$ is asymptotically flat.
Similar considerations may be used in the asymptotically cylindrical case.
\end{proof}

Combining \eqref{7.17}, \eqref{7.32}, and \eqref{7.38} leads to
\begin{equation} \label{7.48}
\begin{split}
2\mu_{3} &= R_{3}-|k_{3}|_{g_{3}}^{2} \\
&= 2(\mu_{2}-J_{2}(\widetilde{v}))
+|k_{2}- \pi_{2}|_{g_{2}}^{2}
+2\widetilde{u}^{-1}\operatorname{div}_{g_{3}}(\widetilde{u}\widetilde{Q}) \\
&= 2(\mu-J(v)) - 2J_{1}(\widetilde{v})
+|k-\pi|_{g}^{2}+|k_{2}- \pi_{2}|_{g_{2}}^{2}
+2u^{-1}\operatorname{div}_{g_{1}}(uQ)
+2\widetilde{u}^{-1}\operatorname{div}_{g_{3}}(\widetilde{u}\widetilde{Q}).
\end{split}
\end{equation}
Furthermore a computation in Appendix A shows that
\begin{equation} \label{7.49}
J_{1}(\widetilde{v})
= \left(\operatorname{div}_{g_{1}} k_{1}\right)(\widetilde{v})
= u^{-1} \operatorname{div}_{g_{1}} \left(u k_{1}(\widetilde{v}, \cdot)\right)
-g_{1} \left( k_{1}, \pi_{2} \right)
+ \frac{ \operatorname{div}_{g_{1}} ( k_{1}( \widetilde{Y}^{\phi} \partial_{\phi}, \cdot)  )}{\widetilde{u}\sqrt{1+ \widetilde{u}^{2}|\widetilde{\nabla} \widetilde{f}|_{g_{2}}^{2}}}.
\end{equation}
This, together with $k_2=k_1+g_1$ yields
\begin{equation} \label{7.50}
\begin{split}
R_{3}-|k_{3}|_{g_{3}}^{2}
=&2(\mu-J(v)) +|k-\pi|_{g}^{2}
+ \left(|k_{2}|_{g_{2}}^{2} + |\pi_{2}|_{g_{2}}^{2} - 2 Tr_{g_{2}} \pi_{2} \right) \\
&+2u^{-1}\operatorname{div}_{g_{1}}(uQ-uk_{1}(\widetilde{v}, \cdot))
- \frac{ 2\operatorname{div}_{g_{1}} ( k_{1}( \widetilde{Y}^{\phi} \partial_{\phi}, \cdot) )}{\widetilde{u}\sqrt{1+ \widetilde{u}^{2}|\widetilde{\nabla} \widetilde{f}|_{g_{2}}^{2}}}
+2\widetilde{u}^{-1}\operatorname{div}_{g_{3}}(\widetilde{u}\widetilde{Q})\\
\geq &2(\mu-J(v)) +|k-\pi|_{g}^{2}
+2u^{-1}\operatorname{div}_{g_{1}}(uQ-uk_{1}(\widetilde{v}, \cdot))\\
&- \frac{ 2\operatorname{div}_{g_{1}} ( k_{1}( \widetilde{Y}^{\phi} \partial_{\phi}, \cdot) )}{\widetilde{u}\sqrt{1+ \widetilde{u}^{2}|\widetilde{\nabla} \widetilde{f}|_{g_{2}}^{2}}}
+2\widetilde{u}^{-1}\operatorname{div}_{g_{3}}(\widetilde{u}\widetilde{Q}),
\end{split}
\end{equation}
since maximality of the first deformation and \eqref{7.37} imply that
\begin{equation} \label{7.51}
\begin{split}
|k_{2}|_{g_{2}}^{2} + |\pi_{2}|_{g_{2}}^{2} - 2 Tr_{g_{2}} \pi_{2}
&= |k_{2}|_{g_{2}}^{2} + |\pi_{2}|_{g_{2}}^{2} - 6 \\
&= \left( |k_{2}|_{g_{2}}^{2}-\frac{(Tr_{g_{2}}k_{2})^{2}}{3}\right)
+ \left( |\pi_{2}|_{g_{2}}^{2}-\frac{(Tr_{g_{2}}\pi_{2})^{2}}{3}\right)\geq 0.
\end{split}
\end{equation}

We are now in a position to choose $u$ and $\widetilde{u}$. Lemma \ref{lemma7.3} and simple connectivity ensure that the hypotheses of \cite{Chrusciel,KhuriSokolowsky} are satisfied. This shows that $M_3$ is diffeomorphic to $\mathbb{R}^{3}\setminus\{0\}$ (the same is true for $M$, $M_1$, and $M_2$), and that there exists a global Brill (cylindrical) coordinate system $(\rho,z,\phi)$ in which the metric takes the simple form
\begin{equation}\label{7.52}
g_{3}=e^{-2\widetilde{U}+2\widetilde{\alpha}}
(d\rho^{2}+dz^{2})
+e^{-2\widetilde{U}}\rho^{2}
(d\phi+A_{\rho}d\rho
+A_{z}dz)^{2}.
\end{equation}
The arguments in \cite{ChaKhuri1} combined with \eqref{7.50} then produce
\begin{equation}\label{7.53}
\begin{split}
m_{3}-\mathcal{M}(\widetilde{U},\widetilde{\omega})
&\geq
 \frac{1}{8\pi}\int_{M_{2}}\frac{e^{\widetilde{U}}\sqrt{1 + \widetilde{u}^{2}|\widetilde{\nabla} \widetilde{f}|_{g_{2}}^{2}} }{u}\operatorname{div}_{g_{2}}(uQ-uk_{1}(\widetilde{v}, \cdot))
- \frac{e^{\widetilde{U}}}{\widetilde{u}} \operatorname{div}_{g_{2}} ( k_{1}( \widetilde{Y}^{\phi} \partial_{\phi}, \cdot)  ) \\
&+\frac{1}{8\pi}\int_{M_{3}}\frac{e^{\widetilde{U}}}{\widetilde{u}}
\operatorname{div}_{g_{3}}(\widetilde{u} \widetilde{Q}),
\end{split}
\end{equation}
where the twist potential $\widetilde{\omega}$ encodes angular momentum contributions
and the mass functional is given by
\begin{equation}\label{7.54}
\mathcal{M}(\widetilde{U},\widetilde{\omega})
=\frac{1}{32\pi}\int_{\mathbb{R}^{3}}4|\partial \widetilde{U}|^{2}+\frac{e^{4\widetilde{U}}}{\rho^{4}}
|\partial\widetilde{\omega}|^{2}.
\end{equation}
This then motivates the choices
\begin{equation} \label{7.55}
\widetilde{u} = e^{\widetilde{U}},
\end{equation}
and
\begin{equation} \label{7.56}
u=\widetilde{u} \sqrt{1 + \widetilde{u}^{2}|\widetilde{\nabla} \widetilde{f}|_{g_{2}}^{2}}.
\end{equation}

\begin{theorem}\label{thm7.4}
Let $(M,g,k)$ be a smooth, simply connected, axially symmetric initial data set satisfying the dominant energy condition $\mu\geq|J|_{g}$ and $J(\eta)=0$, having two ends, one designated asymptotically AdS hyperbolic and the other either asymptotically AdS hyperbolic or asymptotically cylindrical. If the system of equations \eqref{7.9}, \eqref{7.13}, \eqref{7.36}, \eqref{7.37}, \eqref{7.55} and \eqref{7.56} admits a smooth solution $(u,\widetilde{u},Y^{\phi},\widetilde{Y}^{\phi},f,\widetilde{f})$ satisfying the asymptotics described above, then
\begin{equation}\label{7.57}
m \geq\sqrt{|\mathcal{J}|}.
\end{equation}
\end{theorem}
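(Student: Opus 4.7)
The plan is to invoke the three-step deformation already constructed and reduce to the asymptotically flat mass-angular momentum inequality. By Lemmas \ref{lemma7.1}, \ref{lemma7.2}, and \ref{lemma7.3}, the deformed data $(M_3,g_3,k_3)$ is maximal, axisymmetric, and simply connected, with preserved mass $m_3=m$ and angular momentum $\mathcal{J}_3=\mathcal{J}$. One end of $M_3$ is asymptotically flat, and the other is either asymptotically flat or asymptotically cylindrical, according to the topology of the original $M_{end}^-$. In particular, by \cite{Chrusciel,KhuriSokolowsky} there is a global Brill coordinate system in which $g_3$ takes the form \eqref{7.52}.

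Second, combine the scalar curvature identity \eqref{7.50} with the choices \eqref{7.55}, \eqref{7.56} to obtain the key estimate \eqref{7.53}, which bounds $m_3-\mathcal{M}(\widetilde{U},\widetilde{\omega})$ from below by divergence terms. The relation between volume forms $d\omega_{g_3}=\sqrt{1+\widetilde{u}^2|\widetilde{\nabla}\widetilde{f}|_{g_2}^2}\,d\omega_{g_2}$, together with \eqref{7.56}, is exactly what allows one to apply the divergence theorem and leave only boundary contributions, matching the analogous step in the proofs of Theorems \ref{thm3.4} and \ref{thm4.2}.

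The main obstacle — and where most of the technical work concentrates — is verifying that all resulting boundary integrals vanish. At the designated asymptotically flat end $(M_3)_{end}^+$, this is done by using the asymptotics \eqref{7.10}, \eqref{7.21}, \eqref{7.42}, \eqref{7.43}, together with the relations \eqref{7.44}, to check that the integrands decay faster than area growth; this is the analogue of the estimates carried out in the appendices of \cite{ChaKhuriSakovich}. On the other end, if it is asymptotically AdS hyperbolic, Kelvin inversion and the asymptotics \eqref{7.11}, \eqref{7.15}, \eqref{7.22}, \eqref{7.420}, \eqref{7.46} yield the vanishing; if it is asymptotically cylindrical, the exponential decay of $\widetilde{u}$ (and hence $u$) along the cylinder, combined with \eqref{7.12}, \eqref{7.16}, \eqref{7.23}, \eqref{7.362}, \eqref{7.421}, \eqref{7.47}, forces the boundary integrals to vanish. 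The subtlest term is the extra divergence $\operatorname{div}_{g_1}\!\bigl(k_1(\widetilde{Y}^{\phi}\partial_{\phi},\cdot)\bigr)/\bigl(\widetilde{u}\sqrt{1+\widetilde{u}^2|\widetilde{\nabla}\widetilde{f}|_{g_2}^2}\bigr)$ appearing in \eqref{7.50} via the coupling \eqref{7.49}, because the factor $1/\widetilde{u}$ becomes singular at the cylindrical end; here one must carefully weigh the decay of $\widetilde{Y}^{\phi}$ and $k_1$ against the singularity of $1/\widetilde{u}$ to confirm integrability and vanishing of the corresponding surface integral.

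Once all boundary terms are shown to vanish, \eqref{7.53} reduces to $m_3\geq \mathcal{M}(\widetilde{U},\widetilde{\omega})$. The proof is then completed by applying the sharp variational bound $\mathcal{M}(\widetilde{U},\widetilde{\omega})\geq \sqrt{|\mathcal{J}_3|}$ for the Brill mass functional on $\mathbb{R}^3\setminus\{0\}$ — the asymptotically flat mass-angular momentum inequality proven by Dain and extended by Chru\'sciel-Li-Weinstein, Schoen-Zhou, and others — with $\mathcal{J}_3$ encoded by $\widetilde{\omega}$. Combining with $m=m_3$ and $\mathcal{J}=\mathcal{J}_3$ from Lemmas \ref{lemma7.1}--\ref{lemma7.3} yields $m\geq\sqrt{|\mathcal{J}|}$, which is the desired inequality \eqref{7.57}.
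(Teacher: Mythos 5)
Your proposal follows the paper's proof essentially verbatim: pass to the Brill form \eqref{7.52}, derive \eqref{7.53} from \eqref{7.50} with the choices \eqref{7.55}, \eqref{7.56}, show the resulting boundary integrals vanish (this is exactly the content of Appendix A, Lemmas \ref{lemma9.2} and \ref{lemma9.3}), invoke the Dain/Schoen--Zhou bound $\mathcal{M}(\widetilde{U},\widetilde{\omega})\geq\sqrt{|\mathcal{J}_{3}|}$, and finish with $m_{3}=m$, $\mathcal{J}_{3}=\mathcal{J}$ from Lemmas \ref{lemma7.1}--\ref{lemma7.3}. One minor caveat: at the asymptotically cylindrical end the vanishing of the interior boundary term is not a pure decay statement as your sketch suggests --- in Lemma \ref{lemma9.2} that term reduces to $8\pi\mathcal{Y}(\mathcal{J}-\mathcal{J}_{1})$ and vanishes only because angular momentum is preserved by the first deformation --- but this does not affect the correctness of your outline.
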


\begin{proof}
The right-hand side of \eqref{7.53} results in boundary terms which are shown
to vanish in Appendix A, therefore $m_{3} \geq \mathcal{M}(\widetilde{U}, \widetilde{w})$. Furthermore it is proved in \cite{Dain0, SchoenZhou} that $\mathcal{M}(\widetilde{U}, \widetilde{w}) \geq \sqrt{|\mathcal{J}_{3}|}$.
Now the desired result follows from Lemma \ref{lemma7.1}, Lemma \ref{lemma7.2}, and Lemma \ref{lemma7.3} which yield $m_{3}=m$ and $\mathcal{J}_{3}= \mathcal{J}$.
\end{proof}

\begin{remark}\label{remark0}
As noted in the introduction to this section, the case of equality for \eqref{7.57} is
not expected to occur. Further evidence in this direction is supported by the following arguments. Suppose that equality holds, that is $m=\sqrt{|\mathcal{J}|}$. This implies equality in \eqref{7.51}, which together with \eqref{7.31} leads to $k_{1} \equiv 0$.
We should then have $Y^{\phi} \equiv 0$, and hence $\mathcal{J}=m=0$. However this is a contradiction to the positive mass theorem since $M\cong\mathbb{R}^{3}\setminus\{0\}$ is not topologically trivial.
\end{remark}

In order to further support the above procedure we show that solutions to equation \eqref{7.13} exist with the desired asymptotics in $M_{end}^{+}$. Note that the existence of appropriate solutions to \eqref{7.36} and \eqref{7.37} has already been studied in \cite{ChaKhuri1,ChaKhuriSakovich}, and similar techniques can be applied to \eqref{7.9}.

\begin{theorem}\label{thm7.5}
Given a smooth positive function $u$ satisfying \eqref{7.21}, \eqref{7.22}, \eqref{7.23} there exists a smooth solution $f$ of \eqref{7.13} satisfying \eqref{7.14}.
\end{theorem}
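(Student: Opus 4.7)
The plan is to adapt the framework of Proposition 3.5 (and the underlying arguments of Sakovich \cite{SakovichThesis,Sakovich} and Schoen--Yau \cite{SchoenYau}) to the Jang-type equation \eqref{7.13}. Note that \eqref{7.13} is structurally simpler than the full generalized Jang equation \eqref{3.2}, because the projector $g^{ij}-\frac{u^2 f^i f^j}{1+u^2|\nabla f|_g^2}$ is replaced by $g^{ij}$, so the equation reduces to the quasilinear divergence-form equation $\operatorname{div}_{g}(u^2\nabla f)=u(Tr_{g}k)\sqrt{1+u^{2}|\nabla f|_{g}^{2}}$. The existence argument therefore follows the by-now-standard elliptic regularization scheme: on an exhaustion of $M$ by smooth bounded domains $\Omega_{i}$, solve
\begin{equation*}
\operatorname{div}_{g}(u^{2}\nabla f_{t,i})-tf_{t,i}=u(Tr_{g}k)\sqrt{1+u^{2}|\nabla f_{t,i}|_{g}^{2}},\qquad f_{t,i}\big|_{\partial\Omega_{i}}=0,
\end{equation*}
for $t>0$. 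The capillarity term $-tf$ together with the maximum principle supplies an a~priori $C^{0}$ bound depending on $\|u(Tr_{g}k)\|_{\infty}/t$, and the arguments in \cite{SakovichThesis,Sakovich} provide uniform $C^{1,\alpha}$ and higher interior estimates on compact subsets, independent of $i$ and $t$. Passing to the limit $i\to\infty$ and $t\to 0$ produces a smooth global solution $f$ on $M$, modulo barriers at the ends to control the behavior of the limit.

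The second and central step is the construction of radial barriers that force the prescribed asymptotics. Following Proposition~\ref{thm3.5}, in $M^{+}_{end}$ we look for radial sub- and super-solutions $\overline{f}_{\pm}(r)$ of \eqref{7.13} and parameterize them via
\begin{equation*}
p(r)=\frac{\overline{f}'(r)(1+r^{2})}{\sqrt{1+(1+r^{2})^{2}\overline{f}'(r)^{2}}},
\end{equation*}
so that $p\in[-1,1]$. Substituting \eqref{7.21} for $u$ and the asymptotics \eqref{2}, \eqref{3} for $(g,k)$ into \eqref{7.13} converts the equation into a first-order ODE for $p$ of the schematic form $p'+\bigl(\tfrac{3}{r}+O(r^{-2})\bigr)p+(\text{higher order in } p)=O(r^{-m})$ for a sufficiently large $m$, where the right-hand-side decay is inherited from $Tr_g k =O(r^{-3})$ and the pointwise bounds on the coefficients. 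Solving perturbed ODEs analogous to \eqref{3.54}--\eqref{3.55} with initial data $p_{\pm}(r_{0})=\mp 1$ yields $p_{\pm}=O(r^{-m'})$ for some $m'$, and hence $\overline{f}_{\pm}=o(r^{-3})$, which is precisely \eqref{7.14}. A symmetric analysis performed at $M^{-}_{end}$ supplies barriers realizing \eqref{7.15} in the asymptotically AdS hyperbolic case (after Kelvin inversion the situation reduces to the previous analysis) and \eqref{7.16} in the asymptotically cylindrical case; in the latter the barrier is modeled on the expansion $u=r+o(r^{3/2})$ of \eqref{7.23}, leading to the slower $o(r^{1/2})$ decay required.

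Once the barriers are in place, one uses them as Dirichlet data (or as pointwise envelopes) for the regularized problem on $\Omega_{i}$, so that the approximating solutions satisfy $\overline{f}_{-}\leq f_{t,i}\leq\overline{f}_{+}$ up to the boundary. The uniform two-sided bounds propagate to the limit, yielding a smooth $f$ solving \eqref{7.13} globally and enjoying the prescribed asymptotic fall-off at both ends.

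The principal obstacle is the coexistence of two rather different asymptotic regimes: at $M^{+}_{end}$ the warping $u\sim\sqrt{1+r^{2}}$ is unbounded, while at $M^{-}_{end}$ the warping degenerates like $u\sim r\to 0$. Both regimes must be handled simultaneously, and the ODE analysis for $p$ must be carried out cleanly enough to extract the sharp rates $o(r^{-3})$, $O_{4}(r)$, and $o(r^{1/2})$ demanded by \eqref{7.14}--\eqref{7.16}. The delicate point is matching the orders of vanishing of each coefficient in the expansion of the left-hand side of \eqref{7.13} against the forcing term $u(Tr_g k)\sqrt{1+u^{2}|\nabla f|_{g}^{2}}$, so that the perturbed ODE for $p_{\pm}$ retains the correct signs and dominant terms to rule out any slower decay. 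Aside from this, the remainder of the argument is a direct translation of the scheme in \cite{SakovichThesis, Sakovich} to the present setting.
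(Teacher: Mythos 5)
Your proposal is correct in outline and follows the same overall strategy as the paper: reduce the problem to constructing radial sub/supersolutions at infinity via a first-order ODE for a normalized derivative, then obtain existence by an exhaustion/approximation scheme with the interior estimates of \cite{SakovichThesis,Sakovich}. The implementation differs, though, in a way worth noting. The paper first rewrites \eqref{7.13} as \eqref{7.58} and exploits that the gradient nonlinearity there has at most linear growth, since $\sqrt{u^{-2}+|\nabla f|_{g}^{2}}-u^{-1}\leq |\nabla f|_{g}$; consequently the barrier can be built from an explicitly integrable \emph{linear} ODE \eqref{7.65} for $\zeta=\sqrt{1+r^{2}}\,h'$, with bounding functions $\varTheta,\Phi,\varUpsilon$ defined globally on $M$ (their asymptotics \eqref{7.62} are only prescribed in $M_{end}^{+}$), so a single pair of barriers $\pm f_{+}$ from \eqref{7.63}--\eqref{7.64} serves on all of $M$, covering both ends at once. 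You instead import the nonlinear $p$-substitution of Proposition \ref{thm3.5}; this can be made to work, but the ODE must be rederived for \eqref{7.13} rather than quoted from \eqref{3.53}--\eqref{3.54} (the divergence structure of \eqref{7.13} is what produces the $3/r$ leading coefficient in your $p$-variable, equivalent to the paper's $4/r$ in the $\zeta$-variable), and the decay bookkeeping has to be pinned down: \eqref{7.14} demands the strict rate $o(r^{-3})$, which does follow because $u\,Tr_{g}k=O(r^{-2})$ forces $p_{\pm}$ to decay like $r^{-3}$ up to logarithms, hence $\overline{f}_{\pm}=O(r^{-4}\log r)$ --- your ``$O(r^{-m'})$ for some $m'$'' glosses over exactly the step where the claim could fail. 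Two further remarks: your separate barrier construction at $M_{end}^{-}$ realizing \eqref{7.15}--\eqref{7.16} is more than the theorem asks (only \eqref{7.14} is claimed; at the secondary end one merely needs the global barriers to keep the exhaustion under control, which the paper's single $f_{+}$ does automatically), and your capillarity regularization is an acceptable substitute for the paper's direct exhaustion argument in the style of \cite{ChaKhuri1}, since the barriers remain super/subsolutions of the regularized problems.
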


\begin{proof}
This is based on Theorem 6.3 in \cite{ChaKhuriSakovich}, and thus we only outline
the main arguments which require modification. Equation \eqref{7.13} may be rewritten as
\begin{equation}\label{7.58}
 \Delta_{g} f  + 2 \left \langle \frac{\nabla u}{u}, \nabla f \right\rangle
 - (Tr_g k)\left(\sqrt{u^{-2} + |\nabla f|^2_g} - u^{-1}\right) -u^{-1}(Tr_{g}k) = 0.
\end{equation}
Let $r$ be a smooth positive function on $M$ which agrees with the asymptotic radial
coordinate in both ends, and satisfies $|\nabla r|\neq 0$. Consider a `radial function' $h=h(r)$ and compute
\begin{equation} \label{7.59}
 \begin{split}
  &\Delta_{g} h  + 2 \left \langle \frac{\nabla u}{u}, \nabla h \right \rangle\\
%  = &g^{rr}h_{0}'' - g^{rr} \left( \Gamma^{r}_{rr} + 2(g^{rr})^{-1}g^{r \alpha}\Gamma^{r}_{r \alpha} + (g^{rr})^{-1} g^{\alpha \beta}\Gamma^{r}_{\alpha \beta} -  \frac{2\partial_{r} u}{u} -\frac{2g^{r\alpha}\partial_{\alpha}u}{ug^{rr}}\right) h_{0}' \\
=& \frac{g^{rr}}{\sqrt{1+r^2}} \left[ \zeta' -
\left( \frac{r}{1+r^{2}}+ \Gamma^{r}_{rr} + 2(g^{rr})^{-1}g^{r \alpha}\Gamma^{r}_{r \alpha} + (g^{rr})^{-1} g^{\alpha \beta}\Gamma^{r}_{\alpha \beta} -  \frac{2\partial_{r} u}{u} -\frac{2g^{r\alpha}\partial_{\alpha}u}{ug^{rr}}\right)\zeta \right],
 \end{split}
\end{equation}
where $\zeta = \sqrt{1 + r^{2}} h'$.
Furthermore let $\varTheta(r)$, $\Phi(r)>0$, $\varUpsilon(r)>0$ be bounded radial functions such that
\begin{equation} \label{7.60}
 \frac{\sqrt{1+r^{2}}\left| u^{-1}(Tr_g k)\right|}{g^{rr}}\leq \varUpsilon,
\end{equation}
\begin{equation} \label{7.61}
\left|\left( \frac{r}{1+r^{2}}+ \Gamma^{r}_{rr} + 2(g^{rr})^{-1}g^{r \alpha}\Gamma^{r}_{r \alpha} + (g^{rr})^{-1} g^{\alpha \beta}\Gamma^{r}_{\alpha \beta} -  \frac{2\partial_{r} u}{u} -\frac{2g^{r\alpha}\partial_{\alpha}u}{ug^{rr}}\right)+ \frac{|Tr_g k|}{\sqrt{g^{rr}}} + \varTheta \right| \leq \Phi,
\end{equation}
and
\begin{equation} \label{7.62}
\varTheta (r)= \frac{4}{r}-\frac{2u_{0}}{r^2} + O(r^{-3}),\text{ }\text{ }\text{ }\text{ }\text{ }\text{ }\Phi(r) = O(r^{-3}), \text{ }\text{ }\text{ }\text{ }\text{ }\text{ }\varUpsilon(r)=O(r^{-5})\text{ }\text{ }\text{ }\text{ in }\text{ }\text{ }\text{ }M_{end}^{+}.
\end{equation}

Define
\begin{equation} \label{7.63}
 f_+(r) =  - \int_r ^\infty \frac{\zeta_+(s)}{\sqrt{1+s^2}} ds,
\end{equation}
where
\begin{equation} \label{7.64}
\zeta_+(r)= - e^{-\int_{0}^{r} (\varTheta (s) - \Phi(s))ds} \int_{0}^{r} \varUpsilon(s) e^{\int_{0}^s (\varTheta (t) - \Phi(t))dt} ds \leq 0
\end{equation}
is a solution of the ordinary differential equation
\begin{equation}\label{7.65}
\zeta_+' + (\varTheta - \Phi) \zeta_+ + \varUpsilon = 0.
\end{equation}
It is then straightforward to check that $f_+>0$ is a supersolution of \eqref{7.58}, and has the desired asymptotics. Similarly it may be shown that $f_{-}=-f_{+}$ is a subsolution. Existence of a solution satisfying \eqref{7.14} may now be obtained via
an exhaustion argument as in \cite{ChaKhuri1}, using derivative estimates established with the techniques of \cite{SakovichThesis,Sakovich}.
\end{proof}

\section{The Mass-Angular Momentum-Charge Inequality}
\label{sec8} \setcounter{equation}{0}
\setcounter{section}{8}

In this section we propose a new inequality between mass, angular momentum, and charge in the asymptotically AdS hyperbolic setting, namely
\begin{equation}\label{8.0}
m^2\geq\frac{\mathcal{Q}^2+\sqrt{\mathcal{Q}^4+4\mathcal{J}^2}}{2}.
\end{equation}
This inequality in certain regimes is stronger/weaker than the BPS bound
\begin{equation}\label{8.00}
m\geq m_{bps}:=\sqrt{\lambda}|\mathcal{J}|+|\mathcal{Q}|
\end{equation}
established in \cite{KosteleckyPerry}. We conjecture that the optimal mass-angular momentum-charge inequality in this setting is given by
\begin{align}\label{8.000}
\begin{split}
m\geq m_{ext}& := \frac{1}{3\sqrt{6\lambda}}\left[\sqrt{\left(1+\frac{\lambda\mathcal{J}^2}{m^2}\right)^2
+12\lambda\left(\frac{\mathcal{J}^2}{m^2}+\left(1-\frac{\lambda\mathcal{J}^2}{m^2}\right)
\mathcal{Q}^2\right)}
+2\left(1+\frac{\lambda\mathcal{J}^2}{m^2}\right)\right]\\
& \cdot\left[\sqrt{\left(1+\frac{\lambda\mathcal{J}^2}{m^2}\right)^2
+12\lambda\left(\frac{\mathcal{J}^2}{m^2}+\left(1-\frac{\lambda\mathcal{J}^2}{m^2}\right)
\mathcal{Q}^2\right)}
-\left(1+\frac{\lambda\mathcal{J}^2}{m^2}\right)\right]^{1/2},
\end{split}
\end{align}
which is saturated for extreme Kerr-Newman-AdS black holes \cite{CaldarelliKlemm}.
This inequality is motivated by the generalized Penrose inequality
\begin{align}\label{smarr}
\begin{split}
m^2\geq &\frac{A}{16\pi}+\frac{\pi}{A}\left(4\mathcal{J}^2+\mathcal{Q}^4\right)
+\frac{\mathcal{Q}^2}{2}
+\lambda\left[\mathcal{J}^2 +\frac{A}{8\pi}\left(\mathcal{Q}^2+\frac{A}{4\pi}
+\frac{\lambda A^2}{32\pi^2}\right)\right]\\
:= & F_{1}(A,\mathcal{J},\mathcal{Q})
+\lambda F_{2}(A,\mathcal{J},\mathcal{Q},\lambda)
\end{split}
\end{align}
arising from the Smarr formula \cite{CaldarelliCognolaKlemm},
since $m_{ext}^2$ is the minimum over area $A$ (for fixed angular momentum and charge) of the right-hand side.
Here $\lambda = -\frac{\Lambda}{3}$, $F_{2}$ is the expression within the brackets and $F_{1}$ is the expression without $\lambda$; equality is achieved for the Kerr-Newman-AdS black hole.  Let $A_0$ be the area value at which $F_1$ achieves its minimum for fixed $\mathcal{J}$ and $\mathcal{Q}$, then
\begin{equation}\label{comparison}
m_{ext}^2=\inf_{A} [F_1(A,\mathcal{J},\mathcal{Q})
+\lambda F_2(A,\mathcal{J},\mathcal{Q},\lambda)]
>  F_{1}(A_0,\mathcal{J},\mathcal{Q})
=\frac{\mathcal{Q}^2+\sqrt{\mathcal{Q}^4+4\mathcal{J}^2}}{2}.
\end{equation}
It follows that \eqref{8.000} implies \eqref{8.0}. Furthermore, as explained in \cite{HristovToldaVandoren}
\begin{equation}\label{comparison1}
m_{ext}>m_{bps}\quad\text{ if }\quad |\mathcal{Q}|\neq |\mathcal{Q}_0|,\quad\quad\quad
m_{ext}=m_{bps}\quad\text{ if }\quad |\mathcal{Q}|=|\mathcal{Q}_0|,
\end{equation}
where
\begin{equation}
\mathcal{Q}_0=\left(\frac{|\mathcal{J}|}{\sqrt{\lambda}m}\right)^{\frac{1}{2}}
\frac{1}{1-\frac{\sqrt{\lambda}|\mathcal{J}|}{m}}.
\end{equation}
In particular, it always holds that $m_{ext}\geq m_{bps}$ and therefore \eqref{8.000} implies the BPS bound \eqref{8.00} as well. We note also that
it is not expected that any proper black hole solution saturates the bounds \eqref{8.0} or \eqref{8.00}, except in the case when $|\mathcal{Q}|=|\mathcal{Q}_0|$, due to the strict inequalities in \eqref{comparison} and \eqref{comparison1}.

Let $(M, g, k, E, B)$ be a simply connected, axisymmetric initial data set for the Einstein-Maxwell equations with two ends, one designated asymptotically AdS hyperbolic $M_{end}^{+}$ and the other $M_{end}^{-}$ either asymptotically AdS hyperbolic or asymptotically cylindrical. The electric and magnetic fields $E$ and $B$ are assumed to be divergence free and axisymmetric
\begin{equation} \label{8.1}
\operatorname{div}_{g}E=\operatorname{div}_{g}B=0,\quad\quad\quad\quad\mathfrak{L}_{\eta} E=\mathfrak{L}_{\eta} B=0.
\end{equation}
Asymptotics for the electromagnetic field are given by \eqref{6} in $M_{end}^{+}$, and in $M_{end}^{-}$ they are given respectively in the asymptotically AdS hyperbolic and asymptotically cylindrical cases by
\begin{equation}\label{8.3}
E_r=O_{1}(r), \quad\quad\quad E_{\alpha}=O_{1}(r),
\end{equation}
and
\begin{equation}\label{8.4}
E_r=O_{1}(r^{-1}), \quad\quad\quad E_{\alpha}=O_{1}(1),
\end{equation}
where Greek indices denote coordinates on the sphere $S^2$.
Angular momentum is defined by
\begin{equation}\label{8.5}
\mathcal{J}=\frac{1}{8\pi}\int_{S}(k_{ij}-(Tr_{g} k)g_{ij})\nu^{i}_{g}\eta^{j}-\frac{1}{4\pi}\int_{S}\psi_{B} E_{i}\nu_{g}^{i},
\end{equation}
where $S$ is any surface homologous to the sphere at infinity, with unit outer normal $\nu_{g}$, and $\psi_{B}$ is the potential for the magnetic field (see \cite{DainKhuriWeinsteinYamada} and \cite{KhuriWeinstein1}).
In analogy with \eqref{7.4}, this definition will be well-defined if the
following condition is satisfied
\begin{equation}
J_{EM}^{i}\eta_{i}=0.
\end{equation}

We will describe a three step deformation procedure similar to that given
in Section \ref{sec7}. In particular this transformation is determined by
six functions $(u,\widetilde{u}, Y^{\phi}, \widetilde{Y}^{\phi}, f, \widetilde{f})$ which,
except for $Y^{\phi}$ and $\widetilde{Y}^{\phi}$, satisfy the same equations as in the previous
section, namely \eqref{7.13}, \eqref{7.37}, \eqref{7.55} and \eqref{7.56}. The equations for $Y^{\phi}$ and $\widetilde{Y}^{\phi}$ will be embellishments of \eqref{7.9} and \eqref{7.36} which account
for contributions coming from the electromagnetic field. Thus, the only significant new elements of the
process involve deformations of the Maxwell field which we now describe; unless otherwise noted,
the notation and definitions of Section \ref{sec7} remain valid here.

Let $(e_{1}, e_{2}, e_{3}=|\eta|^{-1}\eta)$ be an orthonormal frame for $(M,g)$, and consider an
the asymptotically AdS hyperbolic initial data set $(M_{1}, g_{1}, k_{1}, E_{1}, B_{1})$ where
\begin{equation}\label{8.6}
\begin{split}
&(E_{1})(e_{i})=\frac{E(e_{i})}{\sqrt{\volg}},\text{ }\text{ }\text{ }\text{ }\text{ }
(B_{1})(e_{i})=\frac{B(e_{i})}{\sqrt{\volg}}\text{ }\text{ }\text{ }\text{ for }\text{ }\text{ }\text{ }i=1,2, \\
&(E_{1})(e_{3})=(B_{1})(e_{3})=0.
\end{split}
\end{equation}
Then according to \cite{ChaKhuri2} it holds that
\begin{equation}\label{8.7}
\operatorname{div}_{g_{1}}E_{1}
= \frac{\operatorname{div}_{g} E}{\sqrt{\volg}}=0, \text{ }\text{ }\text{ }\text{ }\text{ }\text{ }\text{ }\text{ }
\operatorname{div}_{g_{1}}B_{1} = \frac{\operatorname{div}_{g} B}{\sqrt{\volg}}=0.
\end{equation}
In addition, we may then set the equation for $Y^{\phi}$ to be given by
\begin{equation}\label{8.8}
J^{1}_{EM}(\eta) = \operatorname{div}_{g_{1}} k_{1}(\eta) + 2 E_{1} \times B_{1} (\eta)=0,
\end{equation}
which ensures a well-defined angular momentum. A computation also yields the energy density of the non-electromagnetic matter fields
\begin{equation}\label{8.10}
\begin{split}
2 \mu^{1}_{EM}
=& R_{1}-|k_{1}|_{g_{1}}^{2}-2(|E_{1}|_{g_{1}}^{2}+|B_{1}|_{g_{1}}^{2}) - 2 \Lambda \\
=&2(\mu_{EM}-J_{EM}(v))+|k-\pi|_{g}^{2}+2u^{-1}\operatorname{div}_{\overline{g}}(uQ)\\
&+2\left(E(e_{3})- v\times B(e_{3})\right)^{2}
+ 2\left(B(e_{3})+ v\times E(e_{3})\right)^{2}.
\end{split}
\end{equation}
With appropriate asymptotics, the arguments of Lemma \ref{lemma7.1} and \cite{ChaKhuri2} show that the mass, angular momentum, and charge are conserved $m_{1}=m$, $\mathcal{J}_{1}=\mathcal{J}$, $\mathcal{Q}^{1}_{e}=\mathcal{Q}_{e}$, $\mathcal{Q}^{1}_{b}=\mathcal{Q}_{b}$.

The second step in the deformation process yields an asymptotically hyperboloidal initial data set $(M_{2}, g_{2}, k_{2}, E_{2}, B_{2})$ such that
\begin{equation} \label{8.11}
E_{2} \equiv E_{1}, \text{ }\text{ }\text{ }\text{ }\text{ }\text{ }
B_{2} \equiv B_{1}.
\end{equation}
Clearly $\mu^{2}_{EM}=\mu^{1}_{EM}$, and $J^{2}_{EM}=J^{1}_{EM}$, and in particular
\begin{equation} \label{8.12}
J^{2}_{EM} (\eta) =0.
\end{equation}
Mass, angular momentum, and charge are again conserved $m_{2}=m_{1}$, $\mathcal{J}_{2}=\mathcal{J}_{1}$, $\mathcal{Q}^{2}_{e}=\mathcal{Q}^{1}_{e}$, $\mathcal{Q}^{2}_{b}=\mathcal{Q}^{1}_{b}$.

The last step in the deformation process produces an asymptotically flat initial data set $(M_{3}, g_{3}, k_{3}, E_{3}, B_{3})$ in which
\begin{equation}\label{8.13}
\begin{split}
&(E_{3})(\widetilde{e}_{i})=\frac{E_{2}(\widetilde{e}_{i})}{\sqrt{1+\widetilde{u}^{2}|\widetilde{\nabla} \widetilde{f}|_{g_{2}}^{2}}},\text{ }\text{ }\text{ }\text{ }\text{ }
(B_{3})(\widetilde{e}_{i})=\frac{B_{2}(\widetilde{e}_{i})}{\sqrt{1+\widetilde{u}^{2}|\widetilde{\nabla} \widetilde{f}|_{g_{2}}^{2}}},\text{ }\text{ }\text{ }\text{ }\text{ }i=1,2, \\
&(E_{3})(\widetilde{e}_{3})=(B_{3})(\widetilde{e}_{3})=0,
\end{split}
\end{equation}
where $(\widetilde{e}_{1}, \widetilde{e}_{2}, \widetilde{e}_{3}=|\eta|^{-1}\eta)$ is an orthonormal frame for $(M_{2},g_{2})$. As above, this implies that
\begin{equation}\label{8.14}
\operatorname{div}_{g_{3}}E_{3}
= \frac{\operatorname{div}_{g_{2}} E_{2}}{\sqrt{1+\widetilde{u}^{2}|\widetilde{\nabla} \widetilde{f}|_{g_{2}}^{2}}}=0, \text{ }\text{ }\text{ }\text{ }\text{ }\text{ }\text{ }\text{ }
\operatorname{div}_{g_{3}}B_{3} = \frac{\operatorname{div}_{g_{2}} B_{2}}{\sqrt{1+\widetilde{u}^{2}|\widetilde{\nabla} \widetilde{f}|_{g_{2}}^{2}}}=0.
\end{equation}
The equation for $\widetilde{Y}^{\phi}$ is then chosen to be
\begin{equation}\label{8.16}
J^{3}_{EM}(\eta) = \operatorname{div}_{g_{3}} k_{3}(\eta) + 2 E_{3} \times B_{3} (\eta)=0,
\end{equation}
which guarantees the existence of a charged twist potential and well-defined angular momentum.
The scalar curvature formula \eqref{7.50}, and a computation in \cite{ChaKhuri2} together with \eqref{8.10} yield the energy density of the non-electromagnetic matter fields
\begin{equation}\label{8.17}
\begin{split}
2 \mu^{3}_{EM}
=& R_{3} -|k_{3}|_{g_{3}}^{2}-2 (|E_{3}|^{2}_{g_{3}}+|B_{3}|^{2}_{g_{3}} )  \\
=&2(\mu_{EM}-J_{EM}(v)) +|k-\pi|_{g}^{2}
+ \left(|k_{2}|_{g_{2}}^{2} + |\pi_{2}|_{g_{2}}^{2} - 2 Tr_{g_{2}} \pi_{2} \right) \\
&+2u^{-1}\operatorname{div}_{g_{1}}(uQ-uk_{1}(\widetilde{v}, \cdot))
- 2 \left(\widetilde{u}\sqrt{1+\widetilde{u}^{2}|\widetilde{\nabla} \widetilde{f}|_{g_{2}}^{2}}\right)^{-1}\operatorname{div}_{g_{1}} ( k_{1}( \widetilde{Y}^{\phi} \partial_{\phi}, \cdot)  )
\\
&+2\widetilde{u}^{-1}\operatorname{div}_{g_{3}}(\widetilde{u}\widetilde{Q}) +2\left(E(e_{3})- v\times B(e_{3})\right)^{2}
+ 2\left(B(e_{3})+ v\times E(e_{3})\right)^{2} \\
&+2\left(E_{2}(\widetilde{e}_{3})- \widetilde{v}\times B_{2}(\widetilde{e}_{3})\right)^{2}
+ 2\left(B(\widetilde{e}_{3})+ \widetilde{v}\times E(\widetilde{e}_{3})\right)^{2}.
\end{split}
\end{equation}
With appropriate asymptotics, the arguments of Lemma \ref{lemma7.3} and \cite{ChaKhuri2} show that the mass, angular momentum, and charge are conserved $m_{3}=m_2$, $\mathcal{J}_{3}=\mathcal{J}_2$, $\mathcal{Q}^{3}_{e}=\mathcal{Q}^{2}_{e}$, $\mathcal{Q}^{3}_{b}=\mathcal{Q}^{2}_{b}$.

\begin{theorem}\label{thm8.1}
Let $(M,g,k,E,B)$ be a smooth, simply connected, axially symmetric initial data set satisfying the charged dominant energy condition $\mu_{EM}\geq|J_{EM}|_{g}$ and $J_{EM}(\eta)=0$, having two ends, one designated asymptotically AdS hyperbolic and the other either asymptotically AdS hyperbolic or asymptotically cylindrical. If the system of equations \eqref{7.13}, \eqref{7.37}, \eqref{7.55}, \eqref{7.56}, \eqref{8.8}, and \eqref{8.16} admits a smooth solution $(u,\widetilde{u},Y^{\phi},\widetilde{Y}^{\phi},f,\widetilde{f})$ satisfying the asymptotics described in Section \ref{sec7}, then
\begin{equation}\label{8.18}
m^2 \geq \frac{\mathcal{Q}^{2}+\sqrt{\mathcal{Q}^{4}+4\mathcal{J}^{2}}}{2}.
\end{equation}
\end{theorem}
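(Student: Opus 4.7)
The proof plan follows the structure of Theorem \ref{thm7.4} very closely, with the role of the mass functional \eqref{7.54} replaced by its charged analog. First, I would invoke the three-step deformation already set up in this section to produce the axisymmetric, simply connected, asymptotically flat data $(M_3, g_3, k_3, E_3, B_3)$ from $(M,g,k,E,B)$. By the preservation statements following \eqref{8.8}, \eqref{8.11}, and \eqref{8.16} (which are the charged analogs of Lemmas \ref{lemma7.1}--\ref{lemma7.3}), all four conserved quantities agree with the originals: $m=m_3$, $\mathcal{J}=\mathcal{J}_3$, $\mathcal{Q}_e = \mathcal{Q}_e^3$, and $\mathcal{Q}_b = \mathcal{Q}_b^3$, so $\mathcal{Q}^2 = \mathcal{Q}_3^2$. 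The divergence-free conditions \eqref{8.14} together with \eqref{8.16} guarantee the existence of magnetic and electric potentials together with a charged twist potential on $M_3$, and the charged dominant energy condition combined with \eqref{8.17} gives weak nonnegativity of $\mu_{EM}^3$ modulo divergence terms.

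Next, since $M_3 \cong \mathbb{R}^3 \setminus \{0\}$, one introduces global Brill coordinates and writes $g_3$ in the form \eqref{7.52}. The arguments of \cite{ChaKhuri2} combined with \eqref{8.17} produce an estimate
\begin{equation*}
m_3 - \mathcal{M}_{EM}(\widetilde{U}, \widetilde{\omega}, \widetilde{\chi}, \widetilde{\psi}) \geq \mathcal{I}_{\mathrm{div}} + \mathcal{I}_{\partial},
\end{equation*}
where $\mathcal{M}_{EM}$ denotes the charged mass functional incorporating the electric and magnetic twist potentials, $\mathcal{I}_{\mathrm{div}}$ collects the interior divergence terms from \eqref{8.17} weighted against $e^{\widetilde{U}}$, and $\mathcal{I}_{\partial}$ collects the boundary integrals at both ends of $M$. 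The asymptotically flat mass-angular momentum-charge inequality for axisymmetric maximal initial data (cf.\ \cite{ChaKhuri2, SchoenZhou} and the references in the introduction to this section) yields
\begin{equation*}
\mathcal{M}_{EM}(\widetilde{U}, \widetilde{\omega}, \widetilde{\chi}, \widetilde{\psi})^2 \geq \frac{\mathcal{Q}_3^2 + \sqrt{\mathcal{Q}_3^4 + 4\mathcal{J}_3^2}}{2},
\end{equation*}
so that \eqref{8.18} will follow once $\mathcal{I}_{\mathrm{div}} + \mathcal{I}_{\partial} \geq 0$ is established and the conservation identities above are invoked.

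The main obstacle is showing that the boundary terms $\mathcal{I}_\partial$ vanish and that the interior divergences integrate nonnegatively. After applying the divergence theorem with the choices \eqref{7.55}--\eqref{7.56}, one gets boundary integrals at spatial infinity in both $(M_2)^+_{end}$ and $(M_2)^-_{end}$, as well as boundary integrals at infinity in $M_3$. Each must be handled separately: at $(M_3)^+_{end}$ this reduces to the computation already carried out in Appendix A for Theorem \ref{thm7.4}, now with the additional electromagnetic contributions $(E(e_3) - v\times B(e_3))^2$ etc.\ from \eqref{8.17}; at an asymptotically AdS hyperbolic $(M_2)^-_{end}$ one uses the Kelvin inversion asymptotics \eqref{7.232}--\eqref{7.234} together with \eqref{8.3} to check decay; and at an asymptotically cylindrical $(M_2)^-_{end}$ the asymptotics \eqref{7.1b} and \eqref{8.4}, combined with \eqref{7.23} and \eqref{7.421}, must force the relevant integrands to die out sufficiently fast. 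The subtle point is that the new electromagnetic divergence $\operatorname{div}_{g_1}(k_1(\widetilde{Y}^\phi \partial_\phi, \cdot))/(\widetilde{u}\sqrt{1+\widetilde{u}^2|\widetilde{\nabla}\widetilde{f}|_{g_2}^2})$ appearing in \eqref{8.17} couples to the warping factors in a nontrivial way, so care is needed to verify that the chosen asymptotic behaviors of $Y^\phi$, $\widetilde{Y}^\phi$, $u$, and $\widetilde{u}$ make the corresponding boundary flux vanish at each end.
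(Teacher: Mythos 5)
Your proposal is correct and follows essentially the same route as the paper: deform to the asymptotically flat data $(M_3,g_3,k_3,E_3,B_3)$, apply the charged mass functional lower bound from \cite{ChaKhuri2} (via \eqref{8.17} and the Brill form \eqref{7.52}), and use conservation of $m$, $\mathcal{J}$, $\mathcal{Q}$ through the three steps. The boundary fluxes you flag as the subtle point are exactly the integrals already disposed of in Appendix A (Lemmas \ref{lemma9.2} and \ref{lemma9.3}, stated for both Theorems \ref{thm7.4} and \ref{thm8.1}), and the extra electromagnetic terms in \eqref{8.17} are pointwise nonnegative interior contributions rather than new boundary terms, so no additional flux computation is needed.
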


\begin{proof}
As in the proof of Theorem \ref{thm7.4}, the arguments from \cite{ChaKhuri2} can be directly applied with the help of \eqref{8.17}. The boundary integrals appearing in the process vanish by Appendix A.
\end{proof}

\begin{remark}
As noted in the introduction to this section, the case of equality for \eqref{8.18} is
not expected to occur. Further evidence in this direction arises from similar arguments
as those presented in Remark \ref{remark0}. In particular, if equality occurs then it should
hold that $\mathcal{J}=0$. Thus, if the initial data has nonzero angular momentum a contradiction
is reached. If the initial data has zero angular momentum, the case of equality in \eqref{8.18} reduces to the case of equality in the positive mass theorem with charge. The discussion at the end of Section \ref{sec6} then suggests that $\mathcal{Q}=0$, and hence $m=0$. The topology of $M$ then yields a contradiction with the positive mass theorem.
\end{remark}

\section{Appendix A: Boundary Integrals}
\label{sec10} \setcounter{equation}{0}
\setcounter{section}{9}

In this section we collect four technical lemmas which are needed to complete the proofs of the main results.

\begin{lemma}\label{lemma9.1}
Under the hypotheses and notation of Theorems \ref{thm3.4} and \ref{thm4.2}
\begin{equation}\label{10.0}
 \lim_{r\rightarrow\infty}\int_{\overline{S}_{r}} u g_{2}(q,\nu_{g_{2}}) = 0.
\end{equation}
\end{lemma}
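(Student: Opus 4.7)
The plan is to extract the leading-order decay of the integrand $u\,g_{2}(q,\nu_{g_{2}})$ from the asymptotics \eqref{1}--\eqref{3} and \eqref{3.5}, and combine this with the quadratic growth $dA_{g_{2}}=O(r^{2})\sin\theta\,d\theta\,d\phi$ of the area element on the coordinate sphere $\overline{S}_{r}$. The crucial point is that only components of $\pi-k$ carrying at least one radial index enter the radial component of $q$, which will be the dominant contribution to $g_{2}(q,\nu_{g_{2}})$.

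First I would translate the given asymptotics into raised-index form. From $f=O_{4}(r^{-3})$ together with $g^{rr}=1+r^{2}+O(r^{-1})$, $g^{r\alpha}=O(r^{-3})$, $g^{\alpha\beta}=r^{-2}\sigma^{\alpha\beta}+O(r^{-3})$, one finds $f^{r}=O(r^{-2})$, $f^{\alpha}=O(r^{-5})$, and $u^{2}|\nabla f|_{g}^{2}=O(r^{-4})$, so the denominator $\sqrt{1+u^{2}|\nabla f|_{g}^{2}}=1+O(r^{-4})$ is harmless. Using $u_{r}=r/\sqrt{1+r^{2}}+O(r^{-2+\varepsilon})$, $u_{\alpha}=O(r^{-1+\varepsilon})$, and the hyperbolic-background Christoffels $\Gamma_{rr}^{r}=-r/(1+r^{2})$, $\Gamma_{r\alpha}^{\beta}=r^{-1}\delta_{\alpha}^{\beta}$, a direct expansion of \eqref{3.4} yields $\pi_{rr}=O(r^{-4})$ and $\pi_{r\alpha}=O(r^{-3})$; combined with \eqref{3} this gives $(\pi-k)_{rr}=O(r^{-4})$ and $(\pi-k)_{r\alpha}=O(r^{-3})$.

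Second I would assemble $q$ and the unit normal. From the definition in \eqref{3.4},
\begin{equation*}
q_{r}=\frac{uf^{r}(\pi-k)_{rr}+uf^{\alpha}(\pi-k)_{r\alpha}}{\sqrt{1+u^{2}|\nabla f|_{g}^{2}}}=O(r^{-1}\cdot r^{-4})+O(r^{-4}\cdot r^{-3})=O(r^{-5}),
\end{equation*}
and the crude bound $\pi_{\alpha\beta}-k_{\alpha\beta}=O(1)$ still gives $q_{\alpha}=O(r^{-4})$. Since $(g_{2})^{ij}$ agrees with the inverse hyperbolic metric up to $O(r^{-6})$ corrections (see \eqref{3.7}), the outward unit normal satisfies $\nu_{g_{2}}^{r}=\sqrt{1+r^{2}}+O(r^{-2})$ and $\nu_{g_{2}}^{\alpha}=O(r^{-4})$, so
\begin{equation*}
g_{2}(q,\nu_{g_{2}})=q_{r}\nu_{g_{2}}^{r}+q_{\alpha}\nu_{g_{2}}^{\alpha}=O(r^{-5}\cdot r)+O(r^{-4}\cdot r^{-4})=O(r^{-4}),
\end{equation*}
giving $u\,g_{2}(q,\nu_{g_{2}})=O(r^{-3})$. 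Integrating against $dA_{g_{2}}=O(r^{2})\sin\theta\,d\theta\,d\phi$ produces an integral of size $O(r^{-1})$, which vanishes in the limit $r\to\infty$.

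The main obstacle is careful bookkeeping of the Hessian contributions to $\pi_{ij}$: the factor $\Gamma_{\alpha\beta}^{r}=O(r^{3})$ inflates $\pi_{\alpha\beta}$ to order $O(1)$, which would wreck the argument were it to enter $q_{r}$. However, the index structure of \eqref{3.4} ensures that $\pi_{\alpha\beta}$ contributes only to $q_{\alpha}$, where its effect is absorbed by the smallness of $uf^{\gamma}=O(r^{-4})$, and the argument then reduces to routine asymptotic accounting.
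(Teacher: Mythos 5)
Your proposal is correct and follows essentially the same route as the paper: both arguments rest on the decay estimates $\pi_{rr}=O(r^{-4})$, $\pi_{r\alpha}=O(r^{-3})$, $\pi_{\alpha\beta}=O(1)$ (the paper's \eqref{10.0.2}) together with the fall-off of $f$ and the expansion of $u$, and conclude that the integrand decays like $r^{-3}$ while the area element grows only like $r^{2}$. The only difference is presentational — you estimate the coordinate components $q_r$, $q_\alpha$ and contract with the $g_2$-unit normal directly, whereas the paper first rewrites the integral over $S_r$ using an orthonormal frame of $(M,g)$ as in \eqref{10.0.1} before making the same asymptotic bookkeeping.
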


\begin{proof}
We will follow the arguments in Appendix B of \cite{ChaKhuriSakovich}.
Let $S_{r}$ be coordinate spheres and $(e_{r},e_{\theta},e_{\phi})$ be an orthonormal frame in the designated asymptotically AdS end of $(M,g)$. Then a computation shows that
\begin{equation} \label{10.0.1}
\int_{\overline{S}_{r}} u g_{2}(q,\nu_{g_{2}})
= \int_{\overline{S}_{r}} u g_{1}(q,\nu_{g_{1}})
= \int_{S_{r}}  \frac{u(1 + u^{2}|\nabla_{S}f|^{2})}{\sqrt{1+u^{2}|\nabla f|_{g}^{2}}}
\left(q(e_{r}) - \frac{u^{2}e_{r}(f)}{1+ u^{2}|\nabla_{S}f|^{2}} q(\nabla_{S}f)\right),
\end{equation}
where $\nabla_{S}$ denotes covariant differentiation on $S_{r}$.
Observe that
\begin{equation}\label{10.0.2}
\pi_{rr}= O(r^{-4}), \text{ }\text{ }\text{ }\text{ }\text{ }\text{ }\text{ }
\pi_{r \alpha} = O(r^{-3}), \text{ }\text{ }\text{ }\text{ }\text{ }\text{ }\text{ }
\pi_{\alpha \beta} = O(1).
\end{equation}
This together with \eqref{3} yields
\begin{align} \label{10.0.3}
\begin{split}
q(e_{r}) =& q(\sqrt{1+r^{2}}\partial_{r})+O(r^{-7}) \\
=& \frac{u f^{r}(\pi_{rr}-k_{rr})\sqrt{1+r^{2}}+u f^{\alpha}(\pi_{r\alpha}-k_{r\alpha})\sqrt{1+r^{2}}}{\sqrt{1 + u^{2}|\nabla f|_{g}^{2}}}
+O(r^{-7})\\
=&O(r^{-4}),
\end{split}
\end{align}
\begin{equation}
q(\nabla_{S} f)=  \frac{u g^{\alpha \beta}f_{\alpha}f^{j}(\pi_{\beta j}-k_{\beta j})}{\sqrt{1 + u^{2}|\nabla f|_{g}^{2}}}
= O(r^{-9}),
\end{equation}
and
\begin{equation} \label{10.0.5}
\frac{u(1 + u^{2}|\nabla_{S}f|^{2})}{\sqrt{1+u^{2}|\nabla f|_{g}^{2}}}=O(r).
\end{equation}
The desired result now follows.
\end{proof}

\begin{lemma}\label{lemma9.2}
Under the hypotheses and notation of Theorems \ref{thm7.4} and \ref{thm8.1}
\begin{equation}
\int_{M_{2}}\operatorname{div}_{g_{2}}(uQ-uk_{1}(\widetilde{v}, \cdot)-k_{1}(\widetilde{Y}^{\phi}\partial_{\phi}, \cdot)) =0.
\end{equation}
\end{lemma}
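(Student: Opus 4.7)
The strategy is to apply the divergence theorem, reducing the volume integral to boundary fluxes at the two ends of $M_2$, and then to show that each flux vanishes by asymptotic analysis. Since $M_2$ is diffeomorphic to $\mathbb{R}^3\setminus\{0\}$ with ends $(M_2)_{end}^+$ (at $r\to\infty$) and $(M_2)_{end}^-$ (at $r\to 0$ after the Kelvin inversion used in Section \ref{sec7}), integration by parts gives
\begin{equation*}
\int_{M_2}\operatorname{div}_{g_2}\!\big(uQ-uk_1(\widetilde{v},\cdot)-k_1(\widetilde{Y}^\phi\partial_\phi,\cdot)\big)
=\Big[\lim_{r\to\infty}-\lim_{r\to 0}\Big]\int_{\overline{S}_r}g_2\!\big(uQ-uk_1(\widetilde{v},\cdot)-k_1(\widetilde{Y}^\phi\partial_\phi,\cdot),\nu_{g_2}\big).
\end{equation*}
It suffices to prove each of the three one-forms contributes trivially at each end.

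At the outer end I would combine the asymptotic expansions \eqref{7.21}, \eqref{7.10}, \eqref{7.14}, \eqref{7.42}, \eqref{7.360}, \eqref{7.43}--\eqref{7.45} with the decay rates of $k$, $k_1$, $\pi$, $\pi_2$ provided by \eqref{3}, \eqref{7.25}, together with the definitions \eqref{7.7}, \eqref{7.18}--\eqref{7.20}, \eqref{7.39}, \eqref{7.40}. Since the induced area form on $\overline{S}_r$ grows like $r^2\sin\theta\, d\theta\, d\phi$ and $u\sim r$, the pointwise bound needed for the first term is $|Q|_{g_2}=o(r^{-3})$: this follows upon expanding $Q$ term by term, noting that $Y_l=g_{1\,\phi l}Y^\phi=O(r^{-1})$ in angular directions combined with $\overline{\nabla}_{ij}f=O(r^{-3})$ produces the leading $O(r^{-4})$ contribution, while the remaining pieces decay at least as fast because of the $f$- and $k$-decay. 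For the second term, $|k_1|_{g_2}=O(r^{-4})$ (coordinate components translated via \eqref{7.25}), $|\widetilde{v}|_{g_2}=O(1)$, so $|uk_1(\widetilde{v},\cdot)|=O(r^{-3})$; for the third, $\widetilde{Y}^\phi\partial_\phi$ contributes an extra $O(r^{-3})\cdot r=O(r^{-2})$ factor against $|k_1|_{g_2}=O(r^{-4})$, yielding $O(r^{-6})$. All three then produce vanishing boundary integrals as $r\to\infty$, in close analogy with the computation of Lemma \ref{lemma9.1} and the analogous limit in Appendix B of \cite{ChaKhuriSakovich}.

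At the inner end the analysis splits according to the topology of $M_{end}^-$. When $M_{end}^-$ is asymptotically AdS hyperbolic, the Kelvin-inverted asymptotics \eqref{7.230}--\eqref{7.234}, \eqref{7.251} combined with \eqref{7.11}, \eqref{7.15}, \eqref{7.22}, \eqref{7.361}, \eqref{7.420}, \eqref{7.46} give a mirror picture of the outer end: every term acquires a power of $r$ compensating the area form $r^{-2}\sin\theta\, d\theta\, d\phi$, and the calculation proceeds identically. When $M_{end}^-$ is asymptotically cylindrical, the asymptotics \eqref{7.1b}, \eqref{7.12}, \eqref{7.16}, \eqref{7.23}, \eqref{7.362}, \eqref{7.421}, \eqref{7.47} apply; here the key fact is that $f$ and $\widetilde{f}$ produce only $o(r^{1/2})$ gradients while $u,\widetilde{u}\sim r$ decay, so all three boundary integrands vanish in the limit $r\to 0$.

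\textbf{Main obstacle.} The technical difficulty is bookkeeping the decay rates for the $uQ$ term at the outer end, where the growing factor $u\sim r$ must be absorbed by cancellation inside the four constituents of $Q$ given by \eqref{7.20}. One has to verify that the ``dangerous'' piece $g_1^{jl}Y_l\overline{\nabla}_{ij}f$ does not spoil the decay, which relies crucially on the improved fall-off $f=o_4(r^{-3})$ prescribed in \eqref{7.14} together with the angular-component confinement of $Y$ in \eqref{7.8}. The remaining estimates are mechanical applications of the stated asymptotics, paralleling the arguments carried out in Appendix B of \cite{ChaKhuriSakovich} and in Lemma \ref{lemma9.1} above.
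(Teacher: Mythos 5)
Your overall framework (divergence theorem plus end-by-end flux analysis) matches the paper, and your treatment of the outer end and of the asymptotically AdS hyperbolic inner end is essentially the paper's argument: there the fluxes do die by decay, aided by the exact vanishing of $k_{1}(\barna f,\nu_{g_{1}})$ and $k_{1}(\barna\widetilde{f},\nu_{g_{1}})$ on coordinate spheres coming from the special form \eqref{7.24} of $k_1$ and axisymmetry.

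However, your handling of the asymptotically cylindrical secondary end contains a genuine gap. There the boundary integrand does \emph{not} vanish termwise, and no decay bookkeeping of the type you describe can make it do so: by \eqref{7.12} the shift approaches a nonzero constant, $Y^{\phi}=\mathcal{Y}+O(r)$, so the vector $\overline{w}$ in \eqref{10.1.8} has $\overline{w}^{\phi}=\mathcal{Y}/r+O(r^{-1/2})$, which blows up, and (cf. \eqref{10.1.14b}) $u(k-\pi)(\overline{w},\nu_{g_{1}})=(k-k_{1})(Y,\nu_{g_{1}})+O(r^{\frac{1}{2}})$ carries an $O(1)$ piece over spheres of unit-size area. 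Consequently the inner flux limit is not zero term by term; as in \eqref{10.1.16}--\eqref{10.1.18} it equals $8\pi\mathcal{Y}(\mathcal{J}-\mathcal{J}_{1})$, i.e.\ the difference of two individually nonvanishing integrals, one computing the Komar angular momentum of $(M,g,k)$ and the other that of $(M_{1},g_{1},k_{1})$. The vanishing therefore rests on the conservation of angular momentum $\mathcal{J}_{1}=\mathcal{J}$ from Lemma \ref{lemma7.1}, a cancellation your proposal never invokes; your claim that ``$u,\widetilde{u}\sim r$ decay, so all three boundary integrands vanish in the limit $r\to 0$'' is false in this case (only the $k_{1}(\widetilde{Y}^{\phi}\partial_{\phi},\cdot)$ term vanishes immediately, via \eqref{7.362}). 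Without identifying this cancellation the proof fails precisely in the cylindrical case, which is the situation relevant to black hole (extreme throat) data in Theorems \ref{thm7.4} and \ref{thm8.1}.
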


\begin{proof}
According to the divergence theorem
\begin{equation} \label{10.1.0}
\begin{split}
&\int_{M_{2}}\operatorname{div}_{g_{2}}(uQ-uk_{1}(\widetilde{v}, \cdot)-k_{1}(\widetilde{Y}^{\phi}\partial_{\phi}, \cdot)) \\
=&  \lim_{r\rightarrow\infty}\int_{\overline{S}_{r}} \left( uQ(\nu_{g_{1}})-uk_{1}(\widetilde{v},\nu_{g_{1}}) -k_{1}(\widetilde{Y}^{\phi}\partial_{\phi}, \nu_{g_{1}})\right) \\
&- \lim_{r\rightarrow 0}\int_{\overline{S}_{r}} \left( uQ(\nu_{g_{1}})-uk_{1}(\widetilde{v},\nu_{g_{1}})
-k_{1}(\widetilde{Y}^{\phi}\partial_{\phi}, \nu_{g_{1}}) \right).
\end{split}
\end{equation}
A computation in Appendix B of \cite{ChaKhuriSakovich} simplifies the first integral as follows
\begin{equation} \label{10.1.1}
\begin{split}
&\lim_{r\rightarrow\infty}\int_{\overline{S}_{r}} \left( uQ(\nu_{g_{1}})-uk_{1}(\widetilde{v},\nu_{g_{1}})
-k_{1}(\widetilde{Y}^{\phi}\partial_{\phi}, \nu_{g_{1}}) \right) \\
=&\int_{\overline{S}_{\infty}} u(\operatorname{Hess}_{g_{1}} f)(Y, \nu_{g_{1}})
-  k_{1}(u^{2} \barna f, \nu_{g_{1}}) -uk_{1}(\widetilde{v},\nu_{g_{1}})
-k_{1}(\widetilde{Y}^{\phi}\partial_{\phi}, \nu_{g_{1}})\\
& +\int_{S_{\infty}} u\sqrt{1+u^{2}|\nabla f|_{g}^{2}}(k-\pi)(w,e_{r}).
\end{split}
\end{equation}
Using \eqref{7.24} and the axisymmetry of $f$ and $\widetilde{f}$ yield $k_{1}(\barna f, \nu_{g_{1}})= k_{1}(\barna \widetilde{f},\nu_{g_{1}})=0$ on $\overline{S}_{r}$. Moreover $u(\operatorname{Hess}_{g_{1}}  f)(Y, \nu_{g_{1}})= O(r^{-5})$ by \eqref{7.10}, \eqref{7.14}, and \eqref{7.21}. The asymptotics of $\widetilde{Y}^{\phi}$ in \eqref{7.360} together with the definition of angular momentum $\mathcal{J}_{1}$ shows that
\begin{equation}
\lim_{r\rightarrow\infty}\int_{\overline{S}_{r}} k_{1}(\widetilde{Y}^{\phi}\partial_{\phi}, \nu_{g_{1}}) =0.
\end{equation}
Therefore the first integral on the right-hand side of \eqref{10.1.1} vanishes.
Consider now the second integral on the right-hand side of \eqref{10.1.1}.
Note that the asymptotics of $\pi$ are the same as \eqref{10.0.2} since $Y^{\phi}$ falls off sufficiently fast. Furthermore \eqref{7.10}, \eqref{7.14}, and \eqref{7.21} imply that
\begin{equation} \label{10.1.3}
w^{r} = O(r^{-1}), \text{ }\text{ }\text{ }\text{ }\text{ }\\
w^{\theta} = O(r^{-4}), \text{ }\text{ }\text{ }\text{ }\text{ } \\
w^{\phi}  = O(r^{-4}),
\end{equation}
and it is shown in \cite{ChaKhuriSakovich} that
\begin{equation}\label{10.1.4}
e_{r}= \left(\sqrt{1+r^{2}} +O(r^{-2}) \right) \partial_{r} + O(r^{-4}) \partial_{\theta}
+ O(r^{-4}) \partial_{\phi},
\end{equation}
\begin{equation}\label{10.1.5}
e_{\theta}= \left(\frac{1}{r}+ O(r^{-4})\right)\partial_{\theta} + O(r^{-4})  \partial_{\phi},  \text{ }\text{ }\text{ }\text{ }\text{ }\text{ }\text{ }
e_{\phi} = \left(\frac{1}{r\sin{\theta}} +O(r^{-4}) \right) \partial_{\phi}.
\end{equation}
It now follows that
\begin{align} \label{10.1.6}
\begin{split}
&\lim_{r\rightarrow\infty}\int_{\overline{S}_{r}} \left( uQ(\nu_{g_{1}})-uk_{1}(\widetilde{v},\nu_{g_{1}})
-k_{1}(\widetilde{Y}^{\phi}\partial_{\phi}, \nu_{g_{1}}) \right)\\
=& \int_{S_{\infty}} u\sqrt{1+u^{2}|\nabla f|_{g}^{2}}(k-\pi)(w(e_{r})e_{r},e_{r})\\
=& 0,
\end{split}
\end{align}
since
\begin{equation}
u=O(r),\quad\quad (k-\pi)(e_{r}, e_{r})=O(r^{-2}),\quad\quad  w(e_{r})=O(r^{-2}).
\end{equation}

Next we treat the second integral on the right-hand side of \eqref{10.1.0}.
Observe that since
$k_{1}(\barna f, \nu_{g_{1}})= k_{1}(\barna \widetilde{f},\nu_{g_{1}}) =0$ on $\overline{S}_{r}$ it holds that
\begin{equation} \label{10.1.7}
\begin{split}
&\lim_{r\rightarrow 0} \int_{\overline{S}_{r}} \left( uQ(\nu_{g_{1}})-uk_{1}(\widetilde{v},\nu_{g_{1}})
-k_{1}(\widetilde{Y}^{\phi}\partial_{\phi}, \nu_{g_{1}})\right) \\
=&\lim_{r\rightarrow 0}\int_{\overline{S}_{r}} u(\operatorname{Hess}_{g_{1}} f)(Y, \nu_{g_{1}})
 + u(k-\pi)(\overline{w}, \nu_{g_{1}})
 + \frac{u^{2} \nu_{g_{1}}(f)(k-\pi)(\overline{w}, \overline{w})}{\sqrt{1-u^{2}|\barna f|_{g_{1}}^{2}}}
 - k_{1}(\widetilde{Y}^{\phi}\partial_{\phi}, \nu_{g_{1}}),
\end{split}
\end{equation}
where as given in \cite{ChaKhuri1}
\begin{align} \label{10.1.8}
\begin{split}
\pi_{ij} =& \frac{1}{\sqrt{1 - u^{2}|\barna f|_{g_{1}}^{2}}}
\left(  u\barna_{ij}f + (k_{1})_{ij} + \frac{f_{i}\psi_{j}+f_{j}\psi_{i}}{2u}
- \frac{ug_{1}^{lp}f_{l}\psi_{p}}{2}f_{i}f_{j}\right) \\
&+ \frac{1}{2}f_{i} \overline{w}^{l}(\overline{Y}_{l,j}-\overline{Y}_{j,l})
+ \frac{1}{2} f_{j}\overline{w}^{l}(\overline{Y}_{l,i}-\overline{Y}_{i,l}), \\
\overline{w} =& \frac{u \barna f + u^{-1}Y }{\sqrt{1-u^{2}|\barna f|_{g_{1}}^{2}}},\quad\quad\quad
\psi = u^{2} - Y^{\phi}\overline{Y}_{\phi},\quad\quad\quad \overline{Y}_{i} = (g_{1})_{i\phi}Y^{\phi}.
\end{split}
\end{align}
Direct computations show that
\begin{equation} \label{10.1.9}
\overline{w}^{r} = O(r^{3}), \text{ }\text{ }\text{ }
\overline{w}^{\theta} = O(r^{4}), \text{ }\text{ }\text{ }
\overline{w}^{\phi}  = O(r^{4}) \text{ }\text{ }\text{ in asymptotically AdS hyperbolic}\text{ }\text{ }(M_1)_{end}^{-},
\end{equation}
\begin{equation} \label{10.1.9b}
\overline{w}^{r} = O(r^{\frac{5}{2}}), \text{ }\text{ }\text{ }
\overline{w}^{\theta} = O(r^{\frac{3}{2}}), \text{ }\text{ }\text{ }
\overline{w}^{\phi}  = \frac{\mathcal{Y}}{r}+O(r^{-\frac{1}{2}})
\text{ }\text{ }\text{ in asymptotically cylindrical}\text{ }\text{ }(M_1)_{end}^{-},
\end{equation}
\begin{equation} \label{10.1.10}
\psi = r^{2} + O(r^{\frac{5}{2}}) \text{ }\text{ }\text{ }\text{ }\text{ in asymptotically AdS hyperbolic}\text{ }\text{ }(M_1)_{end}^{-},
\end{equation}
\begin{equation} \label{10.1.10b}
\psi = - \mathcal{Y}^{2}\sigma_{\phi \phi} + O(r^{\frac{1}{2}}) \text{ }\text{ }\text{ }\text{ }\text{ in asymptotically cylindrical}\text{ }\text{ }(M_1)_{end}^{-},
\end{equation}
\begin{equation} \label{10.1.11}
\begin{split}
&\overline{\Gamma}^{r}_{rr}= -r^{-1} + O(r),\quad
\overline{\Gamma}^{\alpha}_{rr}= O(r^{2}),\quad
\overline{\Gamma}^{\beta}_{r\alpha}= -r^{-1} \delta^{\beta}_{\alpha} + O(r),\quad
\overline{\Gamma}^{r}_{\alpha \beta}= r^{-1}\sigma_{\alpha \beta} + O(r), \\
&\overline{\Gamma}^{r}_{r\alpha}= O(r^{2}),\quad
\overline{\Gamma}^{\gamma}_{\alpha \beta}= (\Gamma_{\sigma})^{\gamma}_{\alpha \beta} + O(r^{2})
\text{ }\text{ in asymptotically AdS hyperbolic}\text{ }\text{ }(M_1)_{end}^{-},
\end{split}
\end{equation}
and
\begin{equation} \label{10.1.11b}
\begin{split}
&\overline{\Gamma}^{r}_{rr}= -r^{-1} + O(r^{-\frac{1}{2}}),\quad\quad
\overline{\Gamma}^{\alpha}_{rr}= O(r^{-\frac{3}{2}}),\quad\quad
\overline{\Gamma}^{\beta}_{r\alpha}= O(r^{-\frac{1}{2}}),\quad\quad
\overline{\Gamma}^{r}_{\alpha \beta}= O(r^{\frac{3}{2}}),\\
&\overline{\Gamma}^{r}_{r\alpha}= O(r^{\frac{1}{2}}),\quad\quad
\overline{\Gamma}^{\gamma}_{\alpha \beta}= (\Gamma_{\sigma})^{\gamma}_{\alpha \beta} + O(r^{\frac{1}{2}})
\text{ }\text{ in asymptotically cylindrical}\text{ }\text{ }(M_1)_{end}^{-},
\end{split}
\end{equation}
where $\overline{\Gamma}^{i}_{jl}$ are Christoffel symbols for $g_1$ and $(\Gamma_{\sigma})^{\gamma}_{\alpha \beta}$ are Christoffel symbols the round metric
$\sigma$ on $S^2$. From this, and the asymptotics in Section \ref{sec7} it follows that
\begin{equation} \label{10.1.12}
\pi_{rr}= O(1), \text{ }\text{ }\text{ }\text{ }\text{ }
\pi_{r \alpha} = O(r), \text{ }\text{ }\text{ }\text{ }\text{ }
\pi_{\alpha \beta}  = O(1) \text{ }\text{ in asymptotically AdS hyperbolic}\text{ }\text{ }(M_1)_{end}^{-},
\end{equation}
\begin{align} \label{10.1.12b}
\begin{split}
&\pi_{rr}= O(r^{-2}), \text{ }\text{ }\text{ }\text{ }
\pi_{r \theta} = O(r^{-\frac{3}{2}}), \text{ }\text{ }\text{ }\text{ }
\pi_{r \phi} = (k_{1})_{r \phi}+ O(1), \text{ }\text{ }\text{ }\text{ } \\
&\pi_{\theta \theta}  = O(r^{-\frac{1}{2}}),  \text{ }\text{ }\text{ }\text{ }
\pi_{\phi \alpha}  = O(1) \text{ }\text{ }\text{ }\text{ }
\text{ in asymptotically cylindrical}\text{ }\text{ }(M_1)_{end}^{-}.
\end{split}
\end{align}
Moreover \eqref{7.232}-\eqref{7.234} imply that
\begin{align} \label{10.1.13}
\begin{split}
&\nu_{g_{1}}= \left(r+ O(r^{3}) \right)\partial_{r} + O(r^{4})\partial_{\alpha}, \\
&dA_{\overline{S}_{r}} = \left(r^{-2} + O(r) \right) dA_{\sigma}
\text{ }\text{ }\text{ }\text{ }\text{ in asymptotically AdS hyperbolic}\text{ }\text{ }(M_1)_{end}^{-},
\end{split}
\end{align}
\begin{align} \label{10.1.13b}
\begin{split}
&\nu_{g_{1}}= \left(r+ O(r^{\frac{3}{2}}) \right)\partial_{r} + o(r^{\frac{1}{2}})\partial_{\alpha}, \\
&dA_{\overline{S}_{r}} = \left(1 + O(r^{\frac{1}{2}}) \right) dA_{\sigma}
\text{ }\text{ }\text{ }\text{ }\text{ in asymptotically cylindrical}\text{ }\text{ }(M_1)_{end}^{-},
\end{split}
\end{align}
where $dA_{\overline{S}_{r}}$ and $dA_{\sigma}$ are respective area forms.
The above calculations together with \eqref{7.11}, \eqref{7.15}, and \eqref{7.22} produce
\begin{equation} \label{10.1.14}
u(\operatorname{Hess}_{g_{1}} f)(Y, \nu_{g_{1}}) = O(r^{9}), \text{ }\text{ }\text{ }\text{ }\text{ }\text{ }\text{ }
u(k-\pi)(\overline{w}, \nu_{g_{1}}) = O(r^{5}),
\end{equation}
\begin{equation} \label{10.1.15}
\frac{u^{2} \nu_{g_{1}}(f)(k-\pi)(\overline{w}, \overline{w})}{\sqrt{1-u^{2}|\barna f|_{g_{1}}^{2}}} = O(r^{9})
\text{ }\text{ }\text{ in asymptotically AdS hyperbolic}\text{ }\text{ }(M_1)_{end}^{-}.
\end{equation}
The desired result now follows from \eqref{10.1.13}, \eqref{10.1.14}, and \eqref{10.1.15} in the case that the secondary end is asymptotically AdS hyperbolic; note that the last term in \eqref{10.1.7} vanishes due to the asymptotics \eqref{7.361}.

Consider now the case when the secondary end is asymptotically cylindrical. Similar calculations to those above yield
\begin{equation} \label{10.1.14b}
u(\operatorname{Hess}_{g_{1}} f)(Y, \nu_{g_{1}}) = O(r^{2}), \quad\text{ }\text{ }\text{ }
u(k-\pi)(\overline{w}, \nu_{g_{1}}) = (k-k_{1})(Y, \nu_{g_{1}}) + O(r^{\frac{1}{2}}),
\end{equation}
\begin{equation} \label{10.1.15b}
\frac{u^{2} \nu_{g_{1}}(f)(k-\pi)(\overline{w}, \overline{w})}{\sqrt{1-u^{2}|\barna f|_{g_{1}}^{2}}} = O(r^{\frac{1}{2}})
\text{ }\text{ }\text{ in asymptotically cylindrical}\text{ }\text{ }(M_1)_{end}^{-}.
\end{equation}
Together with \eqref{10.1.13b} this shows that \eqref{10.1.7} becomes
\begin{equation} \label{10.1.16}
\lim_{r\rightarrow 0} \int_{\overline{S}_{r}} \left( uQ(\nu_{g_{1}})-uk_{1}(\widetilde{v},\nu_{g_{1}}) \right)
= \lim_{r\rightarrow 0}\int_{\overline{S}_{r}} (k-k_{1})(Y, \nu_{g_{1}})
=  \lim_{r\rightarrow 0}\int_{\overline{S}_{r}} k (Y, \nu_{g_{1}}) - 8\pi \mathcal{Y} \mathcal{J}_{1}.
\end{equation}
Analogous computations to (9.21)-(9.29) in \cite{ChaKhuri1} give
\begin{equation} \label{10.1.17}
k(Y, \nu_{g}) = (1+O(r^{3})) k (Y, \nu_{g_{1}}) + O(r^{3})k (Y, \partial_{\theta}) + O(r^{\frac{1}{2}})k(Y, \partial_{\phi}),
\text{ }\text{ }\text{ }\text{ }
dA_{S_{r}} = (1+O(r^{3}))dA_{\overline{S}_{r}},
\end{equation}
and therefore
\begin{equation}\label{10.1.18}
\lim_{r\rightarrow 0} \int_{\overline{S}_{r}} \left( uQ(\nu_{g_{1}})-uk_{1}(\widetilde{v},\nu_{g_{1}}) \right)
= 8\pi \mathcal{Y} (\mathcal{J}-\mathcal{J}_{1}),
\end{equation}
which vanishes if and only if $\mathcal{J} = \mathcal{J}_{1}$.  The last term in
\eqref{10.1.0} immediately vanishes due to the asymptotics \eqref{7.362}.
\end{proof}

\begin{lemma}\label{lemma9.3}
Under the hypotheses and notation of Theorems \ref{thm7.4} and \ref{thm8.1}
\begin{equation}
\int_{M_{3}}\operatorname{div}_{g_{3}}(\widetilde{u}\widetilde{Q}) =0.
\end{equation}
\end{lemma}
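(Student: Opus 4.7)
The plan is to apply the divergence theorem on $M_3$ to reduce the bulk integral to boundary terms at the two ends, and then show that each of these vanishes via the asymptotic analysis already set up in Section \ref{sec7}. This directly parallels the strategy of Lemma \ref{lemma9.2}, but now with $\widetilde{u}\widetilde{Q}$ on $M_3$ in place of the $M_2$-integrand treated there. Concretely I would write
\begin{equation*}
\int_{M_{3}}\operatorname{div}_{g_{3}}(\widetilde{u}\widetilde{Q}) = \lim_{r \to \infty}\int_{\widetilde{S}_r} \widetilde{u}\,\widetilde{Q}(\nu_{g_3}) - \lim_{r \to 0}\int_{\widetilde{S}_r} \widetilde{u}\,\widetilde{Q}(\nu_{g_3}),
\end{equation*}
and analyze the four pieces of $\widetilde{Q}$ in \eqref{7.41} separately: $\widetilde{Y}^{j}\hat{\nabla}_{ij}\widetilde{f}$, $-\widetilde{u}g_{3}^{jl}\widetilde{f}_{l}(k_{3})_{ij}$, $\widetilde{w}^{j}(k_{2}-\pi_{2})_{ij}$, and the term $\widetilde{u}\widetilde{f}_{i}\widetilde{w}^{l}\widetilde{w}^{j}(k_{2}-\pi_{2})_{lj}\sqrt{1+\widetilde{u}^{2}|\widetilde{\nabla}\widetilde{f}|_{g_{2}}^{2}}$.

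At the outer end $(M_{3})_{end}^{+}$, which is asymptotically flat by Lemma \ref{lemma7.3}, axisymmetry of $\widetilde{f}$ and $\widetilde{Y}$ forces the Killing-direction contractions against $\nu_{g_{3}}$ to vanish, and the remaining contributions are controlled by the expansions \eqref{7.42}--\eqref{7.45} for $\widetilde{u}$ and $\widetilde{f}$ together with \eqref{7.360} for $\widetilde{Y}^{\phi}$. The identity $u = \widetilde{u}\sqrt{1+\widetilde{u}^{2}|\widetilde{\nabla}\widetilde{f}|_{g_{2}}^{2}}$ from \eqref{7.56} lets one translate the hyperboloidal-side decay estimates to the asymptotically flat $M_{3}$-side integrand, after which the calculations at infinity carried out in Appendix B of \cite{ChaKhuriSakovich} apply essentially verbatim to show the limit is zero.

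At the inner end there are two subcases according to Lemma \ref{lemma7.3}. If $(M_{2})_{end}^{-}$ is asymptotically AdS hyperbolic, so that $(M_{3})_{end}^{-}$ is asymptotically flat after Kelvin inversion, the expansions \eqref{7.420}, \eqref{7.46}, and \eqref{7.361} combined with the frame and Christoffel estimates already compiled in the proof of Lemma \ref{lemma9.2} supply more than enough decay to send the boundary integral to zero. If instead $(M_{2})_{end}^{-}$ is asymptotically cylindrical, the corresponding blow-up asymptotics \eqref{7.421}, \eqref{7.47}, and \eqref{7.362} play the same role, and any residual angular momentum contribution coming from the $\widetilde{Y}^{\phi}\partial_{\phi}$ piece cancels thanks to $\mathcal{J}_{3} = \mathcal{J}_{2}$, precisely as it did in the parallel step of Lemma \ref{lemma9.2}.

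The main obstacle I anticipate is the fourth piece of $\widetilde{Q}$ at infinity, where $\widetilde{f}_{r}$ inherits a leading $\sim r/\sqrt{1+r^{2}}$ contribution from the hyperboloidal profile and $\widetilde{w}$ retains an $O(1)$ radial component, so that naive estimates are borderline non-integrable against the flat area element. The cancellation to exploit is that once the Jang equation \eqref{7.37} is imposed, the dominant radial contraction $\widetilde{w}^{l}\widetilde{w}^{j}(k_{2}-\pi_{2})_{lj}$ annihilates at leading order, leaving only subleading terms whose integrals over $\widetilde{S}_{r}$ tend to zero; an analogous cancellation at the inner end follows from the prescribed blow-up asymptotics of $\widetilde{f}$ and $\widetilde{u}$.
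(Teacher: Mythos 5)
Your overall skeleton matches the paper's: divergence theorem on $M_3$, the boundary term at infinity handled by the computations of Appendix B of \cite{ChaKhuriSakovich}, and the inner boundary term in the asymptotically cylindrical case handled by \cite{ChaKhuri1} (the paper additionally notes one must verify the $O(1)$ bound \eqref{10.3.0}, which follows from \eqref{7.1b}, \eqref{7.12}, \eqref{7.16}, \eqref{7.23}; your appeal to $\mathcal{J}_3=\mathcal{J}_2$ here is a misplaced echo of Lemma \ref{lemma9.2}, but the citation covers this case anyway). The genuine gap is in the one case that is actually new in this lemma: the inner end when $(M_2)_{end}^{-}$ is asymptotically hyperboloidal, so that $(M_3)_{end}^{-}$ is asymptotically flat. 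You assert that the expansions \eqref{7.420}, \eqref{7.46}, \eqref{7.361} "supply more than enough decay to send the boundary integral to zero," but this is false as stated: decay alone does not kill this term. Writing the integrand as in \eqref{10.3.1}, the Hessian piece is indeed negligible, but the piece $\widetilde{u}\sqrt{1+\widetilde{u}^{2}|\widetilde{\nabla}\widetilde{f}|_{g_2}^{2}}\,(k_2-\pi_2)(\widetilde{w},\nu_{g_2})$ is borderline: using \eqref{10.3.6}, \eqref{10.3.7} its limit is the finite quantity
\begin{equation*}
\int_{S^{2}} 2\left(\mathcal{D}_{3}-\mathcal{C}_{4}\right)
+\lim_{r\rightarrow 0}\int_{\overline{S}_{r}}\nu_{g_{2}}^{\alpha}\left(\mathcal{D}_{3}-\mathcal{C}_{4}\right)_{,\alpha},
\end{equation*}
where $\mathcal{D}_3$ is the coefficient of the $r^{-2}$ term in the expansion \eqref{7.46} of $\widetilde{f}$ and $\mathcal{C}_4$ is the sub-leading coefficient in the expansion \eqref{7.420} of $\widetilde{u}$. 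The vanishing therefore rests on the exact identity $\mathcal{D}_3=\mathcal{C}_4$, which is built into the prescribed asymptotics \eqref{7.46}, and not on any surplus of decay.

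Your closing remark that "an analogous cancellation at the inner end follows from the prescribed blow-up asymptotics" gestures at the right phenomenon but does not identify the mechanism, and it also contradicts your earlier claim that decay suffices; moreover the cancellation you describe (the Jang equation \eqref{7.37} annihilating the radial contraction $\widetilde{w}^{l}\widetilde{w}^{j}(k_2-\pi_2)_{lj}$ at leading order) is not what does the work here. To close the gap you would need to carry out the expansion of $(\pi_2)_{rr}$, $(\pi_2)_{r\alpha}$, $(\pi_2)_{\alpha\beta}$ and of $\sqrt{1+\widetilde{u}^{2}|\widetilde{\nabla}\widetilde{f}|_{g_2}^{2}}\,\widetilde{w}$ near $r=0$ to the stated orders, exhibit the surviving $O(1)$ surface term displayed above, and then invoke the matching of coefficients between \eqref{7.420} and \eqref{7.46}, as is done in \eqref{10.3.6}--\eqref{10.3.8}.
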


\begin{proof}
The divergence theorem gives
\begin{equation} \label{10.3.0}
\int_{M_{3}}\operatorname{div}_{g_{3}}(\widetilde{u}\widetilde{Q}) \\
=  \lim_{r\rightarrow\infty}\int_{\widetilde{S}_{r}}  \widetilde{u}\widetilde{Q}(\nu_{g_{3}})
- \lim_{r\rightarrow 0}\int_{\widetilde{S}_{r}} \widetilde{u}\widetilde{Q}(\nu_{g_{3}}).
\end{equation}
It is shown in \cite{ChaKhuriSakovich} that the first integral vanishes, and that the second integral also vanishes when the secondary end is asymptotically cylindrical \cite{ChaKhuri1}. For this latter result the following condition is needed
\begin{equation} \label{10.3.0}
|k_{1}+g_{1}|_{g_{1}} + |(k_{1}+g_{1})(\partial_{\phi}, \cdot)|_{g_{1}} + |(k_{1}+g_{1})(\partial_{\phi}, \partial_{\phi})|
=O(1),
\end{equation}
which is straightforward to derive from \eqref{7.1b}, \eqref{7.12}, \eqref{7.16}, and \eqref{7.23}. Hence, we will focus on the second integral when $(M_{3})^{-}_{end}$ is asymptotically flat.

Similar computations to those in Appendix B of \cite{ChaKhuriSakovich} apply here to produce
\begin{equation} \label{10.3.1}
\lim_{r\rightarrow 0} \int_{\widetilde{S}_{r}} \widetilde{u}\widetilde{Q}(\nu_{g_{3}})
=\lim_{r\rightarrow 0} \int_{\widetilde{S}_{r}} \widetilde{u}(\operatorname{Hess}_{g_{3}} \widetilde{f})(\widetilde{Y}, \nu_{g_{3}})
 + \lim_{r\rightarrow 0} \int_{\overline{S}_{r}} \widetilde{u}\sqrt{1+\widetilde{u}^{2}|\widetilde{\nabla} \widetilde{f}|_{g_{2}}^{2}}(k_{2}-\pi_{2})(\widetilde{w},\nu_{g_{2}}),
\end{equation}
where
\begin{equation}\label{10.3.2}
(\pi_{2})_{ij}=\frac{ \widetilde{u} \widetilde{\nabla}_{ij}f
+\widetilde{u}_{i}\widetilde{f}_{j}
+\widetilde{u}_{j}\widetilde{f}_{i}
+\frac{1}{2\widetilde{u}}\left((g_{2})_{i\phi}\widetilde{Y}^{\phi}_{,j}+(g_{2})_{j\phi}
\widetilde{Y}^{\phi}_{,i}\right)}{\sqrt{1+\widetilde{u}^{2}|\widetilde{\nabla} \widetilde{f}|_{g_{2}}^{2}}},\text{ }\text{ }\text{ }\text{ }\text{ }\text{ }\text{ }
\widetilde{w}^{i}=\frac{\widetilde{u}\widetilde{f}^{i} +\widetilde{u}^{-1}\widetilde{Y}^{i}}{\sqrt{1+\widetilde{u}^{2}|\widetilde{\nabla} \widetilde{f}|_{g_{2}}^{2}}}.
\end{equation}
Asymptotics for the Christoffel symbols of $g_3$ are given by
\begin{equation} \label{10.3.3}
\widetilde{\Gamma}^{r}_{r \phi}= o(r^{\frac{1}{2}}),\text{ }\text{ }\text{ }\text{ }\text{ }
\widetilde{\Gamma}^{\theta}_{r \phi}= o(r^{-\frac{1}{2}}),\text{ }\text{ }\text{ }\text{ }\text{ }
\widetilde{\Gamma}^{r}_{\alpha \phi}= r \sigma_{\alpha \phi}+o(r^{\frac{3}{2}}),\text{ }\text{ }\text{ }\text{ }\text{ }
\widetilde{\Gamma}^{\theta}_{\alpha \phi}= -\frac{1}{2}\sigma_{\alpha \phi, \theta}  + o(r^{\frac{1}{2}}),
\end{equation}
and
\begin{equation} \label{10.3.4}
\nu_{g_{3}}= \left(r^{2}+ o(r^{5/2}) \right)\partial_{r} + o(r^{\frac{3}{2}})\partial_{\alpha}, \text{ }\text{ }\text{ }\text{ }\text{ }
dA_{\widetilde{S}_{r}} = \left(r^{-2} + o(r^{-\frac{3}{2}}) \right) dA_{\sigma}.
\end{equation}
It follows that
\begin{equation} \label{10.3.5}
\widetilde{u}(\operatorname{Hess}_{g_{3}} \widetilde{f})(\widetilde{Y}, \nu_{g_{3}}) = o(r^{\frac{11}{2}}),
\end{equation}
and therefore the first integral in \eqref{10.3.1} vanishes.

Consider now the second integral in \eqref{10.3.1}. Observe that \eqref{7.361}, \eqref{7.420}, and \eqref{7.46} imply
\begin{align} \label{10.3.6}
\begin{split}
&(\pi_{2})_{rr}= r^{-2} + 2\left( \mathcal{C}_{4}-\mathcal{D}_{3} \right)r^{-1} + o(r^{-\frac{1}{2}}), \text{ }\text{ }\text{ }\text{ }\text{ } \\
&(\pi_{2})_{r \alpha} = \left(\mathcal{C}_{4} + \mathcal{D}_{3} \right)_{,\alpha} + o(r^{\frac{1}{2}}), \text{ }\text{ }\text{ }\text{ }\text{ } \\
&(\pi_2)_{\alpha \beta}  = -r^{-2} \sigma_{\alpha \beta} + o(r^{-\frac{1}{2}}),
\end{split}
\end{align}
%\textcolor{blue}{Note that the first and the third will start with the terms with opposite signs %upto the sign of the first term in the asymptotics of $f$. This yields that $k_{2}$ cannot be %equal to $\pi_{2}$. This gives another reason why the rigidity case cannot be realized by our %method.}
and we also have
\begin{equation} \label{10.3.7}
\sqrt{1+\widetilde{u}^{2}|\widetilde{\nabla} \widetilde{f}|_{g_{2}}^{2}} \widetilde{w}^{r}
= 1 + \left( \mathcal{C}_{4} - 2\mathcal{D}_{3} \right)r + o(r^{\frac{3}{2}}),
\quad
\sqrt{1+\widetilde{u}^{2}|\widetilde{\nabla} \widetilde{f}|_{g_{2}}^{2}} \widetilde{w}^{\alpha}
= r^{2}\sigma^{\alpha \beta} \left(\mathcal{C}_{4} \right)_{,\beta} + o(r^{\frac{5}{2}}),
\end{equation}
where $\mathcal{D}_3$ is the coefficient of the $r^{-2}$-term in the expansion \eqref{7.46}.
It then follows from \eqref{7.420} and \eqref{10.1.13} that
\begin{equation} \label{10.3.8}
\lim_{r\rightarrow 0} \int_{\overline{S}_{r}} \widetilde{u}\sqrt{1+\widetilde{u}^{2}|\widetilde{\nabla} \widetilde{f}|_{g_{2}}^{2}}(k_{2}-\pi_{2})(\widetilde{w},\nu_{g_{2}})
=  \int_{S^{2}} 2 \left( \mathcal{D}_{3}- \mathcal{C}_{4} \right) + \lim_{r\rightarrow 0} \int_{\overline{S}_{r}} \nu_{g_{2}}^{\alpha} \left( \mathcal{D}_{3}-\mathcal{C}_{4}\right)_{,\alpha}.
\end{equation}
Since $\mathcal{D}_{3}=\mathcal{C}_4$ the desired result is attained.
\end{proof}

\begin{lemma}\label{lemma10.4}
\begin{equation} \label{10.2}
J_{1}(\widetilde{v})
= \left(\operatorname{div}_{g_{1}} k_{1}\right)(\widetilde{v})
= u^{-1} \operatorname{div}_{g_{1}} \left(u k_{1}(\widetilde{v}, \cdot)  \right)
-g_{1} \left( k_{1}, \pi_{2} \right)
+ \frac{ \operatorname{div}_{g_{1}} \left( k_{1}( \widetilde{Y}^{\phi} \partial_{\phi}, \cdot)  \right)}{\widetilde{u}\sqrt{1+ \widetilde{u}^{2}|\widetilde{\nabla} \widetilde{f}|_{g_{2}}^{2}}}
\end{equation}
\end{lemma}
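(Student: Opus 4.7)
The plan is to establish the claimed identity via a direct pointwise calculation, organizing the computation around two applications of the divergence product rule together with the explicit formulas for $\pi_2$ and $\widetilde{v}$. I set $W := \sqrt{1+\widetilde{u}^2|\widetilde{\nabla}\widetilde{f}|_{g_2}^2}$, so that $\widetilde{v}_j = \widetilde{u}\widetilde{f}_j/W$ and, by \eqref{7.56}, $u = \widetilde{u}W$. Since $(M_1,g_1,k_1)$ is maximal by Lemma \ref{lemma7.1}, one has $J_1 = \operatorname{div}_{g_1}k_1$. The first step is to write
\begin{equation*}
(\operatorname{div}_{g_1}k_1)(\widetilde{v}) = u^{-1}\operatorname{div}_{g_1}\bigl(u\,k_1(\widetilde{v},\cdot)\bigr) - u^{-1}(\overline{\nabla}^i u)(k_1)_{ij}\widetilde{v}^j - (k_1)^{ij}\overline{\nabla}_i\widetilde{v}_j,
\end{equation*}
so that the lemma reduces to verifying the pointwise identity
\begin{equation*}
g_1(k_1,\pi_2) - (k_1)^{ij}\overline{\nabla}_i\widetilde{v}_j - u^{-1}(\overline{\nabla}^i u)(k_1)_{ij}\widetilde{v}^j = \frac{\operatorname{div}_{g_1}\bigl(k_1(\widetilde{Y}^\phi\partial_\phi,\cdot)\bigr)}{\widetilde{u}\sqrt{1+\widetilde{u}^2|\widetilde{\nabla}\widetilde{f}|_{g_2}^2}}.
\end{equation*}

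Next I would differentiate $\widetilde{v}$ directly and compare with $\pi_2$ given by \eqref{7.39}. Using symmetry of $k_1$, the combination $(k_1)^{ij}\overline{\nabla}_i\widetilde{v}_j$ equals $\tfrac12(k_1)^{ij}(\overline{\nabla}_i\widetilde{v}_j+\overline{\nabla}_j\widetilde{v}_i)$, and a straightforward calculation gives
\begin{equation*}
(\pi_2)_{ij} - \tfrac12(\overline{\nabla}_i\widetilde{v}_j + \overline{\nabla}_j\widetilde{v}_i) = \frac{\tfrac12(\widetilde{u}_i\widetilde{f}_j+\widetilde{u}_j\widetilde{f}_i)}{W} + \frac{(g_2)_{i\phi}\widetilde{Y}^\phi_{,j}+(g_2)_{j\phi}\widetilde{Y}^\phi_{,i}}{2\widetilde{u}W} + \frac{\widetilde{u}(\widetilde{f}_iW_j+\widetilde{f}_jW_i)}{2W^2}.
\end{equation*}
Contracting with $(k_1)^{ij}$ and then using the logarithmic derivative relation $u^{-1}\overline{\nabla}_iu = \widetilde{u}_i/\widetilde{u}+W_i/W$ that follows from $u=\widetilde{u}W$, the $\widetilde{u}_i$ and $W_i$ contributions are seen to cancel exactly against $u^{-1}(\overline{\nabla}^i u)(k_1)_{ij}\widetilde{v}^j$, in which $\widetilde{v}^j=\widetilde{u}\widetilde{f}^j/W$ is expanded explicitly. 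What survives is the single residual contribution $(k_1)^{ij}(g_2)_{i\phi}\widetilde{Y}^\phi_{,j}/(\widetilde{u}W)$.

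To complete the proof I would convert this residual into the target divergence by observing that
\begin{equation*}
\operatorname{div}_{g_1}\bigl(k_1(\widetilde{Y}^\phi\partial_\phi,\cdot)\bigr) = \overline{\nabla}^j\bigl(\widetilde{Y}^\phi (k_1)_{j\phi}\bigr) = (k_1)^{ij}(g_1)_{i\phi}\widetilde{Y}^\phi_{,j} + \widetilde{Y}^\phi\,\overline{\nabla}^j(k_1)_{j\phi},
\end{equation*}
and then invoking the angular momentum constraint \eqref{7.9}, which reads $(\operatorname{div}_{g_1}k_1)(\eta) = \overline{\nabla}^j(k_1)_{j\phi} = 0$ since $\eta = \partial_\phi$; this kills the second summand. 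Combined with the identification $(g_1)_{i\phi}=(g_2)_{i\phi}$, the identity follows. The main obstacle I anticipate is the algebraic bookkeeping in the symmetric-gradient computation — in particular verifying that the $\widetilde{u}_i$ and $W_i$ terms vanish in precisely the right combination — but once the relation $u=\widetilde{u}W$ is exploited this cancellation is automatic, and the axisymmetric constraint cleanly handles the final conversion to a divergence.
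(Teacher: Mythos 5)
Your proposal is correct, and it reaches the identity by a genuinely different mechanism than the paper's proof. The paper's argument hinges on the structural formula \eqref{10.2.1} for $k_1$ coming from \eqref{7.24}, which yields $k_1^{jp}\partial_j h_1\,\partial_p h_2=0$ for any pair of axisymmetric functions (see \eqref{10.2.2}); this kills the unwanted contractions $k_1^{ij}\widetilde{u}_i\widetilde{f}_j$, $k_1^{ij}\widetilde{f}_i\partial_j W$ and $u^{-1}k_1(\overline{\nabla}u,\widetilde{v})$ term by term, so that the paper's version of the identity holds for an arbitrary axisymmetric warping function $u$, independently of the coupling \eqref{7.56}. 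You never use the special form of $k_1$ beyond its symmetry: you retain the term $u^{-1}(\overline{\nabla}^i u)(k_1)_{ij}\widetilde{v}^j$ and cancel it exactly against the $\widetilde{u}_i$- and $W_i$-contributions in $(\pi_2)_{ij}-\tfrac12(\overline{\nabla}_i\widetilde{v}_j+\overline{\nabla}_j\widetilde{v}_i)$ through the relation $u^{-1}\overline{\nabla}u=\widetilde{u}^{-1}\overline{\nabla}\widetilde{u}+W^{-1}\overline{\nabla}W$, which is precisely where \eqref{7.56} enters. This trade is legitimate in the context of Theorems \ref{thm7.4} and \ref{thm8.1}, where \eqref{7.56} is part of the coupled system; it buys you independence from axisymmetry of $u,\widetilde{u},\widetilde{f}$ in the middle cancellation, while the paper's route buys validity of the lemma without assuming the $u$--$\widetilde{u}$ coupling. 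Your opening reduction $J_1(\widetilde{v})=(\operatorname{div}_{g_1}k_1)(\widetilde{v})$ via maximality (Lemma \ref{lemma7.1}) is correct and is more than the paper records.

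One step should be made explicit in your final conversion. The product rule as you wrote it, $\overline{\nabla}^j\bigl(\widetilde{Y}^\phi (k_1)_{j\phi}\bigr)=g_1^{jl}(k_1)_{l\phi}\widetilde{Y}^\phi_{,j}+\widetilde{Y}^\phi\,\overline{\nabla}^j(k_1)_{j\phi}$ with $\overline{\nabla}^j(k_1)_{j\phi}=(\operatorname{div}_{g_1}k_1)(\eta)$, silently discards the term $\widetilde{Y}^\phi (k_1)^{ij}\overline{\nabla}_i\eta_j=\tfrac12\widetilde{Y}^\phi\, g_1\!\left(k_1,\mathfrak{L}_\eta g_1\right)$, which vanishes only because $\eta=\partial_\phi$ is a Killing field of $g_1$ and $k_1$ is symmetric. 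You should state this (the paper's last equality relies on the same fact, together with \eqref{7.9}, without comment). With that sentence added, and the observation $(g_1)_{i\phi}=(g_2)_{i\phi}$ from $g_2\equiv g_1$ which you already note, your computation is complete.
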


\begin{proof}
From \eqref{7.24} we have
\begin{equation} \label{10.2.1}
k_{1}^{jp} = g_{1}^{ij}g_{1}^{lp}\left(\frac{(g_{1})_{l \phi} \partial_{i}Y^{\phi}+(g_{1})_{i \phi} \partial_{l}Y^{\phi}}{2u}\right)
= \frac{\delta^p_{\phi} g_{1}^{ij}\partial_{i}Y^{\phi}+\delta^j_{\phi} g_{1}^{lp}\partial_{l}Y^{\phi}}{2u}.
\end{equation}
Therefore, for any axially symmetric functions $h_{1}$ and $h_{2}$ it holds that
\begin{equation} \label{10.2.2}
k_{1}^{jp}\partial_{j}h_{1}\partial_{p}h_{2}=0.
\end{equation}
This eventually yields
\begin{equation} \label{10.2.3}
\begin{split}
\left(\operatorname{div}_{g_{1}} k_{1}\right)(\widetilde{v})
&= \operatorname{div}_{g_{1}} \left( k_{1}(\widetilde{v}, \cdot)  \right)
- k_{1}^{jp} \overline{\nabla}_{j} \widetilde{v}_p   \\
&= u^{-1}\operatorname{div}_{g_{1}} \left( k_{1}(u\widetilde{v}, \cdot)  \right)
- g_{1} \left( k_{1}, \pi_{2} \right)
+ \frac{k_{1} (g_{1}^{lp} \partial_{l}\widetilde{Y}^{\phi}\partial_{p}, \partial_{\phi})}{\widetilde{u}\sqrt{1+ \widetilde{u}^{2}|\widetilde{\nabla} \widetilde{f}|_{g_{2}}^{2}}} \\
&= u^{-1}\operatorname{div}_{g_{1}} \left( k_{1}(u\widetilde{v}, \cdot)  \right)
- g_{1} \left( k_{1}, \pi_{2} \right)
+ \frac{ \operatorname{div}_{g_{1}} \left( k_{1}( \widetilde{Y}^{\phi} \partial_{\phi}, \cdot)  \right)}{\widetilde{u}\sqrt{1+ \widetilde{u}^{2}|\widetilde{\nabla} \widetilde{f}|_{g_{2}}^{2}}}, \\
\end{split}
\end{equation}
since
\begin{equation} \label{10.2.4}
\overline{\nabla}_{j} \widetilde{v}_p
= (\pi_{2})_{jp}
- \frac{\widetilde{u}_{p}\widetilde{f}_{j}
+ \frac{1}{2\widetilde{u}}\left((g_{1})_{j\phi}\widetilde{Y}^{\phi}_{,p}+(g_{1})_{p\phi}
\widetilde{Y}^{\phi}_{,j}\right)}{\sqrt{1+\widetilde{u}^{2}|\widetilde{\nabla} \widetilde{f}|_{g_{2}}^{2}}}
+ \widetilde{u} \widetilde{f}_{p} \partial_{j} \left( 1+\widetilde{u}^{2}|\widetilde{\nabla} \widetilde{f}|_{g_{2}}^{2} \right)^{-\frac{1}{2}}.
\end{equation}
\end{proof}

\section{Appendix B: The Kerr-AdS Black Hole}

\label{set14} \setcounter{equation}{0}
\setcounter{section}{13}

Here we discuss the Kerr-AdS metric with the aim of motivating the asymptotics
for $Y^{\phi}$ in Section \ref{sec7}. Recall that the domain of outer communication of the Kerr-AdS spacetime has topology $\mathbb{R}\times [\widetilde{r}_{+},\infty)\times S^2$, and the metric with cosmological constant $\Lambda = -3$ in Boyer-Lindquist coordinates is given by
\begin{equation} \label{14.1}
\begin{split}
\widetilde{g} =& - \left(1 - \frac{2\mathfrak{m} \widetilde{r}}{\widetilde{r}^{2}+a^{2}\cos^{2}{\widetilde{\theta}}} + \widetilde{r}^{2}+a^{2}\sin^{2}{\widetilde{\theta}}\right)d\widetilde{t}^{2} \\
 &- \frac{2a \sin^{2}{\widetilde{\theta}}}{1-a^{2}}\left( \frac{2\mathfrak{m}\widetilde{r}}{\widetilde{r}^{2}+a^{2}\cos^{2}{\widetilde{\theta}}} -(\widetilde{r}^{2}+a^{2}) \right) d\widetilde{t} d\widetilde{\phi}\\
 &+ \frac{\widetilde{r}^{2}+a^{2}\cos^{2}{\widetilde{\theta}}}{\widetilde{r}^{4}+(a^{2}+1)
 \widetilde{r}^{2}-2\mathfrak{m}\widetilde{r}+a^{2}} d\widetilde{r}^{2}
 + \frac{\widetilde{r}^{2}+a^{2}\cos^{2}{\widetilde{\theta}}}{1-a^{2}\cos^{2}
 {\widetilde{\theta}}}d\widetilde{\theta}^{2}  \\
 &+ \frac{\sin^{2}{\widetilde{\theta}}}{(1-a^{2})^{2}} \left( \frac{2\mathfrak{m}\widetilde{r}a^{2}\sin^{2}{\widetilde{\theta}}}
 {\widetilde{r}^{2}+a^{2}\cos^{2}{\widetilde{\theta}}}
  + (\widetilde{r}^{2}+a^{2})(1-a^{2})\right) d\widetilde{\phi}^{2},
\end{split}
\end{equation}
where $\mathfrak{m}$ and $a$ are parameters related to the mass and angular momentum through the formulas
\begin{equation} \label{14.7}
m = \frac{\mathfrak{m}}{(1-a^{2})^{2}},\quad\quad\quad \mathcal{J}= -\frac{\mathfrak{m}a}{(1-a^{2})^{2}}.
\end{equation}
The solution is valid for $|a|<1$ and is singular otherwise. The event horizon is located at the
larger of the two positive roots $\widetilde{r}_{+}\geq \widetilde{r}_{-}$ of the polynomial
$\Delta_{\widetilde{r}}=
\widetilde{r}^{4}+(a^{2}+1)\widetilde{r}^{2}-2\mathfrak{m}\widetilde{r}+a^{2}$. It can be shown \cite{CaldarelliKlemm} that
\begin{equation} \label{14.2}
\mathfrak{m} \geq \frac{1}{3\sqrt{6}}\left(\sqrt{1+14a^{2}+a^{4}} + 2a^{2} +2 \right)
\left(\sqrt{1+14a^{2}+a^{4}} - a^{2} -1 \right)^{\frac{1}{2}},
\end{equation}
unless the metric contains a naked singularity. Equality is achieved in \eqref{14.2} in the case of an extreme black hole, that is when $\widetilde{r}_{+}= \widetilde{r}_{-}$.

As pointed out in \cite{HenneauxTeitelboim}, the line element of \eqref{14.1} does not satisfy the asymptotics in \eqref{2}. However, this may be remedied by an appropriate coordinate transformation (cf. Appendix B of \cite{HenneauxTeitelboim})
\begin{equation} \label{14.3}
t = \widetilde{t}, \quad\quad
\phi = \widetilde{\phi} + a \widetilde{t}, \quad\quad
r \cos{\theta} = \widetilde{r} \cos{\widetilde{\theta}}, \quad\quad
(1-a^{2})r^{2} = \widetilde{r}^{2} (1-a^{2}\cos^{2}{\widetilde{\theta}}) + a^{2}\sin^{2}{\widetilde{\theta}}.
\end{equation}
In particular, under this new coordinate system the components of the metric have the
following asymptotics
\begin{align} \label{14.4}
\begin{split}
&\widetilde{g}_{tt} = -(1+r^{2}) + \frac{2\mathfrak{m}}{r} (1-a^{2}\sin^{2}{\theta})^{-\frac{5}{2}} + O(r^{-3}), \\
&\widetilde{g}_{t \phi} = -\frac{2\mathfrak{m}a \sin^{2}{\theta}}{r} (1-a^{2}\sin^{2}{\theta})^{-\frac{5}{2}} +O(r^{-3}), \\
&\widetilde{g}_{\phi \phi} = (g_{0})_{\phi \phi} + \frac{2\mathfrak{m}a^{2}\sin^{4}{\theta}}{r}(1-a^{2}\sin^{2}{\theta})^{-\frac{5}{2}} +O(r^{-3}), \\
&\widetilde{g}_{rr} = (g_{0})_{rr} + \frac{2\mathfrak{m}}{r^{5}}(1-a^{2}\sin^{2}{\theta})^{-\frac{3}{2}} + O(r^{-7}), \\
&\widetilde{g}_{r\theta} = -\frac{2\mathfrak{m}a^{2}\sin{\theta}\cos{\theta}}{r^{4}}(1-a^{2}\sin^{2}{\theta})^{-\frac{5}{2}} +O(r^{-6}), \\
&\widetilde{g}_{\theta \theta} = (g_{0})_{\theta \theta} + \frac{2\mathfrak{m}a^{4}\sin^{2}{\theta}\cos^{2}{\theta}}{r^{3}}
(1-a^{2}\sin^{2}{\theta})^{-\frac{7}{2}} +O(r^{-5}),
\end{split}
\end{align}
where $g_{0}$ is the hyperbolic metric as given in Section \ref{sec2}. It is then clear that the induced metric on the $t=0$ slice satisfies \eqref{2}.

According to the definition of $Y^{\phi}$ we find that
\begin{equation} \label{14.5}
Y^{\phi} = -(\widetilde{g}_{\phi \phi})^{-1} \widetilde{g}_{t \phi} = \frac{2\mathfrak{m}a}{r^{3}}(1-a^{2}\sin^{2}{\theta})^{-\frac{5}{2}} +O(r^{-4}).
\end{equation}
This motivates the asymptotics for $Y^{\phi}$ in \eqref{7.10}, in which $\mathcal{J}(\theta)=- \mathfrak{m}a(1-a^{2}\sin^{2}{\theta})^{-\frac{5}{2}}$. It is straightforward to check that the total angular momentum is given by
\begin{equation} \label{14.6}
\mathcal{J} = \frac{3}{4} \int_{0}^{\pi} \mathcal{J}(\theta) \sin^{3}{\theta} d\theta.
\end{equation}


\begin{thebibliography}{99}

\bibitem{AbbottDeser} L. Abbott, and S. Deser, \textit{Stability of gravity with a cosmological constant}, Nuclear Phys. B, \textbf{195} (1982), no. 1, 76-96.

\bibitem{AnderssonCaiGalloway} L. Andersson, M. Cai, and G. Galloway,
\textit{Rigidity and positivity of mass for asymptotically hyperbolic manifolds,} Ann. Henri Poincar\'{e}, \textbf{9} (2008), no. 1, 1-33. arXiv:math/0703259

\bibitem{AshtekarMagnon}  A. Ashtekar, and A. Magnon, \textit{Asymptotically anti-de Sitter space-times}, Class. Quantum Grav., \textbf{1}
(1984), no. 4, L39-L44.

%\bibitem{BalehowskyWoolgar} T. Balehowsky, and E. Woolgar, \textit{The Ricci flow of %asymptotically hyperbolic mass and applications}, J. Math. Phys., \textbf{53} (2012), no. %7, 072501. arXiv:1110.0765

%\bibitem{BartnikChrusciel} R. Bartnik, and P. Chru\'{s}ciel, \textit{Boundary
%value problems for Dirac-type equations}, J. Reine Angew. Math., \textbf{579} (2005), %13-73. arXiv:math/0307278

\bibitem{Bizon} P. Bizo\'{n}, \textit{Is AdS stable?}, Gen. Relativity
Gravitation, \textbf{46} (2014), 1724. arXiv:1312.5544

\bibitem{BizonRostworowski} P. Bizo\'{n}, and A. Rostworowski, \textit{Weakly turbulent instability of anti-de Sitter spacetime}, Phys. Rev.
Lett., \textbf{107} (2011), 031102. arXiv:1104.3702


\bibitem{Bray} H. Bray, \textit{Proof of the Riemannian Penrose inequality using the positive mass
theorem}, J. Differential Geom., \textbf{59} (2001), 177-267. arXiv:math/9911173

\bibitem{BrayKhuri1} H. Bray, and M. Khuri, \textit{A Jang equation approach to the Penrose inequality},
Discrete Contin. Dyn. Syst., \textbf{27} (2010), no. 2, 741-766. arXiv:0910.4785

\bibitem{BrayKhuri2} H. Bray, and M. Khuri, \textit{P.D.E.'s which imply the Penrose conjecture},
Asian J. Math., \textbf{15} (2011), no. 4, 557-610. arXiv:0905.2622

%\bibitem{Brill} D. Brill, \textit{On the positive definite mass of the %Bondi-Weber-Wheeler
%time-symmetric gravitational waves}, Ann. Phys., \textbf{7} (1959), 466�1�7483.

\bibitem{CaldarelliCognolaKlemm} M. Caldarelli, G. Cognola, and D. Klemm, \textit{Thermodynamics of Kerr-Newman-AdS
black holes and conformal field theories}, Class. Quantum Grav., \textbf{17} (2000), 399-420. arXiv:hep-th/9908022

\bibitem{CaldarelliKlemm} M. Caldarelli, and D. Klemm, \textit{Supersymmetry of anti-de Sitter black holes}, Nucl. Phys. B., \textbf{545} (1999), 434-460. arXiv:hep-th/9808097

\bibitem{CederbaumCortierSakovich} C. Cederbaum, J. Cortier, and A. Sakovich, \textit{On the center of mass of asymptotically hyperbolic initial data sets}, Ann. Henri Poincar\'{e}, \textbf{17} (2016), no. 6, 1505-1528. arXiv:1501.05559

\bibitem{ChaKhuri1} Y.-S. Cha, and M. Khuri, \textit{Deformations of axially symmetric initial data and the mass-angular momentum inequality}, Ann. Henri Poincar\'{e}, \textbf{16} (2015), no. 3, 841-896. arXiv:1401.3384

\bibitem{ChaKhuri2} Y.-S. Cha, and M. Khuri, \textit{Deformations of charged axially symmetric initial data and the mass-angular momentum-charge inequality}, Ann. Henri Poincar\'{e}, \textbf{16} (2015), no. 12, 2881-2918. arXiv:1407.3621

\bibitem{ChaKhuriSakovich} Y.-S. Cha, M. Khuri, and A. Sakovich, \textit{Reduction arguments for geometric inequalities associated with asymptotically hyperboloidal slices}, Class. Quantum Grav., \textbf{33} (2016), 035009. arXiv:1509.06255

\bibitem{ChenHungWangYau} P.-N. Chen, P.-K. Hung, M.-T. Wang, and S.-T. Yau, \textit{The rest mass of an asymptotically anti-de Sitter spacetime}, Ann. Henri Poincar\'{e}, \textbf{18} (2017), no. 5, 1493–1518. arXiv:1510.00053

%\bibitem{ChenWangYau} P.-N. Chen, M.-T. Wang, and S.-T. Yau, \textit{Conserved quantities %and asymptotically hyperbolic initial data sets}, preprint, 2014. arXiv 1409.1812

%\bibitem{Choquet-Bruhat} Y. Choquet-Bruhat, \textit{General Relativity and the Einstein %Equations}, Oxford University Press, 2009.

\bibitem{Choquet-Bruhat2} Y. Choquet-Bruhat, \textit{Maximal submanifolds and submanifolds with constant mean extrinsic curvature of a Lorentzian manifold}, Ann. Scoula Norm. Sup. Pisa, \textbf{3} (1976), 361-376.

%\bibitem{Christodoulou} D. Christodoulou, \textit{Reversible and irreversible %transformations in black-
%hole physics}, Phys. Rev. Lett., \textbf{25} (1970), 1596-1597.

\bibitem{Chrusciel} P. Chru\'{s}ciel, \textit{Mass and angular-momentum inequalities for axi-symmetric initial data sets I. Positivity of mass}, Ann. Phys., \textbf{323} (2008), 2566-2590. arXiv:0710.3680

%\bibitem{ChruscielCosta0} P. Chru\'{s}ciel, and J. Costa, \textit{On uniqueness %of %station-
%ary vacuum black holes,} In Proceedings of G\'{e}om\'{e}trie diffrentielle,
%Physique math\'{e}matique, Mathematiques et soci\'{e}t\'{e}, Ast\'{e}risque,
%\textbf{321} (2008), 195-265. arXiv:0806.0016

%\bibitem{ChruscielCosta} P. Chru\'{s}ciel, and J. Costa, \textit{Mass, angular-momentum and %charge inequalities for axisymmetric initial data},
%Class. Quantum Grav., \textbf{26} (2009), no. 23, 235013. arXiv:0909.5625

%\bibitem{ChruscielGallowayPollack} P. Chru\'{s}ciel, G. Galloway, and D. Pollack, %\textit{Mathematical General Relativity: a sampler},
%Bull. Amer. Math. Soc. (N.S.), \textbf{47} (2010), no. 4, 567-638. %arXiv:1004.1016v2

\bibitem{ChruscielHerzlich} P. Chru\'{s}ciel, and M. Herzlich, \textit{The mass of asymptotically hyperbolic Riemannian manifolds}, Pacific
J. Math., \textbf{212} (2003), no. 2, 231-264. arXiv:math/0110035

\bibitem{ChruscielJezierskiLeski} P. Chru\'{s}ciel, J. Jezierski, and S. {\L}\c{e}ski, \textit{The Trautman-Bondi mass of hyperboloidal initial data sets}, Adv. Theor. Math. Phys., \textbf{8} (2004), no. 1, 83-139. arXiv:gr-qc/0307109


%\bibitem{ChruscielLiWeinstein} P. Chru\'{s}ciel, Y. Li, and G. Weinstein, \textit{Mass and %angular-momentum inequalities for axi-symmetric initial data sets.
%II. Angular Momentum}, Ann. Phys., \textbf{323} (2008), 2591-2613. arXiv:0712.4064

\bibitem{ChruscielMaertenTod} P. Chru\'{s}ciel, D. Maerten, and P. Tod, \textit{Rigid upper bounds for the angular momentum
and centre of mass of non-singular asymptotically anti-de Sitter space-times}, J.
High Energy Phys., \textbf{11} (2006), 084. arXiv:gr-qc/0606064

\bibitem{ChruscielTod} P. Chru\'{s}ciel, and P. Tod, \textit{An angular momentum bound at null infinity}, Adv. Theor. Math. Phys., \textbf{13} (2009), 1317-1334. arXiv:0706.4057

%\bibitem{ChruscielReallTod} P. Chru\'{s}ciel, H. Reall, and P. Tod, \textit{On %Israel-Wilson-Perjes black holes}, Class. Quantum Grav., \textbf{23} (2006), 2519-2540.
%    arXiv:gr-qc/0512116

%\bibitem{ClementJaramilloReiris} M. Clement, J. Jaramillo, and M. Reiris, \textit{Proof of %the area-angular momentum-charge inequality for axisymmetric black holes}, Class. Quantum %Grav., \textbf{30} (2012), 065017. arXiv:1207.6761

%\bibitem{CortierDahlGicquaud} J. Cortier, M. Dahl, and R. Gicquaud,
%\textit{Mass-like invariants for asymptotically hyperbolic metrics},
%in preparation.

%\bibitem{Costa} J. Costa, \textit{Proof of a Dain inequality with charge}, J. Phys. A, %\textbf{43} (2010), no. 28, 285202. arXiv:0912.0838

%\bibitem{DahlSakovich} M. Dahl, and A. Sakovich, \textit{A density theorem for %asymptotically hyperbolic initial data satisfying the dominant energy condition}, preprint, %2015. arXiv:1502.07487

\bibitem{Dain0} S. Dain, \textit{Proof of the angular momentum-mass inequality for axisymmetric black hole}, J. Differential Geom.,
\textbf{79} (2008), 33-67. arXiv:gr-qc/0606105

%\bibitem{Dain} S. Dain, \textit{Geometric inequalities for axially symmetric %black holes}, Classical Quantum Gravity, \textbf{29}
%(2012), no. 7, 073001.

%\bibitem{DainJaramilloReiris} S. Dain, J. Jaramillo, and M. Reiris, %\textit{Black hole area-angular momentum inequality in non-vacuum %spacetimes}, Phys. Rev. D, \textbf{84} (2011), 121503. arXiv:1106.3743

%\bibitem{DJR} S. Dain, J. Jaramillo, and M. Reiris, \textit{Area-charge %inequality for black holes}, Classical Quantum Gravity,
%\textbf{29} (2012), no. 3, 035013.

%\bibitem{DainReiris} S Dain, and M. Reiris, \textit{Area-angular momentum %inequality for axisymmetric black holes}, Phys. Rev. Lett., \textbf{107} %(2011), 051101. arXiv:1102.5215

\bibitem{DafermosHolzegel} M. Dafermos, and G. Holzegel, \textit{Dynamic instability of solitons in 4+1 dimensional gravity with negative cosmological constant}, https://www.dpmms.cam.ac.uk/$\sim$md384/ADSinstability.pdf, 2006.


\bibitem{DainKhuriWeinsteinYamada} S. Dain, M. Khuri, G. Weinstein, and S. Yamada, \textit{Lower bounds for the area of black holes in terms of mass, charge, and angular momentum},
Phys. Rev. D, \textbf{88} (2013), 024048. arXiv:1306.4739

\bibitem{DisconziKhuri} M. Disconzi, and M. Khuri, \textit{On the Penrose inequality for charged black holes}, Class. Quantum Grav.,
\textbf{29} (2012), 245019. arXiv:1207.5484.

\bibitem{GHHP} G. Gibbons, S. Hawking, G. Horowitz, and M.
Perry, \textit{Positive mass theorem for black holes}, Comm. Math. Phys.,
\textbf{88} (1983), 295-308.

\bibitem{GibbonsHullWarner}  G. Gibbons, C. Hull, and N. Warner, \textit{The stability of gauged supergravity}, Nuclear Phys. B, \textbf{218} (1983), no. 1, 173-190.

\bibitem{HanKhuri1} Q. Han, and M. Khuri, \textit{Existence and blow-up behavior for solutions
of the generalized Jang equation}, Comm. Partial Differential Equations, \textbf{38} (2013), 2199-2237. arXiv:1206.0079

\bibitem{HanKhuri2} Q. Han, and M. Khuri, \textit{The conformal flow of metrics and the general Penrose inequality}, Conference Proceedings of the Tsinghua Sanya International Mathematics Forum, Tsinghua Lectures in Mathematics, to appear. arXiv:1409.0067

%\bibitem{Hawking} S. Hawking, and G. Ellis, \textit{The Large Structure of Space-Time},  %Cambridge Monographs on Mathematical
%Physics, Cambridge University Press, 1973.

\bibitem{HenneauxTeitelboim} M. Henneaux, and C. Teitelboim, \textit{Asymptotically anti-de Sitter spaces}, Comm. Math. Phys., \textbf{98} (1985), 391-424.

%\bibitem{HuangSchoenWang} L.-H. Huang, R. Schoen, and M.-T. Wang, \textit{Specifying %angular momentum and center of mass for vacuum initial data sets},
%Commun. Math. Phys., \textbf{306} (2011), no. 3, 785�1�7803. arXiv:1008.4996

%\bibitem{HuangYauZhang} W.-L. Huang, S.-T. Yau, and X. Zhang,
%\textit{Positivity of the Bondi mass in Bondi's radiating spacetimes},
%Atti Accad. Naz. Lincei Cl. Sci. Fis. Mat. Natur. Rend. Lincei (9) Mat. Appl., \textbf{17} %(2006), no. 4, 335-349. arXiv:math/0604155

\bibitem{HuiskenIlmanen} G. Huisken, and T. Ilmanen, \textit{The inverse mean curvature flow and the Riemannian
Penrose inequality}, J. Differential Geom., \textbf{59} (2001), 353-437.

%\bibitem{HuiskenIlmanen1} G. Huisken, and T. Ilmanen, \textit{Higher regularity of the %inverse mean curvature flow}, J. Differential Geom., \textbf{80} (2008), 433-451.

\bibitem{HristovToldaVandoren} K. Hristov, C. Toldo, and S. Vandoren, \textit{On BPS bounds in D=4 N=2 gauged supergravity}, J. High Energy Phys., \textbf{1112} (2011), 014. arXiv:1110.2688


\bibitem{Jang} P.-S. Jang, \textit{On the positivity of energy in general relaitivity}, J. Math. Phys., \textbf{19} (1978), 1152-1155.

%\bibitem{Jang1} P.-S. Jang, \textit{Note on cosmic censorship}, Phys. Rev.
%D, \textbf{20} (1979), no. 4, 834-838.

%\bibitem{Khuri} M. Khuri, \textit{A Penrose-like inequality with charge}, Gen. Relativity %Gravitation, \textbf{45} (2013), 2341-2361. arXiv:1308.3591

\bibitem{KhuriSokolowsky} M. Khuri, and B. Sokolowsky, \textit{Lower bounds for the ADM mass of initial data with cylindrical ends}, preprint, 2017.

\bibitem{KhuriWeinstein} M. Khuri, and G. Weinstein, \textit{Rigidity in the positive mass theorem with charge},
J. Math. Phys., \textbf{54} (2013), 092501. arXiv:1307.5499

\bibitem{KhuriWeinstein1} M. Khuri, and G. Weinstein, \textit{The Positive mass theorem for multiple rotating charged black holes}, Calc. Var. Partial Differential Equations, \textbf{55} (2016), no. 2, 1-29. arXiv:1502.06290

%\bibitem{KhuriWeinsteinYamada} M. Khuri, G. Weinstein, and S. Yamada,
%\emph{Extensions of the charged Riemannian Penrose inequality}, Class. Quantum Grav., %\textbf{32} (2015), 035019. arXiv:1410.5027

%\bibitem{KhuriWeinsteinYamada1} M. Khuri, G. Weinstein, and S. Yamada,
%\textit{Proof of the Riemannian Penrose inequality with charge for multiple black holes}, %preprint, 2015. arXiv:1409.3271

%\bibitem{KhuriWeinsteinYamada} M. Khuri, G. Weinstein, and S. Yamada, %\textit{On the Riemannian Penrose inequality with
%charge and the cosmic censorhip conjecture}, RIMS K\^{o}ky\^{u}roku, Res. %Inst. Math. Sci. (RIMS), Kyoto, to appear. arXiv:1306.0206

\bibitem{KosteleckyPerry} V. Kosteleck\'{y}, and M. Perry,
\textit{Solitonic black holes in gauges N=2 supergravity}, Phys. Lett. B, \textbf{371} (1996), 191. arXiv:hep-th/9512222


%\bibitem{Lee} J. Lee, \textit{The spectrum of an asymptotically hyperbolic Einstein %manifold}, Comm. Anal. Geom., \textbf{3} (1995), no. 1-2, 253-271. arXiv:dg-ga/9409003

%\bibitem{LeeNeves} D. Lee, and A. Neves, \textit{The Penrose Inequality for Asymptotically %Locally Hyperbolic Spaces with Nonpositive Mass}, Commun. Math. Phys., \textbf{339} (2015), %327-352.

%\bibitem{Mars} M. Mars, \textit{Present status of the Penrose inequality}, Class. Quantum %Grav., \textbf{26} (2009), no. 19, 193001. arXiv:0906.5566

\bibitem{Maerten}  D. Maerten, \textit{Positive energy-momentum theorem for AdS-asymptotically hyperbolic manifolds}, Ann. Henri Poincar\'{e}, \textbf{7} (2006), no. 5, 975-1011. arXiv:math/0506061

\bibitem{Michel} B. Michel, \textit{Geometric invariance of mass-like asymptotic invariants}, J. Math. Phys., \textbf{52} (2011), 052504. arXiv:1012.3775

%\bibitem{ParkerTaubes} T. Parker, and C. Taubes, \textit{On Witten's
%  proof of the positive energy theorem},
%  Commun. Math. Phys., \textbf{84} (1982), 223-238.

\bibitem{Moschidis} G. Moschidis, \textit{A proof of the instability of AdS for the Einstein-null dust system with an inner mirror}, preprint, 2017. arXiv:1704.08681

\bibitem{Penrose} R. Penrose, \textit{Naked singularities}, Ann. New York Acad. Sci., \textbf{224} (1973), 125-134.

\bibitem{Penrose1} R. Penrose, \textit{Some unsolved problems in
classical general relativity}, Seminar on Differential Geometry,
Ann. Math. Study, \textbf{102} (1982), 631-668.

\bibitem{SakovichThesis} A. Sakovich, \textit{A study of asymptotically hyperbolic manifolds in mathematical relativity}, Ph.D. Thesis, KTH Stockholm, 2012. http://www.diva-portal.org/smash/get/diva2:557156/FULLTEXT01.pdf

\bibitem{Sakovich} A. Sakovich, \textit{A Jang equation approach to positive mass theorem for asymptotically hyperbolic manifolds}, in preparation.

\bibitem{SchoenYau0} R. Schoen, S.-T. Yau, \textit{On the proof of the
positive mass conjecture in general relativity}, Comm. Math. Phys., \textbf{65} (1979), no. 1, 45-76.

\bibitem{SchoenYau} R. Schoen, and S.-T. Yau, \textit{Proof of the positive mass theorem II}, Comm. Math. Phys., \textbf{79} (1981), 231-260.

\bibitem{SchoenYau1} R. Schoen, and S.-T. Yau, \textit{Proof that the Bondi mass is positive}, Phys. Rev. Lett., \textbf{48} (1982), no. 6, 369-371.

\bibitem{SchoenZhou} R. Schoen, and X. Zhou, \textit{Convexity of reduced energy and mass angular momentum inequalities}, Ann. Henri Poincar\'{e}, \textbf{14} (2013), 1747-1773. arXiv:1209.0019.

%\bibitem{SmithWeinstein} B. Smith, and G. Weinstein, \textit{Quasiconvex foliations and %asymptotically flat metrics of non-negative scalar curvature}, Comm. Anal. Geom., %\textbf{12} (2004), no. 3, 511-551.

%\bibitem{Sokolowsky} B. Sokolowsky, \textit{Lower bounds for the ADM mass of initial data %with
%cylindrical ends}, in preparation, 2015.

\bibitem{Wang} X. Wang, \textit{The mass of asymptotically hyperbolic manifolds}, J. Differential Geom., \textbf{57} (2001), no. 2, 273-299.

\bibitem{WangXieZhang} Y. Wang, N. Xie, and X. Zhang, \textit{The positive energy theorem for asymptotically anti-de Sitter spacetimes},
Commun. Contemp. Math., \textbf{17} (2015), no. 4, 1550015. arXiv:1207.2914

%\bibitem{Weinstein} G. Weinstein, \textit{N-black hole stationary and axially %symmetric solutions of the Einstein/Maxwell equations},
%Comm. Partial Differential Equations, \textbf{21} (1996), no. 9-10, %1389-�1�71430.

%\bibitem{WeinsteinYamada} G. Weinstein, and S. Yamada, \textit{On a Penrose %inequality with charge}, Comm. Math. Phys., \textbf{257}
%(2005), no. 3, 703-�1�7723.

\bibitem{Witten} E. Witten, \textit{A new proof of the positive energy theorem}, Comm. Math. Phys., \textbf{80} (1981), 381-402.

\bibitem{XieZhang} N. Xie, and X. Zhang, \textit{Positive mass theorems for asymptotically AdS spacetimes with arbitrary cosmological constant}, Internat. J. Math., \textbf{19} (2008), no. 3, 285-302.

\bibitem{Zhang} X. Zhang, \textit{A definition of total energy-momenta and the positive mass theorem on asymptotically hyperbolic 3-manifolds I}, Comm. Math. Phys., \textbf{249} (2004), 529-548.

%\bibitem{Zhou} X. Zhou, \textit{Mass angular momentum inequality for %axisymmetric vacuum data with small trace}, Comm. Anal. Geom., \textbf{22} %(2014), no. 3, 519-�1�7571.

\end{thebibliography}
\end{document}